\documentclass[12pt]{article}
\usepackage[margin=1in]{geometry} % Clean 1-inch margins
\usepackage{parskip}

\title{\large \bf AOC-CBS: Anytime-Optimal Continuous-time Conflict-Based Search for Generalised Multi-Agent Path Finding}
\author{Alvin~Combrink$^{1}$, Sabino~Francesco~Roselli$^{1}$, Martin~Fabian$^{1}$}
\date{}

\usepackage[backend=biber, style=numeric, sorting=none, maxbibnames=99]{biblatex}

\usepackage{comment}
\usepackage{amssymb,amsfonts}
\usepackage{algorithm}
\usepackage[noend]{algpseudocode}
\usepackage{graphicx}
\usepackage{textcomp}
\usepackage{booktabs}
\usepackage{hyperref}
\usepackage{xspace}
\usepackage{balance}
\usepackage{array}
\usepackage{caption}
\usepackage{xcolor}
\usepackage{mathtools}
\usepackage{subcaption}

\usepackage{amsthm}
\usepackage{enumitem} 
\newtheorem{theorem}{Theorem}[section]

\newtheorem{corollary}[theorem]{Corollary}

\begin{document}
\maketitle
\renewcommand{\thefootnote}{\arabic{footnote}}
\footnotetext[1]{Department of Electrical Engineering, Chalmers University of Technology, Gothenburg, Sweden. {\tt\small \{combrink, rsabino, fabian\}@chalmers.se}}
\thispagestyle{empty}
\pagestyle{empty}

% Personal comments
\newcommand{\AC}[1]{{#1}\xspace}

% mathematical commands
\newcommand{\Real}{\mathbb{R}}
\newcommand{\Natural}{\mathbb{N}}
\newcommand{\tuple}[1]{\langle #1 \rangle}
\newcommand{\set}[1]{\left\{ #1 \right\}}
\newcommand{\True}{\mathit{true}}
\newcommand{\False}{\mathit{false}}
\newcommand{\None}{\mathit{none}}
\newcommand{\argmin}[1]{\underset{#1}{\arg\min}}
\newcommand{\complexity}[1]{\mathcal{O}\left(#1\right)}

% Background
\newcommand{\MAPFR}{MAPF\ensuremath{_{R}}\xspace}
\newcommand{\OCCBS}{OC-CBS\xspace}
\newcommand{\deltaBR}{$\delta$-BR\xspace}
\newcommand{\MetricSpace}{\mathcal{M}}

\newcommand{\Astar}{A$^*$\xspace}
\newcommand{\MinCost}{\mathit{MinimumTime}}

\newcommand{\SIPParrivalTime}{t^\mathit{arrive}}
\newcommand{\SIPPstarttime}{t^\mathit{start}}
\newcommand{\SIPPendtime}{t^\mathit{end}}
\newcommand{\SIPPsafeI}{I^\mathit{safe}}

% This method's lables
\newcommand{\ours}{AOC-CBS\xspace}
\newcommand{\oursPP}{\text{MG-CSIPP}\xspace}
\newcommand{\oursTPSIPP}{\text{TP-SIPP}\xspace}

% OC-CBS constraint tree 
\newcommand{\CTroot}{R}
\newcommand{\CTnode}{N}
\newcommand{\CTconst}{C}

% Problem definition commands
\newcommand{\GraphSet}{\mathbb{G}}
\newcommand{\Graph}{\mathcal{G}}
\newcommand{\Vertices}{\mathcal{V}}
\newcommand{\Edges}{\mathcal{E}}
\newcommand{\StateSpace}{\mathcal{S}}

\newcommand{\vertex}{v}
\newcommand{\state}{s}
\newcommand{\waitable}[1]{\mathit{waitable}(#1)}
\newcommand{\edge}{e}
\newcommand{\traj}{\tau}
\newcommand{\dur}{D}
\newcommand{\source}[1]{{\substack{{\vertex} \\ {\cdot\rightarrow}}}(#1)}
\newcommand{\target}[1]{{\substack{{\vertex} \\ {\rightarrow\cdot}}}(#1)}
\newcommand{\timestart}[1]{{\substack{{t} \\ {\cdot\rightarrow}}}(#1)}
\newcommand{\timeend}[1]{{\substack{{t} \\ {\rightarrow\cdot}}}(#1)}

\newcommand{\Agents}{\mathcal{A}}
\newcommand{\agent}{a}
\newcommand{\Tasks}{\Lambda}
\newcommand{\task}{\lambda}
\newcommand{\AgentStart}{\vertex_\mathit{start}}
\newcommand{\model}{\rho}

\newcommand{\action}{\alpha}
\newcommand{\move}{m}
\newcommand{\wait}{w}
\newcommand{\inconflictfunc}{\mathit{InConflict}}
\newcommand{\inconflict}[1]{\inconflictfunc\left(#1\right)}
\newcommand{\isconflict}[1]{\mathit{IsConflict}\left(#1\right)}

\newcommand{\plan}{\pi}
\newcommand{\JointPlan}{\Pi}
\newcommand{\Solutions}{\mathbb{S}}

\newcommand{\obj}{\sigma}
\newcommand{\cost}{\sigma^\plan}

% Models
\newcommand{\position}{\mathbf{p}}
\newcommand{\velocity}{\mathbf{v}}
\newcommand{\orientation}{\varphi}

% Roadmap sampling
\newcommand{\freeCells}{X_\text{free}}
\newcommand{\density}{\rho}
\newcommand{\clearance}{r}
\newcommand{\stretchfactor}{\kappa}
\newcommand{\maxlength}{l_\text{max}}
\newcommand{\separation}{\delta}

% AOC-CBS commands
\newcommand{\NodeQueue}{\mathcal{Q}}
\newcommand{\NodeSelection}{\mathcal{Q}^\mathit{sel}}
\newcommand{\NewNodes}{\mathcal{Q}^\mathit{new}}
\newcommand{\NewSolutions}{\Solutions^\mathit{new}}
\newcommand{\Portfolio}{\Psi}
\newcommand{\policy}{\psi}
\newcommand{\policySelection}{{\mathit{sel}}}
\newcommand{\policyRepairSet}{\mathbf{R}}
\newcommand{\repair}{r}
\newcommand{\incumbentSolution}{\JointPlan^\mathit{UB}}
\newcommand{\objectiveUB}{\mathit{UB}}
\newcommand{\objectiveLB}{\mathit{LB}}

\newcommand{\CTnodeLB}[1]{LB\left( #1 \right)}
\newcommand{\CTnodeUB}{UB}
\newcommand{\CTreachableSolutions}[1]{\Solutions\left(#1\right)}

\newcommand{\ConflictSelectionArbitrary}{\emph{Arbitrary}\xspace}
\newcommand{\ConflictSelectionSemiCard}{\emph{Semi-/Cardinal}\xspace}
\newcommand{\ConflictSelectionBestof}[1]{\emph{Best-of-$#1$}\xspace}

% Misc
\newcommand{\epsmach}{\epsilon_\mathit{mach}}
\newcommand{\ourRepo}{\url{https://github.com/Adcombrink/AOC-CBS}\xspace}
\newcommand{\gapUB}{\ensuremath{\Delta^\mathit{UB}}\xspace}
\newcommand{\GithubAOCCBS}{\url{https://github.com/Adcombrink/AOC-CBS}\xspace}
\newcommand{\GithubOCCBS}{\url{https://github.com/Adcombrink/Optimal-Continuous-CBS}\xspace}

% MAPS
\newcommand{\mapRoom}{\emph{Room}\xspace}
\newcommand{\mapRandom}{\emph{Random}\xspace}
\newcommand{\mapDenSparse}{\emph{Den520d Sparse}\xspace}

\newcommand{\MovingAIMap}[1]{\emph{#1}\xspace}

\begin{abstract}
\noindent 
Many research fields share a common structure: a set of agents, each pursuing its own goal, whose actions must be coordinated so that no two of them conflict.
Multi-Agent Path Finding (MAPF) is a concrete instance of this structure, with applications from warehouses to road traffic and airports.
Much of MAPF research assumes discrete time, circular agents sharing one spatial graph, a single goal per agent, and that an agent must remain at its goal once reached, precluding heterogeneous fleets, non-geometric conflicts, task sequences, and agents that move on after completing them.
We generalise the MAPF formulation to lift these assumptions, and present Anytime-Optimal Continuous-time Conflict-Based Search (\ours), an exact and solution-complete solver for it.
\ours guarantees the eventual return of an optimal solution, while reporting an incumbent with a known optimality gap upper bound throughout its runtime; it is configurable with a portfolio of repair functions, one of which we introduce (Tier-Prioritized Safe Interval Path Planning), and can exploit multiple processor cores.
We demonstrate \ours on a mixed fleet of non-convex agents moving along smooth, kinodynamically feasible trajectories.
Preliminary experiments against the exact solver \OCCBS, on well-known benchmarks and roadmaps we sample from them, show \ours is comparable at finding optimal solutions while extending scalability from the tens to the hundreds of agents when a bounded optimality gap is accepted.

\end{abstract}

\section{Introduction}
\label{sec:Introduction}

Many research fields face a version of the same underlying problem: a set of agents, each trying to accomplish its own goal, whose actions must be coordinated so that no two of them conflict. 
The agents may be mobile robots navigating a shared space~\cite{SurveyStern2019},
jobs contending for machines in a production schedule~\cite{Pinedo}, 
patients moving through a hospital ward~\cite{lin2023genetic}, 
or robotic arms moving through a shared configuration space~\cite{Shaoul_Mishani_Likhachev_Li_2024}. 
What unites these problems is not the presence of robots or physical space, but a common structure: 
each agent moves through its own space of possible states, 
and two agents are said to conflict when they simultaneously occupy states that are mutually disallowed.

This work is grounded in the setting of mobile robots, \emph{Multi-Agent Path Finding} (MAPF).
We use it as a concrete starting point, since its formal structure is easy to state precisely: given a graph and a start and goal vertex for each agent, MAPF asks for a set of plans, one per agent, such that every agent gets to its goal and no two agents conflict~\cite{SurveyStern2019}. 
MAPF finds its application in
warehouses and factories~\cite{wurman2008coordinating, MAPF_buzzword, brown2020optimal, varambally2022mapf, brorsson_TUVE},
video games~\cite{snape2012reciprocal, Ma_Yang_Cohen_Kumar_Koenig_2017, harabor2022benchmarks},
road traffic and airports~\cite{LiAAAI23, goenawan2025astmautonomoussmarttraffic, morris2016planning, li2019departure},
lunar operations~\cite{krawciwLunarRovers},
and many more.
In the overwhelming majority of MAPF research, a conflict is understood to be two agents occupying the same vertex or traversing the same edge at the same time, typically under a discretization of time that makes this condition well defined~\cite{SurveyStern2019}. 
This is a simplification of physical collision, and narrows the range of real-world problems the model can represent to those where two agents in different states never collide. 
Here, we take a broader view still: not only do we drop this simplification and reason about collision directly, we treat collision itself as just one instance of a more general \emph{pairwise conflict}.

Furthermore, there are several aspects that are motivated by real-world problems which, to our knowledge, are seldom considered together in the MAPF literature.

First, real-world fleets can be heterogeneous not only in how the agents are shaped, but also in how they move.
A fixed-wing aircraft, a ground vehicle, and a quad-rotor do not share a motion model, so requiring every agent to follow trajectories on one common graph is itself a limiting assumption, independent of any assumption about waiting or goal structure.
We instead follow the same modelling approach as in~\cite{ECBS-CT} and let each agent move on its own graph over its own state space, so that an agent's kinodynamic constraints are captured entirely by the states and trajectories that are available to it, and heterogeneous fleets can be coordinated without forcing a shared representation. This formulation allows for completely different types of agents, such as mobile robots and robotic arms, in the same problem. 

Second, not every real-world conflict is a geometric collision. Two airborne drones may be in an unacceptable configuration---one caught in the downwash of the other---without their physical volumes ever intersecting. Treating collision as the only source of conflict, and volume as the quantity that determines it, therefore excludes by construction conflicts of this kind. We instead treat volume-based collision as one instance of a conflict definition that is left abstract: for any two agents, each in some state, whether or not the pair conflicts is specified by an external function. Conflicts stemming from causes other than physical overlap can therefore be represented directly.

Third, several assumptions about how agents use their goals and vertices limit real-world applicability. Some tasks involve traversing an edge rather than being located at a specific position: aerial survey missions require aircraft to fly along specific trajectories, and fire-fighting missions may require water-drops along specific boundaries. 
We therefore allow tasks to be either edge traversals or vertex occupations. Real-world tasks also typically involve a service time at a location, modelling an agent doing something at its goal (e.g., loading or unloading) before departing, rather than simply arriving. 
Additionally, the standard requirement that every agent must remain at its goal indefinitely once reached forces a solution to transition the entire system from its initial collective starting state to a final state where all agents occupy their goals simultaneously. 
Consequently, an agent cannot reach its goal early and then step out of the way to let others pass without returning to it again.
% Additionally, requiring every agent to remain at its goal indefinitely once reached, as is standard, forces a solution to bring all agents simultaneously to rest at their respective goals; agents cannot complete their goal and then move out of the way of others without returning to it. 
Relaxing this requirement has recently been shown to substantially improve solver performance~\cite{GabayReachability}, and we correspondingly allow agents to move after completing their tasks. Finally, it is commonly assumed that every vertex is \emph{waitable}, i.e., that an agent may remain there indefinitely. This precludes vertices representing states with non-zero higher-order derivatives: a fixed-wing aircraft, for instance, cannot remain stationary while airborne, even though specific airborne states must still be reachable.

With these practical motivations, we make a number of extensions to the common MAPF setting:
\begin{enumerate}
    \item Each agent moves within its own state space according to its own graph within that state space, allowing heterogeneous agents to coexist within a single problem.
    \item Time is continuous: edges can be traversed at any real-valued time, and edge traversals and waits at vertices can take any real-valued duration.
    \item Conflicts are defined abstractly rather than geometrically, with volume-based collision as one possibility, allowing conflicts such as drone downwash that do not correspond to any physical overlap.
    \item Tasks form a sequence of edge traversals or vertex occupations, each vertex with an associated service time; agents without any tasks (idle agents) are also considered.
    \item Agents may continue to move after completing all of their tasks, so as to avoid other agents or return to a rest-position, rather than remaining fixed at a single goal.
    \item Vertices do not need waitable, allowing for states that cannot be held indefinitely.
\end{enumerate}

Formally, we consider the problem that follows from these extensions. 
There is a set of agents, each moving on its own graph over its own state space.
Agents may traverse edges starting at any real-valued time, and edges may require any real-valued duration to traverse;
agents can also wait at vertices for any real-valued duration.
Each agent is assigned an arbitrarily long, possibly empty, sequence of tasks, where each task is either to occupy a specific vertex for at least some service duration (possibly forever) or to traverse a specific edge. 
Conflicts are defined abstractly: 
for any two agents, each in some state, an external conflict definition specifies whether the pair conflicts.
We make two assumptions on conflicts.
First, they are pairwise. That is, we consider only conflicts that are defined for exactly two agents. 
We do not model conflicts with more than two agents, such as, for instance, a conflict where the combined downwash of two drones disturbs a third below. 
Second, we only consider conflicts that last for at least a few orders of magnitude longer than the smallest time precision of the machine on which a solution algorithm runs;
conflicts lasting a shorter time than that could be impossible to detect.
A solution is a plan for each agent that takes it from its starting vertex through all of its tasks, in order, and finally to a vertex where it can remain indefinitely without conflict---only this terminal state is required to be waitable, so that a solution never leaves an agent obligated to depart again. 
Each agent's plan incurs a cost, and the global cost is a function of these individual plan costs. 
The common \emph{sum-of-costs} (SOC) and \emph{makespan} are two examples. 
An optimal solution minimises the global cost.

We do not aim to provide a full accounting of all of the problems that the above formulation may encompass. 
However, we postulate that this formulation generalises \emph{classical MAPF}~\cite{SurveyStern2019}, continuous-time MAPF~\cite{CCBS}, \emph{MAPF with large agents}~\cite{LargeAgents}, \emph{Multi-Agent Motion Planning}~\cite{ECBS-CT}, \emph{Multi-Goal MAPF}~\cite{MultiGoalMAPF}, 
and can be adapted to various problems from other fields, 
such as job-shop scheduling~\cite{RoselliJobshop}, 
quantum qubit routing~\cite{bansal2025paralleltokenswappingqubit}, 
railway traffic management~\cite{damatorailway}, 
and computer chip design~\cite{antoine2008global}. 
The generality of the problem formulation contributes to the value of a method for obtaining practical, or even optimal, solutions.

This work introduces \emph{Anytime-Optimal Continuous-time Conflict-Based Search} (\ours), a solver for the above problem that converges on an optimal solution in finite time, while providing an improving incumbent solution with a known upper bound on its optimality gap throughout runtime.
\ours builds on the exact solver \OCCBS~\cite{OCCBS} and addresses its most significant limitation:
no solution is returned until the optimal has been found, which may take an unpractical amount of time despite guarantees of a solution in finite time.
\ours retains \OCCBS's exactness and solution completeness on this more general problem formulation, 
while at the same time managing a configurable portfolio of repair functions that constantly searches for suboptimal solutions.
As one such repair method, we introduce \emph{Tier-Prioritized Safe Interval Path Planning} (\oursTPSIPP), a prioritized planner that resolves a limitation of prior continuous-time prioritized methods~\cite{PSIPP} by allowing agents to vacate a location after completing a task there, rather than assuming each agent's final position is fixed in advance.
The \ours architecture opens the possibility for multi-core processing, which our experiments show to greatly improve performance on harder problems. 

The value of \ours is not necessarily in finding optimal solutions, although our experiments show equivalent performance with \OCCBS. 
Instead, \ours's value lies in the anytime property: an incumbent solution with a known optimality gap upper bound is available at any time once a first solution has been found. 
A purely heuristic method could offer something similar, but without any guarantee of ever finding a solution, let alone an optimal one, and without an updating optimality gap upper bound;
\ours retains both the guarantee and the anytime availability.

The generality of the problem formulation is handled by \ours not itself encoding any domain-specific detail, these are supplied separately as a pre-processing step. 
Specifically, domain knowledge about agents, their state spaces and graphs, and the conflict definitions lies outside of \ours.
We provide implementations of these preprocessing steps for MAPF with 
2D circular and arbitrary (including non-convex) polygonal agents, with straight-line or arbitrary connected and continuous trajectories.
\AC{Work is on-going to extend these 2D models to 3D space.}
Figure~\ref{fig:Demo} showcases an example of a problem solved by \ours,
including a mixed fleet of non-convex agent shapes, moving on a graph with smooth trajectories under acceleration limits.
The pre-processing step is a one-time cost that is reusable for any instance on the same graph using any number of the same types of agents. 
This complex system of non-convex agents on smooth trajectories took around $11$~minutes to pre-process;
circular agents, and straight-line trajectories, are significantly shorter (see Section~\ref{sec:Experiments}).
\ours took $5.50$~milliseconds to find the optimal solution. 
More details can be found in Section~\ref{sec:Experiments:ComplexDemonstration}.
We compare \ours to \OCCBS on the well-known MovingAI~\cite{SturtevantBenchmarks} benchmark problems using 2D circular agents and straight-line traversals, 
showing that \ours finds optimal solutions equivalently while extending scalability from the tens to the hundreds of agents.
\begin{figure}
    \centering
    \includegraphics[width=1\linewidth]{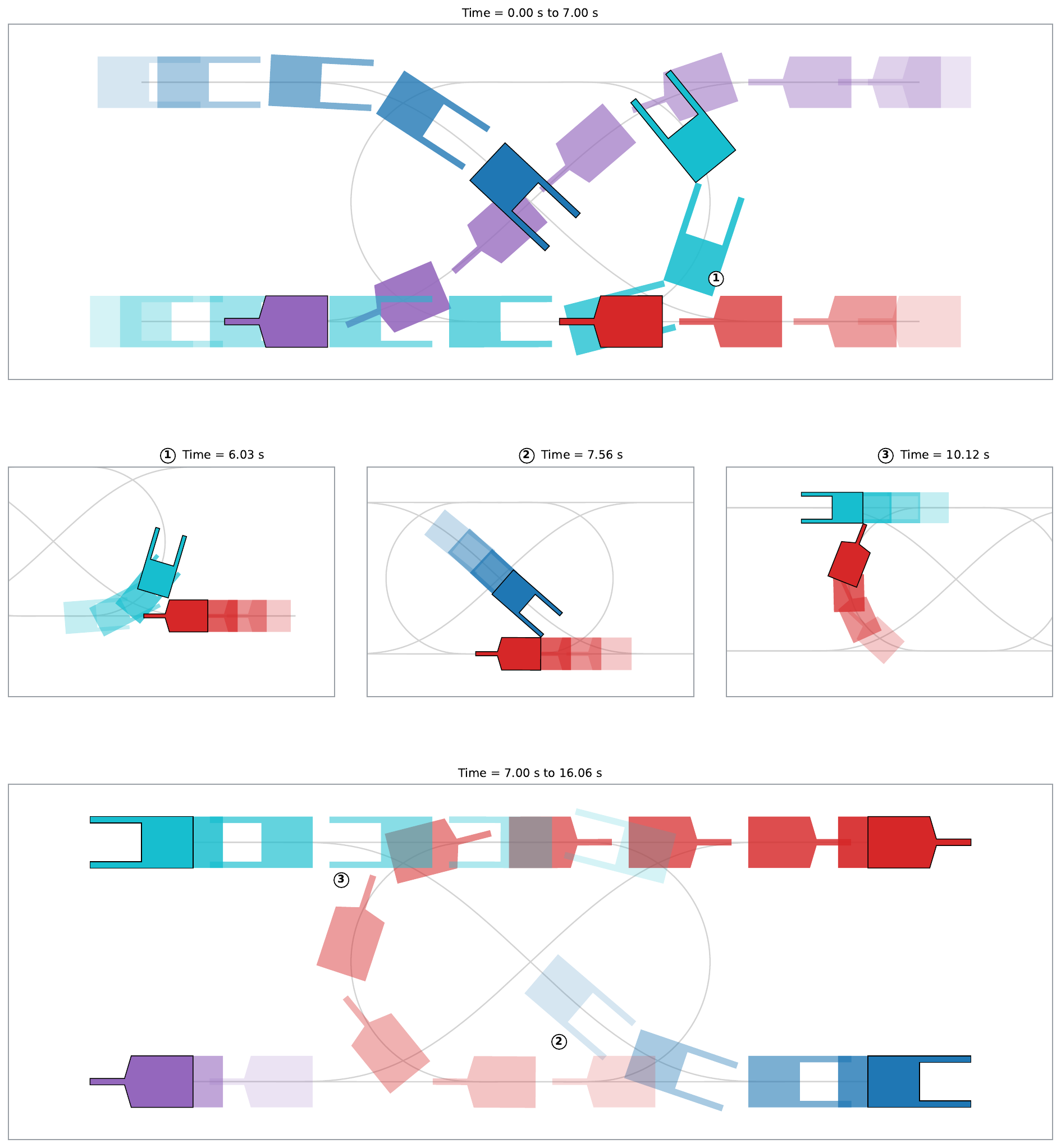}
    \caption{An example problem with two forklifts (starting on the left) and two mobile cranes (starting on the right), solved to optimality by \ours. The top figure shows the first half of the solution, with each agent's trace showing where it started; 
    the centre figures show each of the three closest encounters between agents (enumerated) in the top and bottom figures; 
    and the bottom shows the second half of the solution and each agent at its goal.}
    \label{fig:Demo}
\end{figure}

The remainder of this article is structured as follows:
Section~\ref{sec:Background} presents existing problem formulations and solution methods, putting this work in the context of the state of the art.
The problem formulation addressed here is formally defined in Section~\ref{sec:ProblemFormulation}, 
followed by Section~\ref{sec:AOCCBS}'s introduction of \ours and \oursTPSIPP.
\ours's theoretical guarantees are derived in Section~\ref{sec:TheoreticalGuarantees},
and Section~\ref{sec:Experiments} presents an experimental ablation study, comparisons with \OCCBS, and more details on the specific demonstration problem in Figure~\ref{fig:Demo}.
Finally, discussions surrounding the experimental results and future work are found in Section~\ref{sec:Discussion}, followed by conclusions is Section~\ref{sec:Conclusion}.
For additional details regarding the specific MAPF implementation and benchmark map sampling, see Appendices~\ref{sec:Implementation} and~\ref{sec:BenchmarkGeneration}, respectively. 

\section{Background and Related Work}
\label{sec:Background}

\subsection{Existing Formulations}
\label{sec:Background:ExistingProblems}
This section details a few relevant formulations of the many that exist for MAPF, 
together with some discussion regarding the implications of their assumptions.

% Central limiting assumptions to point out
% - Idle agents?
% - still assumes all vertices can be waited at
% - still assumes volumes / geometric-based conflicts, with over conservatism
% - Discretises space
% - allows for arbitrary volumes,
% - allows for hetergeneous motion models since agents move on their own graph.
% - still assumes only single goal vertices, and that agents end at goal

\subsubsection{Classical MAPF}
\label{sec:Background:ExistingProblems:MAPF}

The classical MAPF problem~\cite{SurveyStern2019} is posed on an undirected and unweighted graph $\Graph = \tuple{\Vertices, \Edges}$ with vertices $\Vertices$ and edges $\Edges$.
Agents in $\Agents$ exist on the graph, starting at time $t=0$ at a unique vertex and tasked with reaching a unique goal vertex. 
Time is discrete, such that at every timestep an agent can either wait at its current vertex or traverse an edge to a neighbouring vertex.
Two agents conflict if they are located at the same vertex or traverse the same edge in opposite directions at the same timestep.
A plan $\plan^\agent$ is a sequence of actions, one for each timestep, or agent $\agent$ to execute.
The plan $\plan^\agent$ is \emph{valid} if it respectively starts and ends at $\agent$'s start and goal vertex.
A joint plan $\JointPlan = \left\{ \plan^\agent \mid \agent\in\Agents\right\}$ contains one plan for each agent, and is valid if every one of its plans is valid.
A solution is a valid joint plan where no two agents conflict while following their plans.
Each plan $\plan$ has a \emph{plan cost} $\cost(\plan)\in\Real$, 
which is commonly the length of the plan.
The \emph{global cost} $\obj(\JointPlan)$ is a function determining the quality of a joint plan $\JointPlan$;
% over the costs of the individual plans $\plan\in\JointPlan$; 
an optimal solution minimises $\obj$.
Two common cost functions are
\begin{align*}
    \text{Sum-of-Costs (SOC):}\quad& \obj(\JointPlan) = \sum_{\plan\in\JointPlan} \cost(\plan), \\ \text{Makespan:}\quad& \obj(\JointPlan) = \underset{\plan\in\JointPlan}{\max} \;\cost(\plan).
\end{align*}

Under the common setting that MAPF is applied to mobile robots in a shared space,
the assumptions in this classical MAPF formulation bring with them a number of implications.
First, the physical shape and size of an agent---referred to as the agent's \emph{volume}---is not considered;
conflicts depend only on the location of agents on the graph.
A real-world deployment would require $\Graph$ to be constructed so that no two agents conflict when located at different vertices or traversing different edges, thus setting a lower bound on the distance between two graph elements (assuming the agents represent non-zero sized entities) and therefore also the resolution at which the real environment can be modelled.
Concretely, if $\Graph$ represents the environment with a higher resolution than the physical size of the agents, then it would be possible for two agents to be located on different vertices in the graph while still colliding in the real system. 

Second, since time is discrete, all edges take exactly one timestep to traverse.
Real-world environments do not always follow such a grid-like structure that this assumption implies. Consequently, long corridors must be modeled with multiple edges when otherwise a single longer edge would suffice; 
if not all edges are unit-length, a real deployment would require agents to wait idly until every agent has reached its destination vertex, else the agents de-synchronize.

\subsubsection{MAPF in Continuous Time}
\label{sec:Background:ExistingProblems:MAPFR}

Multi-Agent Path Finding in Continuous-time (\MAPFR)~\cite{CCBS} drops several of the assumptions made in classical MAPF, most prominently, that time is discrete.
The \MAPFR is posed over a directed graph $\Graph = \tuple{\Vertices, \Edges}$ where each vertex in $\Vertices$ corresponds to a position in a metric space $\MetricSpace$, and each edge in $\Edges$ corresponds to at least one (but possibly several) trajectories in $\MetricSpace$, starting and ending at the source and target vertex's position, respectively.
Just as in classical MAPF, agents start at a unique vertex and are assigned a unique goal vertex. 
However, agents can now traverse edge trajectories at any real-valued time and wait at vertices for any real-valued duration.
Time being continuous means that agents can be moved arbitrarily close to each other on the graph; 
for instance, an agent's traversal can be timed such that it arrives at the target vertex at some miniscule $\epsilon$ time after another agent arrives there.
Thus, for a \MAPFR solution to be practically relevant in a setting where the agents represent physical robots and conflicts are collisions, it is necessary to consider the agents physical volumes. 
However,~\cite{OCCBS} highlights that conflicts need not be strictly geometric-based; instead, conflicts can be defined by an abstract conflict-function, which is then definable for each specific application.
In this work, we adopt this abstract definition of conflicts.

The \MAPFR formulation is more expressive than classical MAPF, however, there still remains several limiting assumptions.
Since all agents exist on the same graph, it is assumed that every agent is able to follow every edge trajectory; the edge trajectories are therefore bounded by the most limiting motion model among the agents.
Furthermore, it is assumed that every agent is able to remain indefinitely at every vertex in the graph. 
These assumptions on edge trajectories and vertices together practically limit the types of agents that can be modeled.
Consider fixed-wing aircraft that cannot remain still while airborne. 
First, these cannot be modeled together with ground vehicles since ground vehicles cannot follow flight trajectories. 
Second, even in a setting with only fixed-wing aircraft, the only vertices that can be included are those along the ground; fix-wing aircraft generally cannot remain for an arbitrary amount of time at a position in the air. 
Instead, every possible trajectory from one ground-based vertex to another would have to be included as an edge-trajectory---from takeoff to landing---which is likely infeasible in real-world applications.

\subsubsection{Multi-Agent Motion Planning}
\label{sec:Background:ExistingProblems:MAMP}

The \emph{Multi-Agent Motion Planning} (MAMP) problem~\cite{ECBS-CT} bares many similarities with the \MAPFR problem.
Time is continuous, such that
edges (representing trajectories) can be traversed at any real-valued time and vertices can be occupied for any real-valued duration.
However, each agent $\agent\in\Agents$ exists on it's \emph{own} graph $\Graph^\agent = \tuple{\Vertices^\agent, \Edges^\agent}$ defined on a state space $\StateSpace$.
Therefore, edge trajectories and vertex positions are defined in $\StateSpace$. 
This allows for an agent's graph to be tailored specifically to the agent's motion model, thereby avoiding many of the issues with all agents existing on the same graph that are present in the \MAPFR formulation.
In the MAMP formulation from~\cite{ECBS-CT},
conflicts are inherently collision-based: 
space is discretized into cells; 
each vertex and edge is associated with the set of cells that are swept by the agent's volume as they are used.
A collision occurs when two agents occupy the same cell at the same time.
Thus, this formulation also supports arbitrary agent volumes by the set of cells that are occupied when an agent uses a specific graph element. 

Under finite-precision arithmetic, which we assume all practical implementations operate under, some level of discretization will always be present.
However, the discretization of space into cells directly couple precision with computational overhead.
Collision detection will always be conservative.
For instance, at any level of discretization, it will always be possible for two agents to occupy the same cell without their volumes actually overlapping; 
this would still register as a collision.
Using course cells would exacerbate the conservativeness, while using granular cells increases the computational overhead associated with tracking the usage of each of those cells.
This is one motivation behind the way conflicts are handled in~\cite{OCCBS} and in this work: instead of discretizing space into cells, we instead store the incompatibility of graph elements directly. Conceptually, this is like storing a single cell for each pair of graph elements where agents on each respective element could conflict.
% As a final point, basing conflict detection on discretized space inherently assumes that all conflicts are geometric-based.

\subsubsection{Task/goal-structure and hybrid approaches}
\label{sec:Background:ExistingProblems:tasks}

Classical MAPF, continuous-time MAPF, and MAMP all share a further restriction, independent of how they model time, space, or graphs: every agent is assigned exactly one goal vertex, and a valid plan must end there. This single requirement carries several consequences that limit real-world applicability.

First, since every agent's plan must end at its goal, an agent cannot reach its goal and then move out of the way for other agents without returning to it afterward. 
Consequently, the solver is not merely tasked with planning each agent to visit its goal vertex, but with finding a mapping from the start configuration, in which every agent is simultaneously at its start, to the goal configuration, in which every agent is simultaneously at its goal. 
It is conceivable that some applications do not require such tight coordination; 
it may be sufficient for each agent to have reached its goal at some point during the plan, after which it is free to continue moving. 
Second, requiring every plan to end at a goal vertex also requires goals to be unique, which is incompatible with applications where several agents must, for instance, deliver goods to the same location. 
Third, this requirement leaves no way to represent idle agents---those that have either completed their task or had no task to begin with---without assigning them a goal. These idle agents are sometimes handled by assigning \emph{dummy goals}~\cite{GabayReachability, FMScheduler}, which introduces the further problem of deciding where a dummy goal should be placed.

Several works relax one or more of these consequences directly. The pickup-and-delivery problem~\cite{Ma_PickupAndDelivery} allows each agent two goal locations rather than one, and Multi-Goal MAPF~\cite{MultiGoalMAPF} generalizes this to an arbitrary sequence of goals per agent, addressing goal uniqueness and multiplicity. Separately, \cite{TransientMAPF} critiques the requirement that a solution end with all agents simultaneously at rest at their goals, arguing that many real-world problems only require each agent to visit its goal at some point, not to remain there in synchrony with every other agent. Closest to our own treatment of this issue, \cite{GabayReachability} proposes a reachability objective in which an agent's plan is permitted to continue moving after its goal has been visited, rather than assuming its final position is fixed at the goal in advance.

\subsubsection{Summary of the Remaining Gaps}
Throughout Sections~\ref{sec:Background:ExistingProblems:MAPF}--\ref{sec:Background:ExistingProblems:tasks} 
we have discussed the various assumptions made on a select set of MAPF variants regarding time, space, conflicts, and goal structure.
We now provide a summary in Table~\ref{tab:background:formulation_comparison} of these aspects and how this work departs from each.
\begin{table}[h]
    \centering
    \caption{Comparison of existing MAPF formulations along the discussed axes, and where this work departs from each.}
    \label{tab:background:formulation_comparison}
    \resizebox{\textwidth}{!}{
    \begin{tabular}{lcccc}
        \toprule
        & \textbf{Classical MAPF} & \textbf{Continuous-time MAPF} & \textbf{MAMP} & \textbf{This work} \\
        \midrule
        Time & Discrete & Continuous & Continuous & Continuous \\
        Graph & Shared across agents & Shared across agents & Per-agent & Per-agent \\
        Vertex waitability & All waitable & All waitable & All waitable & May be non-waitable \\
        Conflict basis & Shared vertex/edge only & Geometric (abstractable) & Geometric (cell-based) & Abstract \\
        Goals per agent & One, unique & One, unique & One, unique & Sequence \\
        Post-goal behavior & Must remain at goal & Must remain at goal & Must remain at goal & May continue moving \\
        Idle agents & Not handled & Not handled & Not handled & Handled \\
        Task type & Vertex only & Vertex only & Vertex only & Vertex (with service time) or edge \\
        \bottomrule
    \end{tabular}
    }
\end{table}

\subsection{Conflict-Based Search Solvers}
\label{sec:Background:OCCBS}

This section introduces the lineage of discrete-time Conflict-Based Search, leading up to Optimal Continuous-time Conflict-Based Search.

\subsubsection{Discrete-time Conflict-Based Search}
\label{sec:Background:CBS}

The work in~\cite{CBS} introduces the seminal \emph{Conflict-Based Search} (CBS), 
a solver for the classical MAPF problem that is \emph{exact} (returns only optimal solutions) and \emph{solution complete} (returns a solution if one exists, but does not necessarily report a problem to be unsolvable).
CBS consist of two levels; 
a high-level search over a binary constraint tree (CT), and a low-level path planner.
Informally, the search starts at the CT root node, representing the problem without any conflict-avoidance constraints.
Every node in the CT contains a lower bound on the global cost that any solution in its sub-tree can achieve; 
the leaf node minimising this lower bound is selected for expansion, at which point either it contains a solution (which is shown to be the optimal solution) or new conflict-avoidance constraints are added to its two newly spawned children. 
The low-level path planner is invoked at every child node to compute new plans satisfying the newly added constraints.

Specifically, every CT node $\CTnode$ contains 
a set of constraints $\CTnode_\CTconst$
and a valid joint plan $\CTnode_\JointPlan$, 
where every plan $\plan\in\CTnode_\JointPlan$ minimises the plan cost $\cost(\plan)$ over all valid plans satisfying $\CTnode_\CTconst$.
The search starts at the CT root $\CTroot$ with $\CTroot_\CTconst=\varnothing$;
every plan $\plan^\agent \in \CTnode_\JointPlan$ is the lowest-cost plan for $\agent$ without considering conflicts with other agents.
At every iteration of the search, the leaf node $\CTnode$ minimizing $\obj(\CTnode_\JointPlan)$ (initially $\CTroot$) is selected for expansion.
During expansion, an arbitrary conflict in $\CTnode_\JointPlan$ is selected, 
say involving agents $i$ and $j$, 
and a \emph{branching rule} is applied to the conflict to create constraints $c_i$ and $c_j$ for each respective agent is created.
These constraints ensure that their corresponding agent avoids this specific conflict. 
For instance, $c_i$ and $c_j$ could respectively forbid $i$ and $j$ from occupying the vertex where and when the collision occurred, if the collision involved using the same vertex at the same timestep.
Two child nodes to $\CTnode$, $\CTnode^i$ and $\CTnode^j$, are subsequently spawned with $\CTnode^i_\CTconst = \CTnode_\CTconst \cup \left\{ c_i \right\}$ and $\CTnode^j_\CTconst = \CTnode_\CTconst \cup \left\{ c_j \right\}$.
That is, $i$ is constrained in $\CTnode^i$ to give way to $j$, while $j$ is constrained in $\CTnode^j$ to give way to $i$.
The joint plans $\CTnode^i_\JointPlan$ and $\CTnode^j_\JointPlan$ are recomputed by the low-level path planner to satisfy $\CTnode^i_\CTconst$ and $\CTnode^j_\CTconst$, respectively.
In practice, only $i$ and $j$'s plans need recomputing, since they are the only two agents with new constraints; the remaining agents' plans in $\CTnode_\JointPlan$ already satisfy the child nodes' constraints.

Let $\Solutions(\CTnode)$ be the set of all solutions consistent with the constraints in $\CTnode_\CTconst$, and
\begin{equation}
    \CTnodeLB{\CTnode} \leq \min_{\JointPlan\in\Solutions(\CTnode)} \; \obj(\JointPlan)
\end{equation}
denote a lower bound on all such solutions.
The two properties underpinning CBS's exactness and solution completeness are
\begin{align}
    \Solutions(\CTnode) \;&=\; \Solutions\left(\CTnode^i\right) \cup \Solutions\left(\CTnode^j\right) \label{CBS:property:soundness} \\
    \CTnodeLB{\CTnode} \;&\leq\; \min\left(\CTnodeLB{\CTnode^i}, \CTnodeLB{\CTnode^j}\right) \label{CBS:property:admissibility}.
\end{align}
Property~\eqref{CBS:property:soundness} ensures soundness in the branching, such that no solutions are removed during the search. 
This property being satisfied depends entirely on the branching rule used to create the constraints that are added to the child nodes.
For a branching to be sound, it must hold that any solution must satisfy at least one of the constraints that the branching rule creates.
For instance, if the constraint $c^i$ forbids $i$ from occupying vertex $\vertex$ at timestep $t$, and $c^j$ forbids $j$ from occupying $\vertex$ at $t$ also, then any solution must satisfy at least one of these constraints since otherwise $i$ and $j$ conflict.

Property~\eqref{CBS:property:admissibility} ensures admissibility in the global cost lower bound, such that the first solution encountered during the best-first CT search is indeed the optimal.
In CBS, the joint plan of a CT node, $\CTnode_\JointPlan$, is used as the global cost lower bound: $\CTnodeLB{\CTnode} = \obj(\CTnode_\JointPlan)$.
Whether this holds or not depends on two things.
First, the global cost function $\obj(\JointPlan)$ must be decomposable into a function over, and non-decreasing in, individual plan costs $\cost(\plan), \plan\in\JointPlan$~\cite{OCCBS}.
With these properties, ensuring that every individual plan cost is optimal ensures that the global cost is optimal.
Second, the low-level planner must be able to find the optimal plan for a single agent given a set of constraints.
If the low-level planner is able to do so, 
then at every CT node $\CTnode$ we know that every plan $\plan\in\CTnode_\JointPlan$ is optimal under $\CTnode_\CTconst$, implying $\CTnode_\JointPlan$ is optimal under $\CTnode_\CTconst$.
Finally, note that for every node $\CTnode'$ in $\CTnode$'s sub-tree it holds that $\CTnode_\CTconst \subseteq \CTnode'_\CTconst$, since constraints are only added and never removed during branchings.
Adding constraints can never lead to lower-cost solutions, hence $\obj(\CTnode_\JointPlan) \leq \obj(\CTnode^i_\JointPlan)$ and $\obj(\CTnode_\JointPlan) \leq \obj(\CTnode^i_\JointPlan)$. 
Property~\eqref{CBS:property:admissibility} is satisfied.

\subsubsection{Continuous-time Conflict-Based Search}

\emph{Continuous-time Conflict Based Search} (CCBS)~\cite{CCBS} is an inexact solver~\cite{CCBS_revisit, OCCBS} for the \MAPFR problem, 
extending CBS to allow for edge traversal times, edges durations, and vertex wait durations, in real-valued time.

Besides allowing for arbitrary agent volumes and edge trajectories (theoretically), the overall structure of the CCBS solver matches closely with CBS:
a two-levelled search consisting of a high-level search in the binary CT and a low-level search for finding optimal single agent plans. 
The main difference between CCBS and CBS can be found in the branching rule. 
Consider two agents $i$ and $j$ that conflict when respectively performing an action, an action being to either wait at a vertex or traverse an edge.
Agent $i$ executes action $\action^i$ starting at time $t^i$, and $j$ executes $\action^j$ starting $t^j$.
As described theoretically, 
CCBS's branching rule then finds for agent $i$ the earliest time $t^i_u>t^i$ when $\action^i$ can be executed without conflicting with $j$ performing $\action^j$ at $t^j$.
Likewise, $t^j_u>t^j$ is found for $j$.
The two constraints $c^i = \langle i, \action^i, [t^i, t^i_u) \rangle$ forbidding $i$ from executing $\action^i$ starting at any time in $[t^i, t^i_u)$, and $c^j = \langle j, \action^j, [t^j, t^j_u) \rangle$ forbidding $j$ from executing $\action^j$ starting at any time in $[t^j, t^j_u)$, are created and added to the constraint set of each respective child node. 

The branching rule that is actually used in the publicly available implementation applies the same procedure to move actions, actions involving traversing an edge, as theoretically described.
However, wait actions, involving waiting at a vertex, are treated in a different way.
Consider an agent $j$ waiting at a vertex and while doing so conflicting with agent $i$ traversing an edge. 
Agent $j$ waits at the vertex during a specific time interval, however, if $j$ waited at the vertex for all time instead, then the interval during which $i$ and $j$ conflict can be recorded as the \emph{intersection interval}.
In the implemented branching rule, the constraint applied to $j$ forbids it from occupying the vertex at any time in the intersection interval.

The low-level planner, CSIPP~\cite{CCBS}, used in CCBS is based on \emph{Safe Interval Path Planning} (SIPP)~\cite{SIPP}, which splits the times when a vertex can be occupied or an edge can be traversed by an agent into safe and unsafe intervals.
The unsafe intervals are defined by the constraints in a given constraint set, and the safe intervals are the unsafe intervals' complement. 
An \Astar~\cite{Astar} search is then done in the vertex-safe interval state space, collapsing the space of dense times into discrete states, and thereby returning an optimal single agent plan in continuous time.

The work in~\cite{CCBS_revisit} finds that the theoretically described branching rule can lead to an infinite expansion of CT nodes between the root and a node representing an optimal solution, thus potentially leading to non-termination. 
They also find that the implemented branching rule in some cases removes solutions from the search, thus violating property~\eqref{CBS:property:soundness}.
Counter-examples are provided in~\cite{CCBS_revisit, OCCBS} showing that CCBS returns a sub-optimal solutions.
Despite the lack of formal guarantees, comparisons with the exact \OCCBS (described in the next section) show that CCBS is often able to find the optimal solution faster due to its more aggressive search space pruning~\cite{OCCBS}.

\subsubsection{Optimal Continuous-time Conflict-Based Search}

\emph{Optimal Continuous-time Conflict-Based Search} (\OCCBS)~\cite{OCCBS} restores exactness and solution completeness to CCBS by introducing a new branching rule, \deltaBR, that changes how conflicting wait actions are handled. 

\deltaBR is based on shifting constraints from~\cite{CCBS_revisit} and applies the same rule as CCBS to conflicts involving two move actions.
However, when applied to a conflict involving a move and a wait action,
\deltaBR finds a specific value $\delta$ based on the intersection interval $I=[t^I_1, t^I_2)$ and the time interval during which the waiting agent occupies its vertex.
Then, for the moving agent $i$ executing the move action $\action^i$ at time $t^i$, a constraint $c^i$ is created forbidding $i$ from executing $\action^i$ starting at any in $[t^i, t^i + \delta)$.
For the other agent $j$ waiting at $\vertex$, a constraint $c^j$ is created forbidding $j$ 
from occupying $\vertex$ at any time in $[t^I_1 + \delta, t^I_2)$ and 
from executing any move action from $\vertex$ starting at any time in $[t^I_1 + \delta, t^I_2)$.
It is shown in~\cite{OCCBS} that \OCCBS is exact and guaranteed to return a solution in finite time, under the assumption that a solution exists.

Essentially, the work in~\cite{CCBS_revisit} identifies that properties~\eqref{CBS:property:soundness} and~\eqref{CBS:property:admissibility} are not sufficient in continuous time to guarantee that a solution will be returned.
That is, the theoretically described CCBS branching rule satisfies both, yet allows for the possibility of an infinite number of nodes between the CT root and one representing an optimal solution.
This problem is not encountered in CBS since time is discretised, thereby allowing for only a finite number of steps between the root and an optimal solution.
Consequently, an additional property is introduced in~\cite{OCCBS} to ensure that an optimal solution is reachable in a finite number of steps in the CT.
That is, the following three properties must be satisfied for \OCCBS to return an optimal solution in finite many iterations (restating properties~\eqref{CBS:property:soundness} and~\eqref{CBS:property:admissibility} for ease of reference):
\begin{enumerate}[label=\textbf{CT-\Roman*}, align=left, leftmargin=1.5cm]
    \item \label{OCCBS:property:soundness} No solutions are removed from the search:
    \begin{equation*}
        \Solutions(\CTnode) \;=\; \Solutions\left(\CTnode^i\right) \cup \Solutions\left(\CTnode^j\right).
    \end{equation*}
    \item \label{OCCBS:property:admissibility} The global cost lower bound monotonically increases when descending the CT: 
    \begin{equation*}
        \CTnodeLB{\CTnode} \;\leq\; \min\left(\CTnodeLB{\CTnode^i}, \CTnodeLB{\CTnode^j}\right).
    \end{equation*}
    \item \label{OCCBS:property:progress} For every infinite path of nodes $\CTnode_1, \CTnode_2, \CTnode_3, \dots$ in the CT, where $\CTnode_i$ is the parent of $\CTnode_{i+1}$, and any $c\in\Real$,
    \begin{equation*}
        \exists k \in \Natural: \forall i\geq k: c < \CTnodeLB{\CTnode_i}.
    \end{equation*}
\end{enumerate}
That is, for~\ref{OCCBS:property:progress}, constraints are accumulated along the path $\CTnode_1, \CTnode_2, \CTnode_3, \dots$;
it must hold that the constraints that accumulate must eventually push the global cost lower bound beyond some threshold $c$, wherever that threshold may be.
Intuitively, this means that as the search continues, 
either such a path will lead to the optimal solution, or it will be ignored in the best-first search in favour of other paths with lower global cost lower bounds in which the optimal solution lies.

\subsection{Suboptimal, Bounded-Suboptimal, and Anytime Methods}
We introduce methods that find sub-optimal solutions, with additionally bounds on sub-optimality that sub-optimality, an \emph{anytime} property of providing improving solutions over runtime, and possibly guarantees of eventual convergence to an optimal solution.

\subsubsection{Prioritized SIPP}
\label{sec:Bakground:PSIPP}

Prioritized SIPP (PSIPP)~\cite{PSIPP} applies prioritized planning to a continuous-time MAPF formulation with circular agents and straight line constant-speed traversals. 
Agents can traverse edges at any real-valued time, and edge traversals can have any positive real-valued duration.
Each agent starts at a unique vertex and is assigned a unique goal vertex, with no two agents at their goal vertices conflicting with each other.
Given some ordering of the agents, 
PSIPP plans the movement of one agent at a time, using SIPP, to move from its starting vertex to its goal vertex while avoiding collisions with other previously planned agents.

PSIPP outperforms CCBS for fast suboptimal solutions, on some maps planning more than 500 agents in less than a second. 
However, PSIPP is inherently a heuristic method that is subject to the initial agent ordering, meaning that the returned solutions have no guarantees of optimality or of even finding a solution.

\subsubsection{Enhanced Conflict-Based Search in Continuous Time}

\emph{Enhanced Conflict-Based Search in Continuous Time} (ECBS-CT)~\cite{ECBS-CT} is a bounded suboptimal solver for the MAMP problem, 
building on the \emph{Enhanced CBS} solver~\cite{barer2014suboptimal}. 

Given a suboptimality bound $w\geq1$, ECBS-CT searches for a solution $\JointPlan$ with $\frac{\obj(\JointPlan)}{\obj^*} \leq w$, where $\obj^*$ is the optimal global cost, such that $\JointPlan$'s suboptimality is at most $w$. 
Setting $w=1$ collapses the method into searching for an optimal solution.
It does this by applying the general CBS framework, with a high-level search in a binary constraint tree and a low-level single-agent path planner, \emph{Soft Conflict Interval Path Planning} (SCIPP), based on SIPP.
However, instead of always selecting the CT node minimising the global cost, ECBS-CT applies a focal search:
with $\obj_\mathit{min}$ being the minimum $\CTnodeLB{\CTnode}$ over all CT leaf nodes $\CTnode$, 
the search selects from the set of nodes satisfying $\CTnodeLB{\CTnode} \leq w\obj_\mathit{min}$.
Among these satisfying leaf nodes, a secondary heuristic minimising the total duration of time intervals in which two or more agents collide is applied.
That is, ECBS-CT focuses the search not on the nodes with highest potential solution quality, but instead on the nodes with satisfactory cost bounds and which are nearest to representing a solution. 

Like SIPP, SCIPP performs an \Astar search in the vertex-safe interval state space, with safe intervals defined by the set of \emph{hard} constraints given by the high-level search.
However, SCIPP applies a similar focal search as in the high-level search:
instead of always selecting the state with minium total estimated path cost $f_\mathit{min}$, any state with $f \leq w\cdot f_\mathit{min}$ is eligable for selection.
Among these, the secondary heuristic selects the state minimising the number of \emph{soft conflicts}, that is, conflicts with other agents in the high-level node's joint plan. 

On these two search levels, ECBS-CT trades some of the slack afforded by $w$ for joint plans that are closer to being solutions, and plans that pre-emptively avoid conflicts. 
The results show that by increasing from $w=1$ to $w=1.2$ ($20\%$ optimality gap) and beyond, runtime decreases and success rate increases, such that plans can be found in some cases for over $10$ times as many agents than otherwise.

Although this focal search mechanism allows ECBS-CT to trade optimality for a substantial gain in scalability, the suboptimality bound $w$ must be fixed before the search begins, and lowering it to tighten the bound requires restarting the search from scratch. 
Anytime or incremental variants that retain this work across values of $w$ are not part of the ECBS-CT framework.

\subsubsection{Anytime Methods in Discrete Time}
\label{sec:Bakground:AnytimeDiscrete}

An anytime solver makes a solution available early and improves it as time allows.
Where a bounded-suboptimal solver fixes its quality target before the search begins,
an anytime solver maintains an incumbent solution whose quality improves monotonically and, ideally, converges to optimal. 
Work on anytime MAPF in discrete time follows three broad approaches.

The first keeps the CBS structure and makes the bounded-suboptimal search itself anytime.
Anytime Focal Search (AFS)~\cite{CohenAnytimeFocal} iteratively tightens the suboptimality factor $w$ while reusing the search effort already expended under looser factors. 
Applied by its authors to the high level of BCBS~\cite{barer2014suboptimal}, it gives what later work~\cite{li2021lns} calls anytime BCBS, and what~\cite{CohenAnytimeFocal} describes as the first anytime
MAPF solver.
Notably, AFS reports each incumbent together with the ratio of its cost to the smallest $f$-value remaining in the open list, and prunes states whose cost exceeds the current incumbent---so the incumbent, the bound certifying it, and the pruning rule all derive from the single focal search over the constraint tree.

The second approach replaces systematic search with local search over complete solutions.
MAPF-LNS~\cite{li2021lns} computes an initial solution with a fast suboptimal method,
then repeatedly removes the plans of a subset of agents, selected by a randomized removal heuristic, and repairs them by prioritized planning, keeping the result whenever it lowers the cost. 
Subsequent work largely differs in how the subset is chosen
or how the repair is performed: 
MAPF-LNS2~\cite{li2022lns2} repairs plans that still contain collisions rather than requiring a feasible starting solution, while~\cite{huang2022mllns} and~\cite{phan2024bandit} learn, 
respectively via supervised learning and multi-armed bandits, which neighbourhoods are worth replanning. 
These methods scale to hundreds of agents, but the incumbent is never accompanied by a lower bound on the optimal cost, so nothing is known about how far from optimal a returned solution is, nor is optimality reached in the limit.

The third performs an anytime search directly over the joint configuration space.
LaCAM$^*$~\cite{okumura2023lacamstar} uses lazy successor generation to return an initial solution very quickly, and then continues searching while revising parent relations among search nodes, converging to an optimal solution for costs accumulated over transitions.
It is thus both scalable and eventually optimal, though---as with the first approach---the guarantee is stated over the enormous discrete-time configuration space.

Two observations carry over to this work. 
First, all of the above assume the classical MAPF setting of Section~\ref{sec:Background:ExistingProblems:MAPF}: discrete time, a shared graph, and one goal per agent at which the plan ends. 
Second, the destroy-and-repair idea underlying LNS is close in spirit to the repair methods in \ours's portfolio (Section~\ref{sec:Method:TPSIPP}): 
both take an existing joint plan and replan a subset of agents around the remainder. 
The difference lies in where the plan comes from and what is known about the result. 
In LNS the repaired plan is the incumbent itself and the repair is the only source of information about solution quality; 
in \ours the plan being repaired is a CT node's joint plan, and the search that produced it simultaneously supplies a lower bound, so any repaired solution can be reported together with a certified optimality gap.

\subsubsection{Anytime Methods in Continuous Time}

To our knowledge, no anytime solver has been proposed for the continuous-time formulations of Section~\ref{sec:Background:ExistingProblems}. 
That literature is instead split between exact solvers that return nothing at all before the optimum is found (CCBS~\cite{CCBS}, OC-CBS~\cite{OCCBS}), 
bounded-suboptimal solvers whose bound is fixed before the search (ECBS-CT~\cite{ECBS-CT}), 
and unbounded-suboptimal prioritized planners that return a single solution and stop
(PSIPP~\cite{PSIPP}, CPLP~\cite{CPLP}).

The reason is largely one of timing. Any search-based anytime method
reports its incumbent against a lower bound, and in continuous time that
bound is trustworthy only under
\ref{OCCBS:property:soundness}--\ref{OCCBS:property:progress}. CCBS was
believed to satisfy these until recently, when~\cite{CCBS_revisit} showed its branching rule violates the first and
third; only with \OCCBS's branching rule were the properties established
on a continuous-time constraint tree at all. \ours\ is, to our knowledge,
the first anytime method built on this foundation.

LNS-style repair faces no comparable obstacle when agents are required to
remain at their goal once reached: a continuous-time prioritized planner
such as PSIPP~\cite{PSIPP} is a perfectly serviceable repair primitive
under that restriction, and nothing here rules out a continuous-time
analogue of MAPF-LNS. 
To our knowledge none has been published. PSIPP's
actual limitation is orthogonal to LNS suitability: it assumes each
agent's final position is fixed at its goal, so it cannot vacate a
completed task to let another agent through
(Section~\ref{sec:Method:TPSIPP}). 
This is what our modified version of
PSIPP, \emph{Tier-Prioritized SIPP}, is built to resolve. The limitation
is not specific to LNS or to PSIPP: it would affect any repair method,
prioritized or otherwise, once agents are allowed to move on after
completing their tasks. 
Configuration-space search of the LaCAM$^*$ kind
relies on a finite successor set per configuration, which is not
immediate when action durations are real-valued.

A focal-search adaptation in the manner of~\cite{CohenAnytimeFocal} is now
equally possible, and would in fact be expressible within \ours as one
selection policy $\policy_\policySelection$ among others, retaining correctness through Theorems~\ref{theorem:AOCCBS:IncumbentDrivenTermination} and~\ref{theorem:AOCCBS:PolicyDrivenTermination}.
What distinguishes \ours is not the constraint tree but the decoupling: 
focal search derives incumbent and bound from a single node ordering, whereas here the bound comes from the CT search and the incumbent from repair, so heuristics may be added freely without bearing on correctness.

\subsubsection{Kinodynamic Feasibility: Repairing After Search versus Searching Within the Feasible Space}

A separate line of work attains continuous, kinodynamically feasible motion by planning in a simpler space and recovering feasibility afterwards. 
\emph{Concrete Multi-Agent Path Planning}~\cite{ConcreteMAPP} plans discrete paths with Tree-LaCAM, an anytime and complete solver built on LaCAM/LaCAM$^*$~\cite{okumura2023lacamstar}, 
converts them into trajectories using learned robot dynamics, 
and executes them under optimal control that repairs residual collisions; 
a heterogeneous fleet of 40 physical ground and aerial robots is coordinated this way, concurrently and onboard, 
at a hardware scale no continuous-time planner operating directly on trajectories has been deployed at. 
In multi-robot kinodynamic motion planning, db-CBS~\cite{moldagalieva2024dbcbs} takes a related route, 
searching over precomputed motion primitives with a bounded discontinuity between them, 
optimizing the resulting trajectory in the joint space, and repeating with a reduced discontinuity bound; 
the method is anytime and asymptotically optimal.

\ours makes the opposite trade. 
Kinodynamic feasibility is a property of graph construction (Section~\ref{sec:ProblemFormulation:generality}) rather than of planning, so the search runs directly in the space of motions the agents can realize. 
Any incumbent is executable by construction, with no trajectory-generation or repair stage after the fact, and the reported optimality gap is with respect to the agents' graphs. 
What is given up is the ability to improve upon that fixed discrete abstraction: solution quality is bounded by the trajectories present in the graphs, 
and enriching them moves cost into the precomputation stage (Section~\ref{sec:Implementation}). 
Such a formulation is appealing in tightly constrained or safety-critical settings, where stronger end-to-end feasibility guarantees may justify the additional computational burden.

Taken together, anytime methods are well developed for discrete-time MAPF but absent from its continuous-time counterparts, 
while the methods that do plan kinodynamically feasible motion obtain their anytime behaviour from a discrete abstraction and restore feasibility only afterwards. 
\ours sits at the intersection: 
anytime, with a certified optimality gap, over a formulation that is continuous in time and feasible by construction.

\section{Problem Formulation}
\label{sec:ProblemFormulation}

% Our problem allows for
% \begin{enumerate}
%     \item Arbitrary and heterogeneous agent motion models (over a finite, discrete set of trajectories).
%     \item Arbitrary and heterogeneous agent volumes. 
%     \item Abstract conflict detection that includes but is not limited to geometric collisions.
%     \item Arbitrary number of tasks (including none), with each task either involving waiting for a service time at a vertex or traversing an edge.
%     \item Infinitely waiting at target (common MAPF assumption) or becoming idle and adjusting to other agents after completing all tasks.
% \end{enumerate}

This problem formulation combines and extends those introduced in Section~\ref{sec:Background:ExistingProblems} to allow for continuous-time movement of a fleet of heterogeneous agents along a discrete set of trajectories in state spaces, with arbitrary definitions of conflicts, sequences of tasks at vertices and edges, and several other practically motivated generalizations.

\textbf{A graph}
$\Graph = \tuple{\Vertices, \Edges}$ is a directed multi-graph over a state space $\StateSpace$ which may be unique to $\Graph$ (although we do not denote this for brevity). 
Every vertex $\vertex\in\Vertices$ corresponds to a unique state $\vertex_\state\in\StateSpace$, 
and $\waitable{\vertex}\in\{\True, \False\}$ denotes if $\vertex_\state$ is a state that can be remained at indefinitely or not.
For instance, it is not possible to remain in a state that contains both a measure and its non-zero derivative.
Each edge $\edge\in\Edges$ has a source vertex $\source{\edge}\in\Vertices$ and a target vertex $\target{\edge}$; 
since $\Graph$ is a multi-graph, there may exist multiple edges with the same source and target vertices.
Edge $\edge$ corresponds to a trajectory $\edge_\traj: [0, \edge_\dur)\mapsto\StateSpace$ mapping time to the state space, with duration $\edge_\dur\in\Real_{>0}$, starting at state $\edge_\traj(0)=\source{\edge}_\state$ and ending at state $\edge_\traj(\edge_\dur)=\target{\edge}_\state$.

\textbf{An agent}
$\agent$ from the set of all agents $\Agents$ exists on a possibly unique graph $\Graph^\agent = \tuple{\Vertices^\agent, \Edges^\agent}$ defined over a state space $\StateSpace^\agent$.
The agent $\agent$ being located at vertex $\vertex\in\Vertices^\agent$ means that it is in the state $\vertex_\state$;
traversing edge $\edge\in\Edges^\agent$ means that it follows the state trajectory~$\edge_\traj$.
% We assume for all $\edge\in\Edges^\agent$ that $\agent$'s underlying motion model permits following $\edge_\traj$.
Initially, agent $\agent$ is located at $\AgentStart^\agent\in\Vertices^\agent$ at time $t=0$; we do not assume $\waitable{\vertex}$.

\textbf{An action} 
$\action$ is executed by an agent and can mean to either traverse an edge at a specific time or wait at a vertex during some time interval.
A \emph{move action} $\move=\tuple{\edge, t}\in\Edges\times\Real$ means to begin traversing $\edge$ at time $t$.
That is, $\move$'s trajectory is the same as $\edge$'s trajectory but shifted in time: $\move_\traj(t') = \edge_\traj(t'-t)$ defined over $t'\in[t, t+\edge_\dur)$.
We say that $\move$'s source and target vertices are the same as $\edge$'s: $\source{\move} = \source{\edge}$ and $\target{\move}=\target{\edge}$.
We also denote $\move$'s start and end time with $\timestart{\move}=t$ and $\timeend{\move}=t+\edge_\dur$, and its duration matching $\edge$'s duration, $\move_\dur = \edge_\dur$.

A \emph{wait action} $\wait=\tuple{\vertex, t_1, t_2}\in\Vertices\times\Real\times(\Real\cup\{\infty\})$ means to wait at $\vertex$ during the time interval $[t_1, t_2)$.
Only wait actions with a positive duration ($t_1 < t_2$) at a vertex that can be waited at ($\waitable{\vertex}$) are considered.
A wait action with $t_2 = \infty$ is referred to as an \emph{infinite wait action}.
Matching the notation of move actions, $\wait$'s trajectory maps to the state of $\vertex$ over the full waiting interval: $\wait_\traj(t') = \vertex_\state$ defined over $t'\in[t_1, t_2)$.
We also denote $\source{\wait} =\target{\wait}=\vertex$, the start time $\timestart{\wait}=t_1$, the end time $\timeend{\wait}=t_2$, and the duration $\wait_\dur = t_2 - t_1$.

Every action $\action$ has a \emph{cost} $\action_\obj$.
In the majority of the MAPF literature, this cost equals the action's duration, meaning $\move_\obj = \move_\dur$ for a move action and $\wait_\obj = \wait_\dur$ for a wait action.
However, in some settings the cost is not measured in time but energy, such that waiting does not cost while moving does. 
We leave action cost as a domain-specific attribute. 

With superscript $\action^\agent$ we mean that the action is executed specifically by agent $\agent\in\Agents$, such that $\edge\in\Edges^\agent$ (for a move action) and $\vertex\in\Vertices^\agent$ (for a wait action). 
However, we omit the superscript for notational simplicity whenever it is otherwise clear.

\textbf{A task sequence}
$\Tasks = \tuple{\task_1, \task_2, \dots, \task_N} \in \left( \left(\Vertices \cup \Edges \right) \times \Real_{\geq0}\right)^N$ is a sequence of $N$ \emph{tasks}, 
with a task meaning to either traverse an edge or to wait at a vertex for at least a specific amount of time.
Tasks in a task sequence must be completed in order and one-at-a-time; 
only once task $\task_i$ is completed can the traversing of an edge or waiting at a vertex to complete $\task_{i+1}$ begin.
For instance, if both $\task_i$ and $\task_{i+1}$ involve waiting at the same vertex, only once $\task_i$ is complete does the time start counting toward $\task_{i+1}$.

An \emph{edge task} $\task = \tuple{\edge, \edge_\dur}$ with $\edge\in\Edges$ is completed by an agent once the agent traverses $\edge$.
Continuing with the same notation as for actions, we say that $\task$'s source and target vertices match the edge $e$: $\source{\task} = \source{\edge}$ and $\target{\task} = \target{\edge}$.
Such an edge task's duration is denoted $\task_\dur = \edge_\dur$.

A \emph{vertex task} $\task = \tuple{\vertex, \dur}$ with $\vertex\in\Vertices$ is completed by an agent once the agent has been located at $\vertex$ for at least $\dur$ time.
For such a vertex task we denote $\source{\task} = \target{\task} = \vertex$ and its duration $\task_\dur = \dur$.

Just as with actions, the superscript $\Tasks^\agent = \tuple{\task_1^\agent, \task_2^\agent, \dots, \task_N^\agent}$ denotes that agent $\agent\in\Agents$ is assigned to $\Tasks^\agent$.
This means that $\edge\in\Edges^\agent$ (for edge tasks) and $\vertex\in\Vertices^\agent$ (for vertex tasks).
Also, number of tasks $N$ may be unique to each agent. 
When otherwise clear from context, we omit the superscript for notational simplicity.

\textbf{A plan}  
$\plan^\agent = \tuple{\action_1, \action_2,\dots,\action_n}$ (with $n$ possibly unique to $\agent$) is a sequence of actions for agent $\agent\in\Agents$ to execute in series. 
The duration of $\plan^\agent$ is the sum of its actions' durations, $\plan^\agent_\dur = \sum_{\action\in\plan^\agent} \action_\dur$, however excluding any final infinite wait action if present.
We say that $\plan^\agent$ is \emph{valid} if the following conditions are true:
\begin{itemize}
    \item The first action $\action_1$ starts at $\agent$'s starting vertex and at time $t=0$:
    \begin{equation*}
        \source{\action_1} = \AgentStart^\agent \;\wedge\; \timestart{\action_1} = 0.
    \end{equation*}
    
    \item Every successive action is connected to the previous in time and place: 
    \begin{equation*}
        \timeend{\action_{k-1}} = \timestart{\action_k} \: \wedge \;  \target{\action_{k-1}} = \source{\action_k}
        \quad \forall k \in [2,..,n].
    \end{equation*}
    
    \item All tasks in $\Tasks^\agent = \tuple{\task_1, \task_2,\dots,\task_N}$ are completed in order and one-at-a-time:
    there exists an ordered set of times $\tuple{t_1, t_2, ..., t_N} \in \Real_{\geq0}^N$ (with $t_i$ being the start time of $\task_i$) where 
    \begin{equation*}
        t_1 \leq t_1 + \task_{1,\dur} \leq t_2 \leq t_2 + \task_{2.\dur}\leq\dots\leq t_N
    \end{equation*}
    for which the following holds:
    \begin{itemize}
        \item For every edge task $\task_i = \tuple{\edge, \edge_\dur}\in\Tasks^\agent$ there exists a move action $\move=\tuple{\edge, t_i}\in\plan^\agent$.
        \item For every vertex task $\task_i = \tuple{\vertex, \dur} \in \Tasks^\agent$, it holds that if $\dur=0$ then the agent either starts in $\vertex$ or at some time moves to it,
        \begin{equation*}
            \AgentStart^\agent = \vertex \;\vee\; \exists \move\in\plan^\agent: \target{\move}=\vertex,
        \end{equation*}
        otherwise if $\dur>0$ then the agent waits at $\vertex$ for at least $\dur$,
        \begin{equation*}
            \exists\wait\in\plan^\agent:\; \source{\wait}=\vertex \;\wedge\; \timestart{\wait} \leq t_i < t_i+\task_{i,\dur} \leq \timeend{\wait}
        \end{equation*}
    \end{itemize}
    
    \item The final action $\action_n$ is an infinite wait action ($\action_{n,\dur} = \infty$). This is to model that agents are assumed to remain idle once completing all of their given instructions.
\end{itemize}
Without loss of generality, we assume that no two subsequent actions $\action_i, \action_{i+1} \in\plan^\agent$ are both wait actions, since they can equivalently be merged into one.

\textbf{Conflicts} are fundamentally defined by a function $\inconflict{i, \state_i, j, \state_j} \in \{\True,\False\}$, denoting if agent $i\in\Agents$ in state $\state_i\in\StateSpace^i$ conflicts with agent $j$ in state $\state_j\in\StateSpace^j$ or not.
We leave it implicit that no agent can conflict with itself.
This problem formulation makes no assumptions on the reason for such a joint state to be a conflict.
A conflict in the typical MAPF setting would mean that the agents' physical volumes intersect each other---a \emph{geometric collision}.
However, this formulation also allows for other types of conflicts.
It may for instance be that $i$ and $j$ are both airborne drones, and this configuration unacceptably places one agent in the downwash of the other. 
All application-specific reasons for joint states being a conflict or not are abstracted into $\inconflictfunc$. 
Thus, this formulation generalizes beyond agent volumes.

A \emph{conflict} $\tuple{\action^i, \action^j}$ is defined to mean that the states of $i$ and $j$ conflict while respectively executing $\action^i$ and $\action^j$.
That is, for some time $t$ when both $\action^i_\traj$ and $\action^j_\traj$ are defined, $\inconflict{i, \action^i_\traj(t), j, \action^j_\traj(t)}$ holds.
We use the mapping $\isconflict{\action^i, \action^j}\in\{\True, \False\}$ (\emph{is} conflict instead of \emph{in} conflict) to denote if $\tuple{\action^i, \action^j}$ is a conflict or not.
Similarly, we abuse the notion to let $\isconflict{\plan^i, \plan^j}$ denote if $\exists \tuple{\action^i, \action^j}\in\plan^i\times\plan^j$ such that $\isconflict{\action^i, \action^j}$.

\textbf{A joint plan}
$\JointPlan = \left\{ \plan^\agent \mid \agent\in\Agents \right\}$ contains one plan for each agent.
If all plans $\plan^\agent\in\JointPlan$ are valid, then $\JointPlan$ is valid.

\textbf{A solution}
is a joint plan $\JointPlan$ for which $\neg\isconflict{\plan^i, \plan^j}$ for all $\plan^i,\plan^j\in\JointPlan$.
An \emph{optimal solution} $\JointPlan^*$ is a solution minimising a \emph{global cost} function $\obj$ over all solutions.

As discussed in Section~\ref{sec:Background:CBS}, 
the correctness guarantees of \OCCBS rely on properties~\ref{OCCBS:property:soundness}--\ref{OCCBS:property:progress} being satisfied.  
This requires $\CTnodeLB{\CTnode} = \obj(\CTnode_\JointPlan)$ for every CT node $\CTnode$.
That is, no solution consistent with $\CTnode_\CTconst$ can have a lower cost than $\obj(\CTnode_\JointPlan)$.
For this to be satisfied, two things must hold.
First, we assume that $\obj(\JointPlan)$ is decomposable into a function over, and non-decreasing in, individual plan costs $\cost(\plan), \plan\in\JointPlan$.
Second, we assume that a low-level planner is available to compute a single-agent plan $\plan$ minimising $\cost(\plan)$ over all plans consistent with a given set of constraints $\CTnode_\CTconst$.
In our low-level planner, \oursTPSIPP (Section~\ref{sec:Method:TPSIPP}), our assumptions are consistent with the majority of the MAPF literature:
action costs equal their duration, $\action_\obj = \action_\dur$, and
a plan's cost is equal to the sum of its actions' costs, $\cost(\plan) = \sum_{\action\in\plan} \action_\obj$.
However, any other cost definitions may require adapting the low-level planner to guarantee that it returns an optimal single-agent plan.

\subsection{Model Generality}
\label{sec:ProblemFormulation:generality}

Based on the problem formulation above, each agent moves on a possibly unique graph $\Graph^\agent$ defined over a possibly unique state space $\StateSpace^\agent$. 
Nothing in this definition ties $\StateSpace^\agent$ to a spatial pose, a rigid body, or even a physical robot: 
a state space is an arbitrary set, and a graph is nothing more than states and inter-state trajectories the agent can realize. 
The formulation is therefore agnostic to what an agent \emph{is}---it may be a vehicle, a robotic arm operating in joint-configuration space, an aircraft, or an entity with no physical embodiment at all, such as a scheduled job for which $\inconflictfunc$ encodes resource contention rather than spatial overlap.

All differential and kinematic constraints on an agent's motion---nonholonomy, actuation limits, dynamics, whatever they may be---are captured entirely by which vertices and edges are present in $\Graph^\agent$, not by any separate mechanism in the plan or solution definitions.
Kinodynamic feasibility is thus a property of graph construction, not of planning: 
once $\Graph^\agent$ is fixed, every edge in it is by definition traversable, and the rest of the formulation (actions, tasks, plans) reasons only over this discrete abstraction. 
This separates the agent-specific, possibly continuous problem of generating feasible motion primitives from the agent-agnostic, discrete problem of coordinating them.

Conflicts are likewise defined without reference to what an agent is, only to the states two agents occupy: $\inconflict{i,\state_i,j,\state_j}$ takes just a pair of states as argument, internalizing whatever embodiment-specific test (e.g., geometric intersection of swept volumes, a downwash cone, an interference footprint)
determines whether that pairing is acceptable. 
Agent volumes are thus not primitive to the formulation, only one possible implementation of $\inconflictfunc$ for a particular class of agents. 
This is what admits heterogeneous agents in the first place: 
two agents need not share a state space, a motion model, or a notion of physicality, only a well-defined $\inconflictfunc$ function relating their respective states.

We restrict conflicts to be pairwise, and take $\inconflictfunc$ to be symmetric: $\inconflict{i,\state_i,j,\state_j} = \inconflict{j,\state_j,i,\state_i}$ for all arguments. 
This is not a claim that both agents are equally \emph{affected}, only that the joint configuration itself is unacceptable, regardless of which agent, if either, bears the consequence.
A drone flying through another's downwash may disturb only the former; the joint behavior is still what is considered unacceptable.
Symmetry follows because a conflict records only whether unacceptable behavior occurred, not who caused it or who suffered it.

Because each agent's state space may be unique, defining $\inconflict{i,\state_i,j,\state_j}$ requires being able to relate $\state_i$ and $\state_j$ to each other: 
it must be stated concretely what about their joint occupation of these states is unacceptable. 
We take this as a modeling obligation rather than a limitation of the formulation: 
a conflict that cannot be described in terms of the two agents' states is not a well-defined conflict.
%, regardless of whether a common representation happens to be convenient. 
This obligation grows with the number of distinct agent-type pairings in $\Agents$, but that cost is intrinsic to any formulation admitting heterogeneous agents and non-geometric conflicts, not an artifact of how this formulation is posed.

The final action of a valid plan is required to be an infinite wait action at a waitable vertex.
This requirement may seem like an artificial restriction; 
after all,
an agent having completed all of its tasks is free to do what it wants thereafter.
We impose it deliberately as a hard stand-in for a softer requirement that a solution must not leave the system in a configuration from which a conflict could still arise later. 
Terminating every plan in an indefinitely-waitable state (i.e., leaving the system at rest) closes off that possibility without requiring the solver to reason explicitly about an unbounded horizon past each agent's last action.

\section{Anytime-Optimal Continuous-Time Conflict-Based Search}
\label{sec:AOCCBS}

\ours extends \OCCBS to a richer problem formulation, with the main difference lying in the high-level search.

\OCCBS targets only optimal solutions. At each iteration, it selects the CT node minimizing the objective function for expansion. If that node is a solution, it is returned as optimal; otherwise, it spawns two children and the iteration ends. This continues until an expanded node represents a solution.

\OCCBS's design leaves much of the information already present in the CT unused. 
A node $\CTnode$ is selected because $\CTnodeLB{\CTnode}$ is currently known solution lower bound.
During expansion, two children are spawned, \OCCBS does not check whether any of these children represent a solution;
it a child does represent a solution $\JointPlan$, and $\obj(\JointPlan)=\CTnodeLB{\CTnode}$, then already here we know that $\JointPlan$ is optimal.
Instead of halting with this optimal solution, \OCCBS will continue iterating until the child node is selected.
Even if $\obj(\JointPlan) > \CTnodeLB{\CTnode}$, $\JointPlan$ may still be optimal, only that its optimality has not yet been confirmed.
Regardless, $\JointPlan$ can be made available as an incumbent with an optimality gap upper bound $\obj(\JointPlan) / \CTnodeLB{\CTnode}$.
Furthermore, \OCCBS is also inherently sequential, expanding one CT node at a time, which does not allow for the use of parallel computations.

\ours addresses both limitations. 
First, the child nodes at every expansion are checked for a solution; 
if one represents a solution that is better than the incumbent, then it replaces the incumbent.
Second, \ours allows for the expansion of multiple CT nodes simultaneously on separate processors, with the results of these expansions aggregated to maintain correctness guarantees (Section~\ref{sec:TheoreticalGuarantees} provides correctness proofs).
Informally, the nodes are selected and expanded according to a configurable \emph{portfolio}.
This portfolio consists of an arbitrary number of \emph{policies}, with each policy managing the expansion of one node. 
The policy operations are independent of each other, thus, opening the possibility for parallel computations.
A policy specifies a \emph{selection policy} that determines which of the CT leaf nodes are selected, and a \emph{set of repair methods} that are applied to the CT node's joint plan, in addition to the regular branching using \deltaBR.
\ours is agnostic to the selection policy and repair methods, hence it remains open to future repair methods. 
These kinds of configurable methods have been widely used in the literature, such as RHCR~\cite{RHCR} and LNS~\cite{li2021lns}, 
while the ability to incorporate other CT node selection policies than by minimum cost lower bound has been useful when searching for sub-optimal solutions fast~\cite{CohenAnytimeFocal, ECBS-CT}.

% Also common for these problems is to assume that each agent $\agent$ is given exactly one goal vertex at which a valid plan must end---consequently the cost $\cost(\plan_\agent)$ being the arrival time at the goal vertex coincides with the duration of $\plan_\agent$.
% In this problem, however, tasks consist of multiple sub-tasks and agent may continue to move after completing their task, such as to avoid obstructing other, active agents.
% For this reason, a plan's duration does not necessarily coincide with the task completion time.
% Considering this, there are multiple possible and practically motivated cost functions. 
% Some applications may value completing tasks as soon as possible, while others may aim to reduce the time that agents spend engaged with a plan.

\subsection{High-Level Search}
\label{sec:AOCCBS:highlevel}

The high-level search is done over the CT using a modular search portfolio $\Portfolio = \tuple{\policy^1, \policy^2, \dots, \policy^p}$ containing a finite number of policies.
Each policy $\policy^i\in\Portfolio$ may be unique, consisting of a selection policy $\policy^i_\policySelection$ and a set of repair methods $\policy^i_\policyRepairSet$:
\begin{itemize}
    \item \textbf{Selection policy}: $\policy^i_\policySelection$ controls which unexpanded CT node is selected for expansion. In \OCCBS, the default policy is to expand a node minimizing the global cost lower bound $\CTnodeLB{\CTnode}$.
    In contrast, \ours allows for any selection policy, including, for instance, random selection, selection by fewest collisions, selection by least amount of time in conflict, and so on. Note however that \ours's correctness guarantees (Section~\ref{sec:TheoreticalGuarantees}) requires at least one selection policy to be by minimum $\CTnodeLB{\CTnode}$, that is until an incumbent solution has been found, after which any selection policy can be used.
    
    \item \textbf{Set of repair methods}: $\policy^i_\policyRepairSet = \set{r^{i,1}, r^{i,2},\dots}$ contains a finite number of repair methods. Each repair method $r^{i,j}\in\policy^i$ takes an existing joint plan as input and produces either a solution---which may or may not be optimal---or a report of failure.
\end{itemize}
All policies in $\Portfolio$ are independent of each other, and all repair functions in $\policy_\policyRepairSet$ are also independent, allowing for parallel execution.

At iteration $k$ of \ours, the set $\NodeQueue_k$ contains all CT leaf nodes, and $\incumbentSolution_k$ denotes the best solution found so far, with $\objectiveUB_k = \obj\left(\incumbentSolution_k\right)$ if a solution has been found and $\objectiveUB_k = \infty$ otherwise.
Initially $\NodeQueue_0 = \left\{ \CTroot\right\}$ contains the root node.
Let $\objectiveLB_k = \min_{\CTnode\in\NodeQueue_k} \CTnodeLB{\CTnode}$ denote the minimum cost lower bound over all leaf nodes if $\NodeQueue_k\neq\varnothing$, otherwise $\objectiveLB_k = \objectiveUB_k$.
Each policy $\policy\in\Portfolio$, in turn, selects a node from $\NodeQueue_k$ to expand according to its selection rule $\policy_\policySelection$; 
once selected, a node becomes unavailable to subsequent policies. 
We denote by $\NodeSelection$ the set of nodes selected this way. 
This continues until every policy has selected a node or no selectable nodes remain in $\NodeQueue_k$.
Each policy $\policy$ then independently applies \deltaBR to its selected node, spawning two children; the set of all such children is denoted $\NewNodes$. 
Each policy also applies its repair methods in $\policy_\policyRepairSet$
to its selected CT node,
% ---depending on configuration, either to the parent's joint plan or to each child's joint plan---
and additionally checks whether each child's joint plan is a solution. 
Any solutions produced this way, whether through repair or from the children, are collected in the set $\NewSolutions$.
After all selected nodes have been processed, 
the incumbent solution for the next iteration is
\begin{equation*}
    \incumbentSolution_{k+1} = \argmin{\JointPlan \in \NewSolutions \cup \{\incumbentSolution_k\}} \obj\left( \JointPlan \right)
\end{equation*}
with the new upper bound $\objectiveUB_{k+1} = \obj\left(\incumbentSolution_{k+1}\right)$ and the set of unexpanded nodes
\begin{equation*}
    \NodeQueue_{k+1} = \left\{ \CTnode \in (\NodeQueue_{k} \setminus \NodeSelection) \cup \NewNodes \mid \CTnodeLB{\CTnode} < \objectiveUB_{k+1} \right\}
\end{equation*}
where all nodes with global cost lower bound greater or equal to $\objectiveUB_{k+1}$ pruned. 

Section~\ref{sec:TheoreticalGuarantees} shows that at every iteration $k$ of \ours on a solvable problem instance with optimal solution $\JointPlan^*$ and optimal objective value $\obj^*$,
it holds that 
\begin{itemize}
    \item $\objectiveLB_k \leq \obj^* \leq \objectiveUB_k$;
    \item $\gapUB = \objectiveUB_k / \objectiveLB_k - 1$ is a true optimality gap upper bound on $\incumbentSolution_k$;
    \item if $\NodeQueue_k = \varnothing$ then $\obj(\incumbentSolution_k)=\obj^*$ and $\incumbentSolution_k$ is optimal; otherwise
    \item if $\NodeQueue_k \neq \varnothing$ then $\NodeQueue_k$ contains a CT node from which $\JointPlan^*$ is reachable.
\end{itemize}
It is also shown that if either an incumbent solution has been found or at least one policy $\policy\in\Portfolio$ always selects a node $\CTnode\in\NodeQueue_k$ minimizing $\CTnodeLB{\CTnode}$, then \ours is guaranteed to return an optimal solution in a finite number of iterations.

Once an initial incumbent solution has been found, \ours provides a solution at any time with a known upper bound on its optimality gap. 
Given time, 
$\objectiveLB_k$ will progressively increase due to the expansion of CT nodes, 
and $\objectiveUB_k$ will progressively decrease due to better solutions being found (be that via the repair functions or due to the expansion of the CT nodes). 
Within finitely many iterations, these bounds will meet and an optimal solution will be found.

We make a final note on the portfolio: $\Portfolio$ does not need to be static over a search; each iteration of the search may use a unique portfolio. 
For instance, Section~\ref{sec:Experiments} details experiments using 
\begin{itemize}
    \item a portfolio for quickly finding an incumbent (mixed selection policies for increased diversity in the nodes being selected, and every policy using a repair function),
    \item then, at an increasing rate as \gapUB decreases, using a portfolio for raising the global solution cost lower bound $\objectiveLB$ as fast as possible (all policies selecting by minimum cost lower bound $\CTnodeLB{\CTnode}$, no repair function to decrease iteration time).
\end{itemize}
This provides merely one example of many possible non-static portfolios.

\subsubsection{Cardinal Conflicts}
\label{sec:AOCCBS:highlevel:conflict_selection}

A \emph{cardinal} conflict $\tuple{\action^i, \action^j}$ is a conflict which upon branching results in an increase in the plan costs $\cost(\plan^i)$ and $\cost(\plan^j)$ for each agent $i$ and $j$~\cite{ICBS}.
A \emph{semi-cardinal} conflict results in an increase for only one agent, 
and a \emph{non-cardinal} conflict does not increase either.
Branching on non-cardinal conflicts increases the number of leaf nodes without raising the CT objective lower bound.
Cardinal conflicts, on the other hand, contribute to raising the objective lower bound and are therefore preferred.  
When branching on a CT node, our \ours implementation supports four modes:
\begin{itemize}
    \item \ConflictSelectionArbitrary: selects an arbitrary conflict.
    \item \ConflictSelectionSemiCard: selects an arbitrary semi-cardinal or cardinal conflict. If none exist, an arbitrary conflict is selected.
    \item \ConflictSelectionBestof{n}: arbitrarily selects a set of at most $n$ conflicts and selects the best cardinal conflict, falling back to the best semi-cardinal conflict, falling back to an arbitrary conflict. We define the best conflict as one maximising over all conflicts the minimum plan cost increase over the two resulting child nodes.  
\end{itemize}
Note that $n=\infty$ returns the best conflict over all conflicts at a node, and $n=1$ reduces to arbitrary selection.
A lower $n$ makes the conflict selection faster, while a higher $n$ increases the chance of selecting a better conflict.

\subsection{Low-Level Path Planner}
\label{sec:Method:low-level}

Our low-level path planner \emph{Multi-Goal CSIPP} (\oursPP) extends CSIPP~\cite{CCBS}, itself based on SIPP~\cite{SIPP}, however incorporating the treatment of multiple sequential tasks in the style of~\cite{RHCR}, while also allowing for the handling of idle agents. 

As discussed in Section~\ref{sec:ProblemFormulation}, for the correctness guarantees to hold, the low-level path planner must find for a given agent the lowest-cost plan that is consistent with a given set of constraints. 
This means that the path planner is coupled with the definition of a plan's cost.
\oursPP assumes that an action's cost is equal to its duration, and that a plan's cost is equal to the sum of its actions' costs. 
Modifications may be required if other plan cost definitions are used. 

In \oursPP, when planning for agent $\agent\in\Agents$, 
a state $\tuple{\vertex, \SIPPsafeI_\vertex, \task}\in\Vertices^\agent\times(\Real\times\Real)\times(\Tasks^\agent\cup\{\None\})$
contains a vertex $\vertex$ and a safe interval $\SIPPsafeI_\vertex = [\SIPPstarttime_\vertex, \SIPPendtime_\vertex)$, as in SIPP and CSIPP, 
but also the next task $\task_i$ the agent must complete ($\None$ if no tasks remain). 
The \Astar search seeks a path from the agent's current state to a state $\tuple{\vertex,[\cdot,\infty),\None}$ with $\waitable{\vertex}$, i.e., a state in which all tasks are completed and the agent can safely remain at $\vertex$ indefinitely.

From a state $\tuple{\vertex, \SIPPsafeI_\vertex, \task_i}$ with arrival time $\SIPParrivalTime$, two kinds of transitions are possible:
\begin{enumerate}
    \item \emph{Move transitions} are constructed exactly as in CSIPP:
    for every edge $\edge$ with $\source{\edge}=\vertex$, the set of safe times to traverse $\edge$ and arrive at $\vertex' = \target{\edge}$ is
    \begin{equation}
        P_\edge = [\SIPParrivalTime, \SIPPendtime_\vertex) \cap S_\edge \cap S_{\vertex'}^{+\edge_\dur}
    \end{equation}
    where 
    $S_\edge$ is the set of safe times to traverse $\edge$ and 
    $S_{\vertex'}^{+\edge_\dur}$ is the set of safe times to occupy $\vertex'$ but shifted by $\edge_\dur$ to account for the agent arriving there only after traversing $\edge$.
    A transition to 
    $\tuple{\vertex', \SIPPsafeI_{\vertex'}, \task_i}$ 
    exists at the earliest time of each maximally connected interval in $P_\edge$.
    If $\task_i$ is a vertex task at $\vertex$, $P_\edge$ is additionally restricted to $[\SIPParrivalTime, \SIPParrivalTime + \task_{i,\dur})$:
    at $\SIPParrivalTime + \task_{i,\dur}$ the task is completed regardless, so later departures are already captured at state $\tuple{\vertex, \SIPPsafeI_\vertex, \task_{i+1}}$. 
    Otherwise, if $\task_i$ is an edge task on $\edge$, then all states reached by a transition along $\edge$ lead to a state containing the next task $\task_{i+1}$ (or $\None$ if $\task_i$ was the last).
    \item A \emph{vertex task completion transition} exists if $\task_i$ is a vertex task at $\vertex$ and $\SIPParrivalTime + \task_{i,\dur} < \SIPPendtime_\vertex$.
    This is a single transition to $\tuple{\vertex, \SIPPsafeI_\vertex, \task_{i+1}}$ with arrival time $\SIPParrivalTime + \task_{i,\dur}$.
    Move and task-completion transitions out of this new state are generated in the same way as any other state, so a task may be completed and the agent may immediately depart within the same safe interval.
\end{enumerate}

Let $\MinCost(\vertex,\vertex')$ be the minimum possible time to move from $\vertex\in\Vertices^\agent$ to $\vertex'\in\Vertices^\agent$ ignoring constraints and unsafe intervals. 
We define the heuristic at state $\tuple{\vertex, \SIPPsafeI_\vertex, \task_i}$ with arrival time $\SIPParrivalTime$ as
\begin{align*}
    f\left(\tuple{\vertex, \SIPPsafeI, \task_i}\right) &= \SIPParrivalTime + \MinCost\left(\vertex, \source{\task_i}\right) + \task_{i,\dur} \\ 
    &+ \sum_{j=i+1}^{|\Tasks^\agent|} \MinCost\left(\target{\task_{j-1}}, \source{\task_j}\right) + \task_{j,\dur}
\end{align*}
which captures the shortest possible time from state $\tuple{\vertex, \SIPPsafeI_\vertex, \task_i}$ for the agent to complete all remaining tasks.
In common \Astar terminology, $f = g + h$ where $g = \SIPParrivalTime$ is the cost so far and $h$ is the sum of the remaining terms with captures a lower bound on the remaining cost to a goal state.
% That is, $f$ lower-bounds the time to reach a goal state, since it corresponds to moving directly to $\task_i$ and completing all remaining tasks in sequence while ignoring unsafe intervals; no valid plan can do better. Hence $f$ is admissible.

\oursPP retains the two properties that make SIPP optimal: 
(i) the safe-interval decomposition is exact, so the state space contains every arrival time at which a distinct set of future transitions becomes available, and no reachable plan is excluded; 
(ii) $f$ is admissible since the estimated remaining cost to a goal state cannot be better than taking the shortest path through all tasks. 
Since \oursPP is an \Astar search over this state space with an admissible heuristic, it is guaranteed to return a plan of minimum duration (i.e., the earliest possible completion time of all tasks) whenever one exists.

A state with $\SIPPsafeI_\vertex=[\cdot,\infty)$ and $\task=\None$ requires no further move action and is a valid goal; an infinite-wait action is implicitly available there.

% \begin{enumerate}
%     \item Include pseudo-code
% \end{enumerate}

\subsection{Repair Method: Tier-Prioritized Safe Interval Path Planning}
\label{sec:Method:TPSIPP}

Tier-Prioritized SIPP (\oursTPSIPP) resembles PSIPP~\cite{PSIPP} in that both apply prioritized planning. Although PSIPP does not natively support multiple vertex and edge tasks, each with durations, this is easy to address by modifying the low-level planner, as shown with \oursPP in the previous section. 
PSIPP is nonetheless unsuitable here since it lacks a mechanism for moving an idle agent out of the way and instead plans each agent to a single, specific vertex.
For instance, consider a problem where two agents are each assigned a vertex task of non-infinite duration, and the two tasks conflict with each other. Without modification, PSIPP first plans one agent to its task and leaves it there, making the other agent's task inaccessible from the first agent's arrival time onward. 
When PSIPP then plans the second agent, it can only succeed if the second agent can reach its task, complete it, and clear the vertex before the first agent arrives---otherwise PSIPP fails to find a solution, even though one exists: the first agent could simply move out of the way once its own task is complete. Trying a different planning order does not fix this in general, and the number of possible orders grows exponentially in the number of agents.

\newcommand{\Deadline}{\text{Deadline}}
\newcommand{\Pin}{\text{Pin}}
\newcommand{\PlanTier}{\text{PlanTier}}
\newcommand{\PlanAgent}{\text{PlanAgent}}
\newcommand{\Failed}{\text{Failed}}
\newcommand{\Success}{\text{Success}}
\begin{algorithm}
\caption{Tier-Prioritized SIPP}
\label{alg:TPSIPP}
    \begin{algorithmic}[1]
        \Function{\oursTPSIPP}{$\JointPlan$}
            \State $P \gets \{ \agent\in\Agents \mid \plan^\agent\in\JointPlan \text{ is conflict-free}\}$ \Comment{Planned agents, mutable below} \label{alg:TPSIPP:b1s}
            \State $\JointPlan' \gets \{\plan^\agent\in\JointPlan \mid \agent\in P\}$ \Comment{Decided plans, mutable below}
            \State $U \gets \Agents \setminus P$ \Comment{Unplanned agents} \label{alg:TPSIPP:b1e}
            \State
            \For{$\agent \in U$} \Comment{Deadlines reflect only planned agents in $P$} \label{alg:TPSIPP:b2s}
                \State $\delta(\agent) \gets \Call{\Deadline}{\agent, \JointPlan'}$ \label{alg:TPSIPP:b2e}
            \EndFor
            \State $U^{0} \gets \Call{RandomOrder}{\tuple{\agent\in U \mid \delta(\agent) = 0}}$ \label{alg:TPSIPP:b3s}
            \State $U^{(0, \infty)} \gets \Call{IncreasingOrder}{\tuple{ \agent\in U \mid 0 < \delta(\agent) < \infty }, \delta}$
            \State $U^{\infty} \gets \Call{RandomOrder}{\tuple{ \agent\in U \mid \delta(\agent) = \infty }}$ \label{alg:TPSIPP:b3e}
            \State
            \If{$\Call{\PlanTier}{U^{0}} = \Failed$} \Comment{Plan severely constrained agents}
                \State \Return \Failed
            \EndIf
            \If{$\Call{\PlanTier}{U^{(0,\infty)}} = \Failed$} \Comment{Plan constrained agents}
                \State \Return \Failed
            \EndIf
            \If{$\Call{\PlanTier}{U^\infty} = \Failed$} \Comment{Plan unconstrained agents}
                \State \Return \Failed
            \EndIf
            \State \Return $\JointPlan'$ \label{alg:TPSIPP:return}
        \EndFunction
        \State
        \Function{\Deadline}{$\agent, \JointPlan'$}
            \If{$\neg\waitable{\AgentStart^\agent}$}
                \Return $0$
            \EndIf
            \State \Return the safe interval end at $\AgentStart^\agent$ starting at $t=0$, w.r.t.\ $\JointPlan'$
        \EndFunction
        \State
        \Function{\PlanTier}{$U'$}
            \State marker$=\None$
            \While{$U'\neq\varnothing$}
                \State $\agent = U'.\Call{PopFront}{}$ \label{alg:TPSIPP:plan1}
                \If{marker$=\agent$}
                    \Return \Failed \Comment{$U'$ cycled without change}
                \EndIf
                \State
                \State $\plan^\agent = \Call{\PlanAgent}{\agent, \JointPlan', P}$ \label{alg:TPSIPP:plan2}
                \If{$\plan^\agent \neq \Failed$}
                    \State $\JointPlan'[\agent] \gets \plan^\agent$
                    \State $P \gets P \cup \{\agent\}$
                    \State marker$=\None$
                \Else
                    \State $U'.\Call{AppendToBack}{\agent}$
                    \If{marker$=\None$}
                        marker$=\agent$
                    \EndIf
                \EndIf
            \EndWhile
            \State \Return \Success
        \EndFunction
    \end{algorithmic}
\end{algorithm}

\oursTPSIPP is a repair function that takes a valid but non-solution joint plan $\JointPlan$ as input and returns either a repaired solution $\JointPlan'$ or failure.
The pseudo-code is found in Algorithm~\ref{alg:TPSIPP}, on which the following description relies.
Since $\JointPlan$ is valid, no conflict-free plan $\plan \in \JointPlan$ blocks another agent from completing its tasks. 
If it did, $\plan$ would conflict with that agent's own valid plan and therefore not be conflict-free. 
So every conflict-free plan in $\JointPlan$ is copied directly into $\JointPlan'$, and only the conflicting agents are left unplanned (lines~\ref{alg:TPSIPP:b1s}--\ref{alg:TPSIPP:b1e}); 
the better $\JointPlan$ is to begin with, the fewer repairs are needed.
Each unplanned agent has a deadline: the latest time it may still occupy its starting vertex, determined by whether that vertex is waitable and, if so, whether some plan already in $\JointPlan'$ eventually conflicts with it (lines~\ref{alg:TPSIPP:b2s}--\ref{alg:TPSIPP:b2e}). 
Unplanned agents are grouped into three tiers by this deadline: $t=0$
(severely constrained), $t \in (0, \infty)$ (constrained), and $t = \infty$ (unconstrained) (lines~\ref{alg:TPSIPP:b3s}--\ref{alg:TPSIPP:b3e}).
Starting with the severely constrained tier, the agents are arbitrarily ordered and an agent $\agent$ is selected (line~\ref{alg:TPSIPP:plan1}).
A valid plan $\plan^\agent$ is sought using \oursPP that 
(i) avoids conflicts with plans already in $\JointPlan'$, 
(ii) avoids conflicting with any still-unplanned agent at that agent's starting vertex before its deadline, and 
(iii) does not end at a vertex that conflicts with any still-unplanned agent's tasks (line~\ref{alg:TPSIPP:plan2}). 
Together, these conditions guarantee that $\plan^\agent$ leaves every remaining unplanned agent free to use its full allotted time at its start, and does not permanently block any of their tasks. 
If such a plan is found, it is added to $\JointPlan'$ and $\agent$ is no longer unplanned. 
However, unlike PSIPP, if such a plan cannot be found then instead of reporting failure, \oursTPSIPP requeues the agent and the tier continues, so failure only occurs once every agent in the tier has been cycled through without a single plan being found.
Once the entire tier is planned, the same procedure repeats for the constrained tier (agents taken in increasing order of deadline) and then the unconstrained tier (agents taken in arbitrary order).
Since plans are only ever added to $\JointPlan'$ if they do not conflict with any existing plan in $\JointPlan'$, we know that $\JointPlan'$ is a solution if it contains a plan for every agent.
At that point, $\JointPlan'$ is returned (line~\ref{alg:TPSIPP:return}).

When using \oursTPSIPP as a repair function in the high-level search's portfolio, it is applied to many different CT nodes and their possibly unique joint plans.
Additionally, the severely constrained and unconstrained agents are ordered arbitrarily and can therefore be ordered, e.g., randomly. 
This introduces substantial variation in how repairs unfold across calls and therefore provides many chances throughout solving a problem to find a solution.

\section{Theoretical Guarantees}
\label{sec:TheoreticalGuarantees}

In this work, we say that an algorithm is
\begin{itemize}
    \item \textbf{sound} if it only returns solutions,
    \item \textbf{exact} if it only returns optimal solutions, and 
    \item \textbf{solution complete} if it returns a solution in finite time if one exists.
\end{itemize}
Observe that exactness implies soundness, and that a solution complete algorithm is not required to report the non-existence of a solution while a \emph{complete} algorithm is.
We show in the section that \ours is exact and solution complete.

Just as in \OCCBS, this work makes explicit that real values cannot be represented exactly under finite-precision arithmetic. 
Since our implementation (like all implementations) is subject to a precision $\epsmach$, our guarantees are given in that context. 
Thus, it is possible that a solver in this setting is unable to find solutions that would require a finer precision than what the underlying machine can provide, 
or that a returned solution actually contains conflicts on the time scale of a few orders of magnitude above the machine's finest precision. 
Nonetheless, we argue that this accuracy is well beyond what most real-world applications require, where, for instance, sensor and actuator noise would dominate. 
In Section~\ref{sec:Implementation:conflict_information}, we discuss how conflict information is practically computed and stored for use during runtime.
Since this conflict information is the sole connection between \ours and the specific application that it plans over, all correctness guarantees are with respect to the accuracy of this conflict information.

The remainder of this section provides various useful properties of \ours, including the conditions required for guaranteed termination in a finite number of iterations with an optimal solution.
The invariant (Theorem~\ref{theorem:AOCCBS:invariant}) shows that at every iteration, the search either still has access to an optimal solution through some unexpanded leaf node, or has already found one; 
termination (Corollary~\ref{cor:AOCCBS:Termination}) shows that once the node queue empties, this invariant forces the incumbent to be optimal; 
and bound correctness (Corollary~\ref{cor:AOCCBS:BoundCorrectness}) confirms that the lower and upper bounds maintained throughout the search are valid, giving a meaningful optimality gap at any point of interruption.
The remaining two theorems then give the conditions under which an optimal solution guaranteed in a finite number of iterations:
either an incumbent solution has been found (Theorem~\ref{theorem:AOCCBS:IncumbentDrivenTermination}), 
or at least one policy in the portfolio always expands the node minimizing the lower bound (Theorem~\ref{theorem:AOCCBS:PolicyDrivenTermination}), inheriting termination guarantees from \OCCBS.

All proofs assume that \ours is applied to a solvable problem instance, with  $\JointPlan^*$ being an optimal solution and $\obj^* = \obj(\JointPlan^*)$.
Recall the following notation from Sections~\ref{sec:Background:OCCBS} and~\ref{sec:AOCCBS:highlevel}:
for a CT node $\CTnode$, 
$\CTreachableSolutions{\CTnode}$ is the set of all solutions permitted under $\CTnode_\CTconst$;
$\CTnodeLB{\CTnode} \leq \min_{\JointPlan\in\CTreachableSolutions{\CTnode}} \obj(\JointPlan)$ is a cost lower bound over all solutions in $\CTreachableSolutions{\CTnode}$;
$\incumbentSolution_k$ is the best-found incumbent solution at iteration $k$, with $\objectiveUB_k = \obj(\incumbentSolution_k)$;
$\objectiveLB_k = \min_{\CTnode\in\NodeQueue_k} \CTnodeLB{\CTnode}$ if $\NodeQueue_k\neq\varnothing$, otherwise $\objectiveLB_k = \objectiveUB_k$;
and $\gapUB_k = \objectiveUB_k / \objectiveLB_k - 1$ is an upper bound on $\incumbentSolution_k$'s optimality gap.

\begin{theorem}[\ours Invariant]
    \label{theorem:AOCCBS:invariant}
    At the start of every iteration $k$ of \ours, 
    either
    $\exists \CTnode^* \in \NodeQueue_k: \JointPlan^* \in \CTreachableSolutions{\CTnode^*}$ (Reachability),
    $\CTnodeUB_k=\obj^*$ (Optimality),
    or both are true.
\end{theorem}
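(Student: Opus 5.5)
We proceed by induction on the iteration index $k$, proving the disjunction ``Reachability or Optimality'' directly. For the base case $k=0$, $\NodeQueue_0=\set{\CTroot}$ and $\CTroot_\CTconst=\varnothing$, so every solution, in particular $\JointPlan^*$, lies in $\CTreachableSolutions{\CTroot}$; Reachability holds with $\CTnode^*=\CTroot$.

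For the inductive step, assume the invariant holds at the start of iteration $k$. Two facts will be used repeatedly. First, $\objectiveUB$ is non-increasing, since $\incumbentSolution_{k+1}$ is an $\arg\min$ over a set containing $\incumbentSolution_k$. Second, every incumbent is a genuine solution, so $\objectiveUB_k\geq\obj^*$ always. From these, if Optimality holds at iteration $k$ ($\objectiveUB_k=\obj^*$), then $\obj^*\leq\objectiveUB_{k+1}\leq\objectiveUB_k=\obj^*$, and Optimality persists. The second fact follows because the children checked for being solutions are only accepted when their joint plan is conflict-free, and repair methods return either a solution or failure.

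It remains to treat the case where Reachability holds via some $\CTnode^*\in\NodeQueue_k$ but $\objectiveUB_k>\obj^*$. We split on whether $\CTnode^*$ was selected.

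If $\CTnode^*\notin\NodeSelection$, then $\CTnode^*\in\NodeQueue_k\setminus\NodeSelection$. If $\CTnode^*$ is pruned, then $\CTnodeLB{\CTnode^*}\geq\objectiveUB_{k+1}$. But $\CTnodeLB{\CTnode^*}\leq\obj(\JointPlan^*)=\obj^*$ by the lower-bound property, since $\JointPlan^*\in\CTreachableSolutions{\CTnode^*}$. Hence $\objectiveUB_{k+1}\leq\obj^*$, which forces $\objectiveUB_{k+1}=\obj^*$, so Optimality holds at $k+1$. Otherwise $\CTnode^*\in\NodeQueue_{k+1}$ and Reachability holds.

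If $\CTnode^*\in\NodeSelection$, it is expanded by \deltaBR into children $\CTnode^i,\CTnode^j$. By property~\ref{OCCBS:property:soundness}, which we take from \OCCBS as a property of \deltaBR and which is inherited here, $\JointPlan^*$ lies in $\CTreachableSolutions{\CTnode^i}$ or $\CTreachableSolutions{\CTnode^j}$; call that child $\CTnode'$. Then $\CTnode'\in\NewNodes$, and the same pruning dichotomy applies: either $\CTnode'$ survives into $\NodeQueue_{k+1}$ (Reachability), or $\objectiveUB_{k+1}\leq\CTnodeLB{\CTnode'}\leq\obj^*$ (Optimality). Expansions by different policies happen on distinct nodes, so aggregating them in parallel does not interfere with this argument.

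The main obstacle is justifying that property~\ref{OCCBS:property:soundness} and the lower-bound property $\CTnodeLB{\CTnode}\leq\min_{\JointPlan\in\CTreachableSolutions{\CTnode}}\obj(\JointPlan)$ carry over to the generalised formulation. The generalisations are per-agent graphs, non-waitable vertices, task sequences, and final infinite waits. Soundness requires that \deltaBR's two constraints still cover every solution that resolves the branched conflict, which needs rechecking when a waiting agent's vertex is non-waitable or the wait is a task service. The lower bound requires \oursPP to be optimal under constraints, which the paper argues via admissibility of $f$ and exactness of the safe-interval decomposition. I would cite \OCCBS for the \deltaBR case analysis, note that it depends only on pairs of actions and the abstract $\inconflictfunc$, and rely on the optimality claim for \oursPP stated in Section~\ref{sec:Method:low-level}.
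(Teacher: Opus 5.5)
Your proof is correct and follows the paper's induction essentially step for step. It has the same base case at $\CTroot$, keeps Optimality via $\objectiveUB_{k+1}\le\objectiveUB_k$ together with $\objectiveUB\ge\obj^*$, and splits on whether $\CTnode^*$ is selected, using \ref{OCCBS:property:soundness} to pass reachability to a child. The only difference is cosmetic: you handle pruning by contrapositive (a pruned node forces $\objectiveUB_{k+1}\le\obj^*$) rather than splitting up front on whether $\objectiveUB_{k+1}=\obj^*$, which also avoids the paper's non-strict $\CTnodeLB{\CTnode^c}\le\objectiveUB_{k+1}$ in its subcase (ii).
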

\begin{proof}
    The invariant is proven by induction:
    \begin{itemize}
        \item \textit{Base Case}: 
        At $k=0$ we have $\NodeQueue_0 = \{\CTroot\}$.
        Since the empty constraint set $\CTroot_\CTconst=\varnothing$ permits all solutions, $\JointPlan^* \in \CTreachableSolutions{\CTroot}$.
        Reachability holds at $k=0$.
    
        \item \textit{Inductive step}:
        Assume the invariant holds at the start of iteration $k$.
        Recall from Section~\ref{sec:AOCCBS:highlevel} that during iteration $k$, 
        a set of nodes $\NodeSelection\subseteq\NodeQueue_k$ is selected for expansion, 
        producing a set of child nodes nodes $\NewNodes$ and a set of newly found solutions $\NewSolutions$.
        At the start of iteration $k+1$, the incumbent solution is
        \begin{equation*}
            \incumbentSolution_{k+1} = \argmin{\JointPlan \in \NewSolutions \cup \{\incumbentSolution_k\}} \obj\left( \JointPlan \right)
        \end{equation*}
        and the set of unexpanded nodes is
        \begin{equation*}
            \NodeQueue_{k+1} = \left\{ \CTnode \in (\NodeQueue_{k} \setminus \NodeSelection) \cup \NewNodes \mid \CTnodeLB{\CTnode} < \CTnodeUB_{k+1} \right\}
        \end{equation*}
        where all nodes $\CTnode$ with $\CTnodeLB{\CTnode} \geq \CTnodeUB_{k+1}$ are pruned.
        There are three cases:
        \begin{enumerate}
            \item If $\CTnodeUB_k = \obj^*$, then $\CTnodeUB_{k+1} = \obj^*$ 
            since no solution can have a cost lower than $\obj^*$. Optimality holds at the start of iteration $k+1$.
            
            \item If instead $\CTnodeUB_k > \obj^*$ but during the expansion a solution with cost $\obj^*$ was found, then $\CTnodeUB_{k+1} = \obj^*$. Optimality holds at the start of iteration $k+1$.
            
            \item Otherwise, $\CTnodeUB_k \geq \CTnodeUB_{k+1} > \obj^*$. 
            Then optimality does not hold, and therefore by the invariant there exists $\CTnode^* \in \NodeQueue_k$ such that $\JointPlan^* \in \CTreachableSolutions{\CTnode^*}$. This gives two subcases:
            \begin{enumerate}[label=(\roman*)]
                \item \textbf{$\CTnode^*$ is not expanded}: $\CTnode^*\not\in\NodeSelection$.
                Since $\JointPlan^*\in\CTreachableSolutions{\CTnode^*}$ and $\obj^* <\CTnodeUB_{k+1}$, 
                \begin{equation*}
                    \CTnodeLB{\CTnode^*} \leq \obj(\JointPlan^*) = \obj^* < \CTnodeUB_{k+1}.
                \end{equation*}
                Hence, $\CTnode^*$ was not pruned and therefore $\CTnode^*\in\NodeQueue_{k+1}$.
                Reachability holds at the start of iteration $k+1$.
                
                \item \textbf{$\CTnode^*$ is expanded}: $\CTnode^*\in\NodeSelection$.
                Since no solutions are removed during the expansion of a node, 
                there exists a child $\CTnode^c$ of $\CTnode^*$ such that 
                \begin{equation*}
                    \JointPlan^*\in\CTreachableSolutions{\CTnode^c},
                \end{equation*}
                meaning that
                \begin{equation*}
                    \CTnodeLB{\CTnode^c} \leq \obj(\JointPlan^*) = \obj^* \leq \CTnodeUB_{k+1}.
                \end{equation*}
                Thus, $\CTnode^c$ is not pruned and therefore $\CTnode^c\in\NodeQueue_{k+1}$. Reachability holds at the start of iteration $k+1$.
            \end{enumerate}
        \end{enumerate}
        In all cases, at the start of $k+1$, either optimality holds or reachability holds. Thus, the invariant is preserved.
    \end{itemize}
    By induction, the invariant holds at the start of every iteration.
\end{proof}

\begin{corollary}[\ours Termination]
    \label{cor:AOCCBS:Termination}
    If $\NodeQueue_k=\varnothing$, then $\CTnodeUB_k = \obj^*$ and $\incumbentSolution_k$ is an optimal solution.
\end{corollary}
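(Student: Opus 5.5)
The plan is to apply Theorem~\ref{theorem:AOCCBS:invariant} at iteration $k$ and rule out its Reachability disjunct. If $\NodeQueue_k=\varnothing$, no node $\CTnode^*\in\NodeQueue_k$ can satisfy $\JointPlan^*\in\CTreachableSolutions{\CTnode^*}$, so Reachability fails trivially. The invariant then forces Optimality, namely $\CTnodeUB_k=\obj^*$.

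It remains to show that $\incumbentSolution_k$ is an actual optimal solution, and not merely that a number equals $\obj^*$. I will first note that $\CTnodeUB_k$ is finite, since $\obj^*$ is a real cost. By the definition in Section~\ref{sec:AOCCBS:highlevel}, $\CTnodeUB_k=\infty$ exactly when no solution has yet been found. Hence an incumbent exists and $\CTnodeUB_k=\obj(\incumbentSolution_k)$.

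Next I will check that $\incumbentSolution_k$ is a genuine solution. The incumbent update takes an $\arg\min$ over $\NewSolutions\cup\{\incumbentSolution_{k-1}\}$. By construction, $\NewSolutions$ contains only joint plans that are solutions: either children verified to be conflict-free, or outputs of repair methods, which by definition return solutions or report failure. A short induction over iterations therefore shows that every incumbent is a solution. Combining this with $\obj(\incumbentSolution_k)=\obj^*$, where $\obj^*$ is the minimum of $\obj$ over all solutions, gives that $\incumbentSolution_k$ is optimal.

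The only delicate point is the case where the queue is empty at $k=0$ with no incumbent. This cannot happen, because $\NodeQueue_0=\{\CTroot\}$. More generally, the convention $\objectiveLB_k=\objectiveUB_k$ for an empty queue plays no role here. What matters is that the invariant excludes $\CTnodeUB_k=\infty$ once the queue is empty on a solvable instance, and the finiteness step above handles this.
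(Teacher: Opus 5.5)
Your proof is correct and follows the paper's argument: an empty $\NodeQueue_k$ makes the Reachability disjunct of Theorem~\ref{theorem:AOCCBS:invariant} impossible, so Optimality, $\CTnodeUB_k=\obj^*$, must hold. Your extra steps spell out what the paper's one-line proof leaves implicit when it passes from $\CTnodeUB_k=\obj^*$ to optimality of $\incumbentSolution_k$: finiteness of $\CTnodeUB_k$ forces an incumbent to exist, and an induction over the incumbent updates shows that every incumbent is a genuine solution.
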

\begin{proof}
    The invariant from Theorem~\ref{theorem:AOCCBS:invariant} holds for this iteration $k$.
    However, since reachability is impossible, optimality must hold.
    That is, $\CTnodeUB_k = \obj^*$ and $\incumbentSolution_k$ is an optimal solution.
\end{proof}

\begin{corollary}[\ours Bound Correctness]
    \label{cor:AOCCBS:BoundCorrectness}
    At every iteration $k$, 
    the optimal solution lower and upper bounds $\objectiveLB_k$ and $\objectiveUB_k = \obj(\incumbentSolution_k)$,
    and the incumbent solution $\incumbentSolution_k$'s optimality gap upper bound $\gapUB_k = \objectiveUB_k/\objectiveLB_k - 1$, are correct.
\end{corollary}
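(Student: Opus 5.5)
The plan is to show separately that $\objectiveLB_k \leq \obj^* \leq \objectiveUB_k$, and then read off the gap bound from this sandwich.

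\textbf{Upper bound.} The upper bound is immediate whenever an incumbent exists. $\incumbentSolution_k$ is a solution, and $\obj^*$ is the minimum of $\obj$ over all solutions, so $\obj^* \leq \obj(\incumbentSolution_k) = \objectiveUB_k$. If no incumbent has been found yet, then $\objectiveUB_k = \infty$ and the inequality holds trivially.

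\textbf{Lower bound.} Here I will invoke the invariant of Theorem~\ref{theorem:AOCCBS:invariant} at iteration $k$ and split on whether $\NodeQueue_k$ is empty.
\begin{itemize}
\item If $\NodeQueue_k = \varnothing$, then by definition $\objectiveLB_k = \objectiveUB_k$. Corollary~\ref{cor:AOCCBS:Termination} gives $\objectiveUB_k = \obj^*$, so $\objectiveLB_k = \obj^*$.
\item If $\NodeQueue_k \neq \varnothing$ and Reachability holds, take $\CTnode^* \in \NodeQueue_k$ with $\JointPlan^* \in \CTreachableSolutions{\CTnode^*}$. Then $\objectiveLB_k \leq \CTnodeLB{\CTnode^*} \leq \min_{\JointPlan \in \CTreachableSolutions{\CTnode^*}} \obj(\JointPlan) \leq \obj^*$.
\item If $\NodeQueue_k \neq \varnothing$ and only Optimality holds, then every remaining node survived pruning, so $\CTnodeLB{\CTnode} < \objectiveUB_k = \obj^*$ for each of them. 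Hence $\objectiveLB_k < \obj^*$.
\end{itemize}
The last case needs one caveat: the pruning condition was applied when the node entered or stayed in the queue, using the upper bound of that iteration. Since the upper bound is non-increasing, the check $\CTnodeLB{\CTnode} < \objectiveUB_k$ still holds. At $k=0$, $\objectiveUB_0 = \infty$ and the claim is trivial.

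\textbf{Gap bound.} With $\objectiveLB_k \leq \obj^* \leq \objectiveUB_k$ established, the true gap satisfies
\begin{equation*}
\obj(\incumbentSolution_k)/\obj^* - 1 \;\leq\; \objectiveUB_k/\objectiveLB_k - 1 \;=\; \gapUB_k .
\end{equation*}
This follows by dividing the sandwich, provided the costs are positive.

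I expect this positivity requirement to be the main obstacle. It holds under the paper's cost model, where action costs are durations and plan costs are sums, together with a non-decreasing $\obj$. I will state it explicitly, noting that if $\objectiveLB_k = 0$ the ratio is read as $\infty$ and the bound holds vacuously. I will also note that when $\objectiveUB_k = \infty$ the gap bound is vacuous, so the statement is only meaningful once an incumbent exists.
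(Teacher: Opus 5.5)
Your proof is correct and follows the paper's own argument. The upper bound holds because $\incumbentSolution_k$ is a solution. The lower bound comes from the invariant of Theorem~\ref{theorem:AOCCBS:invariant}: Reachability gives $\objectiveLB_k \leq \CTnodeLB{\CTnode^*} \leq \obj^*$. Optimality gives either $\objectiveLB_k = \objectiveUB_k = \obj^*$ when $\NodeQueue_k = \varnothing$, or $\objectiveLB_k < \objectiveUB_k = \obj^*$ by the pruning rule. The gap bound then follows by dividing the sandwich, and your positivity caveat is a legitimate refinement that the paper leaves implicit.

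One correction: the monotonicity remark in your pruning caveat runs the wrong way, since $\CTnodeLB{\CTnode} < \objectiveUB_j$ and $\objectiveUB_k \leq \objectiveUB_j$ do not give $\CTnodeLB{\CTnode} < \objectiveUB_k$. What actually delivers this is that the definition of $\NodeQueue_{k}$ re-filters every surviving node against $\objectiveUB_{k}$ itself, so no appeal to monotonicity is needed.
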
    
\begin{proof}
    If no incumbent solution $\incumbentSolution_k$ has been found then $\objectiveUB_k = \infty$.
    The cost of any solution cannot be greater than $\infty$.
    Otherwise, $\objectiveUB_k=\obj(\incumbentSolution_k)$, and it must hold that $\obj(\JointPlan^*) \leq \objectiveUB_k$, else $\incumbentSolution_k$ provides a counterexample. 
    Thus, at every iteration $k$, $\objectiveUB_k$ is a true upper bound on the optimal cost.

    If \emph{optimality} in the invariant of Theorem~\ref{theorem:AOCCBS:invariant} holds,
    such that $\objectiveUB_k = \obj^*$,
    then either $\NodeQueue_k=\varnothing$ and $\objectiveLB_k = \objectiveUB_k$ which is a true lower bound on $\obj^*$,
    or $\NodeQueue_k\neq\varnothing$ and $\objectiveLB_k = \min_{\CTnode\in\NodeQueue_k} \CTnodeLB{\CTnode} < \objectiveUB_k$ 
    since $\forall\CTnode\in\NodeQueue_k: \CTnodeLB{\CTnode} < \objectiveUB_k$.
    If instead \emph{optimality} in the invariant of Theorem~\ref{theorem:AOCCBS:invariant} doesn't hold, then \emph{reachability} must hold: $\exists\CTnode^*\in\NodeQueue_k: \JointPlan^*\in\CTreachableSolutions{\CTnode^*}$. Then $\objectiveLB_k = \min_{\CTnode\in\NodeQueue_k} \CTnodeLB{\CTnode} \leq \CTnodeLB{\CTnode^*} \leq \obj^*$.
    Thus, at every iteration $k$, $\objectiveLB_k$ is a true lower bound on the optimal cost.

    Finally, since $\objectiveLB \leq \obj^*$, 
    it must hold that 
    \begin{equation*}
        \frac{\obj(\incumbentSolution_k)}{\objectiveLB_k} = \frac{\objectiveUB_k}{\objectiveLB_k} \geq \frac{\objectiveUB_k}{\obj^*}
    \end{equation*}
    Thus, at every iteration $k$, $\gapUB_k = \objectiveUB_k / \objectiveLB_k - 1$ is a true upper bound on the optimality gap of the incumbent solution $\incumbentSolution_k$.
\end{proof}

\begin{theorem}[Incumbent-driven Termination]
    \label{theorem:AOCCBS:IncumbentDrivenTermination}
    If an incumbent solution $\incumbentSolution_k$ has been found,
    \ours is guaranteed to terminate in a finite number of iterations with an optimal solution.
\end{theorem}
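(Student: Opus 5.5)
Once an incumbent $\incumbentSolution_k$ exists, $\objectiveUB_{k'} \leq \objectiveUB_k < \infty$ for all $k'\geq k$, and every node $\CTnode$ with $\CTnodeLB{\CTnode}\geq\objectiveUB_{k'}$ is pruned from $\NodeQueue_{k'+1}$. I will show that $\NodeQueue$ must become empty after finitely many iterations, and then apply Corollary~\ref{cor:AOCCBS:Termination} to conclude that the incumbent at that point is optimal. Emptiness will follow from König's lemma applied to the CT restricted to nodes whose lower bound is below the finite threshold $\objectiveUB_k$, combined with~\ref{OCCBS:property:progress}.

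\textbf{Step 1 (bounded subtree).} Let $T$ be the set of CT nodes ever generated after iteration $k$ that survive pruning, i.e.\ have $\CTnodeLB{\CTnode} < \objectiveUB_k$, together with their ancestors. Because \deltaBR produces exactly two children per expansion, $T$ is finitely branching. Suppose $T$ were infinite. By König's lemma it contains an infinite root path $\CTnode_1,\CTnode_2,\dots$. Apply~\ref{OCCBS:property:progress} with $c=\objectiveUB_k$: there is an index $m$ with $\CTnodeLB{\CTnode_i} > \objectiveUB_k$ for all $i\geq m$. This contradicts the requirement that nodes deep on the path be in $T$. I will need care here, since a node can belong to $T$ only as an ancestor. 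However, a node is generated only when its parent is expanded, and expansion requires the parent to be in the queue, so every expanded node after iteration $k$ satisfies $\CTnodeLB{\CTnode}<\objectiveUB_k$. An infinite path in $T$ requires infinitely many expanded nodes along it, and these contradict~\ref{OCCBS:property:progress}. Hence only finitely many nodes are ever expanded or enqueued after iteration $k$.

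\textbf{Step 2 (progress per iteration).} At each iteration with $\NodeQueue_{k'}\neq\varnothing$, at least one policy selects a node, because $\Portfolio$ is nonempty and each policy selects while selectable nodes remain. The selected node is removed from the queue and never re-enters it. A node expanded in one iteration is never expanded again, so each iteration consumes at least one node of the finite set from Step 1. Therefore, after finitely many iterations, $\NodeQueue$ is empty.

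\textbf{Step 3.} Corollary~\ref{cor:AOCCBS:Termination} then gives $\objectiveUB=\obj^*$, so the incumbent is optimal. Each iteration itself terminates, given finite portfolios and repair sets and a low-level planner that halts on each call, which is assumed as in \OCCBS.

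The main obstacle I expect is Step 1: applying~\ref{OCCBS:property:progress}, which is stated for infinite paths, to derive finiteness of the explored tree. The key point is that selection is arbitrary here, since policies need not be best-first. Finiteness must therefore come from pruning against the fixed finite $\objectiveUB_k$ rather than from the selection order, and the König argument has to be phrased over expanded nodes.
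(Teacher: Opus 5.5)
Your proposal is correct and takes the paper's route: apply \ref{OCCBS:property:progress} with the finite threshold $c=\objectiveUB_k$, use pruning to force $\NodeQueue$ to empty after finitely many iterations, and then invoke the invariant (via Corollary~\ref{cor:AOCCBS:Termination}) to conclude optimality. Your K\"onig's-lemma argument over expanded nodes, together with your observation that each iteration permanently removes at least one node from a finite set, makes rigorous a step the paper only asserts (that every queued node is eventually pruned or expanded into pruned children, so $\NodeQueue_K=\varnothing$); this matters because the selection policies need not be best-first.
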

\begin{proof}
    Termination follows from the CT in \ours satisfying property~\ref{OCCBS:property:progress} as a consequence of using \deltaBR:
    for every infinite path $\CTnode_1, \CTnode_2, \CTnode_3,\dots$ in the CT (where $\CTnode_{i}$ is the parent of $\CTnode_{i+1}$), and any $c\in\Real$, 
    \begin{equation*}
        \exists i \in \Natural: \forall j\geq i: c < \CTnodeLB{\CTnode_j}.
    \end{equation*}
    This means that for any descending path in the CT, eventually the lower bound on the cost will be pushed beyond some threshold $c$. 
    Consider here that $c = \objectiveUB_k$.
    Since all nodes $\CTnode$ with $\CTnodeLB{\CTnode} \geq \objectiveUB_k$ are pruned and therefore not included in $\NodeQueue_k$, every node in $\NodeQueue_k$ will either be pruned or expanded to eventually reach children that are pruned.
    Consequently, $\NodeQueue_K=\varnothing$ for some bounded iteration $K$.
    At that point, by the invariant in Theorem~\ref{theorem:AOCCBS:invariant}, the incumbent $\incumbentSolution_K$ will be an optimal solution.
\end{proof}

\begin{theorem}[Policy-driven Termination]
    \label{theorem:AOCCBS:PolicyDrivenTermination}
    If there exists at least one policy $\policy\in\Portfolio$ that always selects $\CTnode\in\NodeQueue_k$ minimizing $\CTnodeLB{\CTnode}$ for expansion, then \ours is guaranteed to terminate in a finite number of iterations with an optimal solution.
\end{theorem}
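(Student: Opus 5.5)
The plan is to reduce the statement to Theorem~\ref{theorem:AOCCBS:IncumbentDrivenTermination}. That theorem already guarantees termination with an optimal solution in finitely many further iterations once some incumbent exists. So it suffices to show that, with a policy always selecting a node minimizing $\CTnodeLB{\CTnode}$, an incumbent is found (or the queue empties) after finitely many iterations.

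Suppose for contradiction that $\objectiveUB_k = \infty$ for all $k$. Then no pruning ever happens and $\NodeQueue_k \neq \varnothing$ for all $k$. If the queue were empty, Corollary~\ref{cor:AOCCBS:Termination} would force $\objectiveUB_k = \obj^* < \infty$. By Theorem~\ref{theorem:AOCCBS:invariant}, Reachability must therefore hold at every iteration, since Optimality is excluded. So some $\CTnode^*_k \in \NodeQueue_k$ has $\JointPlan^* \in \CTreachableSolutions{\CTnode^*_k}$, and hence
\[
\objectiveLB_k \leq \CTnodeLB{\CTnode^*_k} \leq \obj^*.
\]
Consequently the distinguished policy, selecting a minimizer, always expands a node with lower bound at most $\obj^*$. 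The nodes it expands thus all lie in the set $T = \{\CTnode : \CTnodeLB{\CTnode} \leq \obj^*\}$ of CT nodes. One point to check is that other policies might have already taken that minimizer in the same iteration. In that case the distinguished policy still picks the minimizer among the remaining nodes. Either way, each iteration expands at least one node, and it suffices to argue about the total set of expanded nodes rather than the distinguished policy's alone.

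The key step, which I expect to be the main obstacle, is showing that $T$ is finite, so that only finitely many nodes with $\CTnodeLB{\CTnode} \leq \obj^*$ can ever be expanded. I would argue this via König's lemma. The CT is binary, hence finitely branching. By \ref{OCCBS:property:admissibility}, lower bounds are monotone along descending paths, so $T$ is a subtree closed under ancestors. If $T$ were infinite, it would contain an infinite descending path $\CTnode_1,\CTnode_2,\dots$ with every $\CTnodeLB{\CTnode_i} \leq \obj^*$. This contradicts \ref{OCCBS:property:progress} (satisfied via \deltaBR) with $c = \obj^*$. Hence $T$ is finite.

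It remains to close the argument. Each node is expanded at most once, since selected nodes leave the queue, and the distinguished policy expands a node of $T$ every iteration. So after at most $|T|$ iterations it has no node of $T$ left to select. But Reachability guarantees such a node is always present with $\CTnodeLB{\CTnode} \leq \obj^*$, since the minimizer lies in $T$. This is a contradiction, so an incumbent appears at some finite iteration $k_0$. From there Theorem~\ref{theorem:AOCCBS:IncumbentDrivenTermination} yields termination in finitely many further iterations with an optimal solution, and Corollary~\ref{cor:AOCCBS:Termination} certifies its optimality. This mirrors the finite-termination argument of \OCCBS, which the distinguished policy emulates.
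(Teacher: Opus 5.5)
Your proof is correct, and it takes a different route from the paper. The paper argues directly. A minimum-$\CTnodeLB{\cdot}$ node is expanded every iteration, so the search is an \OCCBS-style best-first search with extra expansions and inherits \OCCBS's termination guarantee, with $\min_{\CTnode\in\NodeQueue_k}\CTnodeLB{\CTnode}$ increasing and \ref{OCCBS:property:progress} ensuring progress. You instead reduce the statement to Theorem~\ref{theorem:AOCCBS:IncumbentDrivenTermination}. If no incumbent ever appears, the invariant forces Reachability, so $\NodeQueue_k\cap T\neq\varnothing$ for every $k$. K\"onig's lemma, together with \ref{OCCBS:property:admissibility} (which makes $T$ ancestor-closed) and \ref{OCCBS:property:progress} with $c=\obj^*$, shows $T$ is finite. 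That is incompatible with expanding a fresh node of $T$ every iteration. Here $T$ should be taken inside the tree actually generated by the run, which is binary because each node is expanded at most once.

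What your route buys is rigour and modularity. \ref{OCCBS:property:progress} is a per-path statement. Turning it into termination of a search that may spread over many branches needs a finiteness argument of exactly your kind. The paper's ``will necessarily increase'' glosses over this, since the minimum lower bound is only non-decreasing and can stall across non-cardinal branchings. Your route also deals with several policies sharing one queue, which the paper sidesteps by assuming the distinguished policy receives the global minimiser. Finally, it reuses the already-proven invariant rather than appealing to \OCCBS's separate analysis. The paper's route is shorter and ties the result to a known guarantee, at the cost of these details.

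One small correction to your closing step. ``The distinguished policy expands a node of $T$ every iteration'' is not what you established, since earlier policies may have taken every node of $\NodeQueue_k\cap T$. The correct statement is that at least one node of $T$ is expanded every iteration. If some node of $\NodeQueue_k\cap T$ is still available when the distinguished policy selects, its choice has lower bound at most $\obj^*$. Otherwise the earlier policies already expanded all of $\NodeQueue_k\cap T$, which is non-empty by Reachability. With this, your counting over the finite set $T$ goes through unchanged.
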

\begin{proof}
    If at least one such policy $\policy$ exists, 
    then at every iteration at least one node $\CTnode\in\NodeQueue_k$ minimizing $\obj$ will be expanded.
    This is a best-first search (with possibly additional node expansions) using \deltaBR, which is consistent with an \OCCBS search and therefore inherits \OCCBS's termination guarantees~\cite{OCCBS}.

    Specifically, since properties~\ref{OCCBS:property:soundness}--\ref{OCCBS:property:progress} are satisfied by the CT in \ours and a node $\CTnode\in\NodeQueue_k$ minimizing $\CTnodeLB{\CTnode}$ is selected for expansion at every iteration $k$,
    $\min_{\CTnode\in\NodeQueue_k} \CTnodeLB{\CTnode}$ will necessarily increase as $k$ grows. 
    Since no solution is removed during the search, 
    and property~\ref{OCCBS:property:progress} ensures progress is made,
    eventually an optimal solution will be found.
\end{proof}

\section{Experimental Evaluation}
\label{sec:Experiments}

This section presents \AC{a preliminary} experimental evaluation of \ours, \AC{with further testing ongoing}. 
All experiments are run on a 2025 Mac Studio, 16-core M4~Max CPU (12 performance cores, 4 efficiency cores), 64~GB RAM, macOS Tahoe~(26.5.2). 
\ours is implemented in Python; the source code, benchmark sets, results, and demonstrations \AC{will be made available.}
% on our repository\footnote{\GithubAOCCBS}. 

We demonstrate \ours on three difference MAPF-based models.
The first operates under the common assumptions of circular agents and straight-line traversals at constant speed. It is under this model that we perform an ablation study in Section~\ref{sec:Experiments:Ablations} using various solver configurations, 
with some comparison with \OCCBS,
followed by additional comparisons in Section~\ref{sec:Experiments:OCCBS_comparison}.
The second model includes arbitrary continuous and connected edge trajectories, 
and the third model further introduces polygonal (possibly non-convex) agent shapes. 
We provide a demonstration of this richer model in Section~\ref{sec:Experiments:ComplexDemonstration}.
Appendix~\ref{sec:Implementation} describes our implementation of these three models, specifically regarding their interface with \ours and how to preprocess and store information for lookup during runtime. 

Figure~\ref{fig:runtime_example} illustrates an example of the solver progress over runtime, showing how the optimal cost lower bound (derived from the CT) increases while the upper bound (the incumbent solution's cost) decreases. The incumbent solution at first comes from the repair function (\oursTPSIPP), however, later in the search the solutions from the child nodes during CT branchings are of higher quality. The solution found at $20.3$ seconds (with an optimality gap upper bound of $0.033\%$) was verified optimal by $22.5$ seconds, at which point the upper and lower bound on the objective value coincide.
\begin{figure}[h]
    \centering
    \includegraphics[width=1\linewidth]{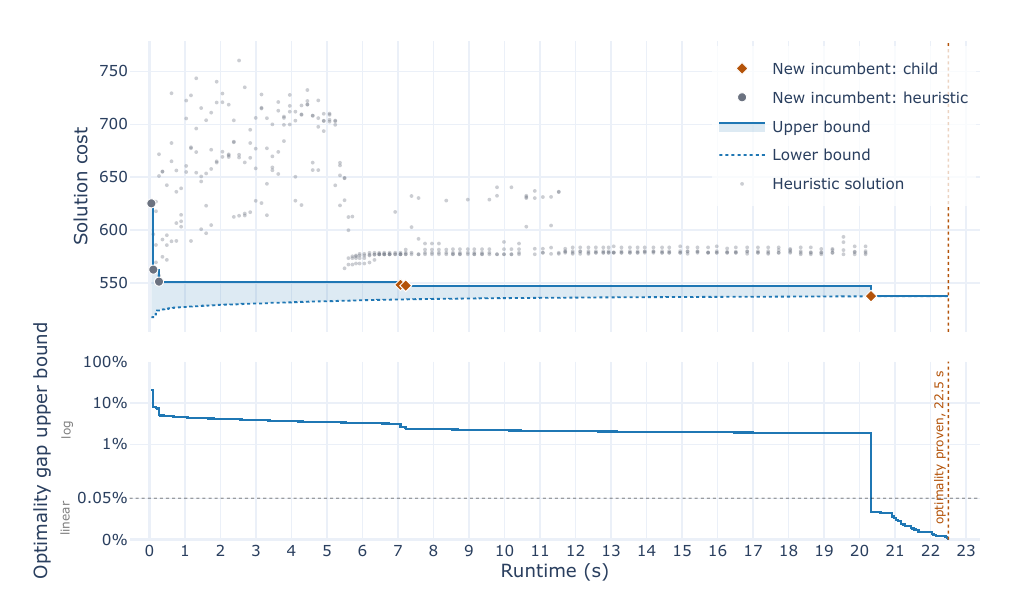}
    \caption{
    The solver progress over runtime. 
    The top plot shows the lower bound (from the CT tree) and the upper bound (the incumbent solution cost) of the optimal cost, forming the shaded region that corresponds to the incumbent solution's optimality gap upper bound. The bottom plot shows this optimality gap upper bound in percent (note the mixed linear and log scale). The upper plot also shows the solutions found by the repair function (\oursTPSIPP), as well as the incumbent solution's origin, being either from the repair function or from a child node during a branching.
    Finally, the solution found at $20.3$~seconds was verified optimal at $22.5$~seconds.}
    \label{fig:runtime_example}
\end{figure}

\subsection{Benchmark Setup: Maps, Scenarios, and Pre-processing}

The ablations and comparisons with \OCCBS are done using the standard MovingAI benchmark \emph{gridmaps}~\cite{SturtevantBenchmarks}, and \emph{roadmaps} sampled from these gridmaps.
Gridmaps are maps with unit-length edges, typically used in discrete-time MAPF, while roadmaps may include non-unit-length edges.
We use the \MovingAIMap{Den520d Sparse} roadmap from~\cite{CCBS}, 
and additionally our roadmaps sampled with our method detailed in Appendix~\ref{sec:BenchmarkGeneration}. 
A full roadmap benchmarking set using this method, along with scenario files, \AC{which will be made available on our repository}.
Figure~\ref{fig:Experiments:roadmaps} shows three such sampled roadmaps.
\begin{figure}[htbp]
    \begin{subfigure}[t]{0.31\textwidth}
        \centering
        \includegraphics[width=\linewidth]{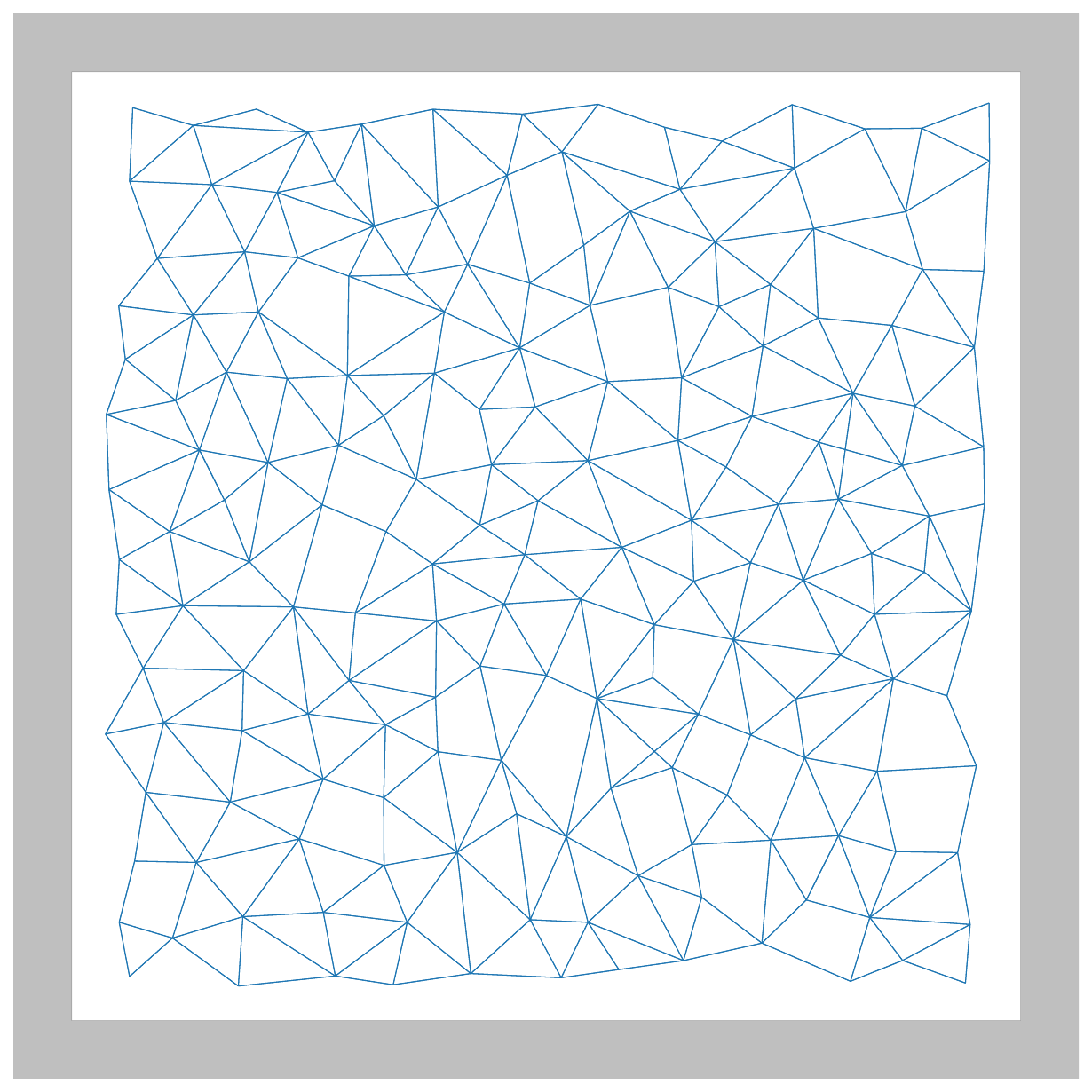}
        \caption{\MovingAIMap{Empty-16-16} with $|\Vertices|= 178$ and $|\Edges|= 916$.}
        \label{fig:Experiments:roadmaps:1}
    \end{subfigure}
    \hfill
    \begin{subfigure}[t]{0.31\textwidth}
        \centering
        \includegraphics[width=\linewidth]{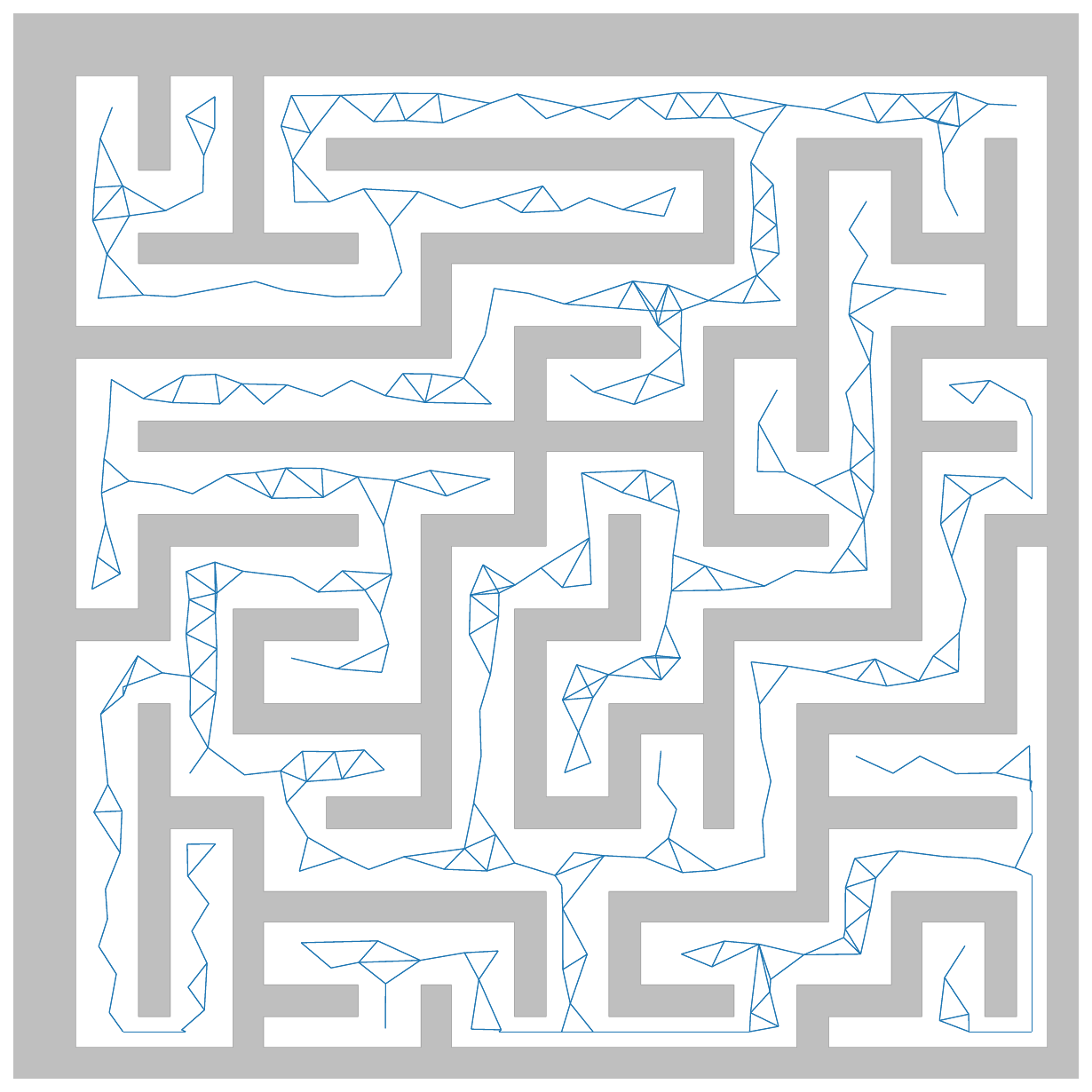}
        \caption{\MovingAIMap{Maze-32-32-2} with $|\Vertices|= 362$ and $|\Edges|= 1\;088$.}
        \label{fig:Experiments:roadmaps:2}
    \end{subfigure}
    \hfill
    \begin{subfigure}[t]{0.31\textwidth}
        \centering
        \includegraphics[width=\linewidth]{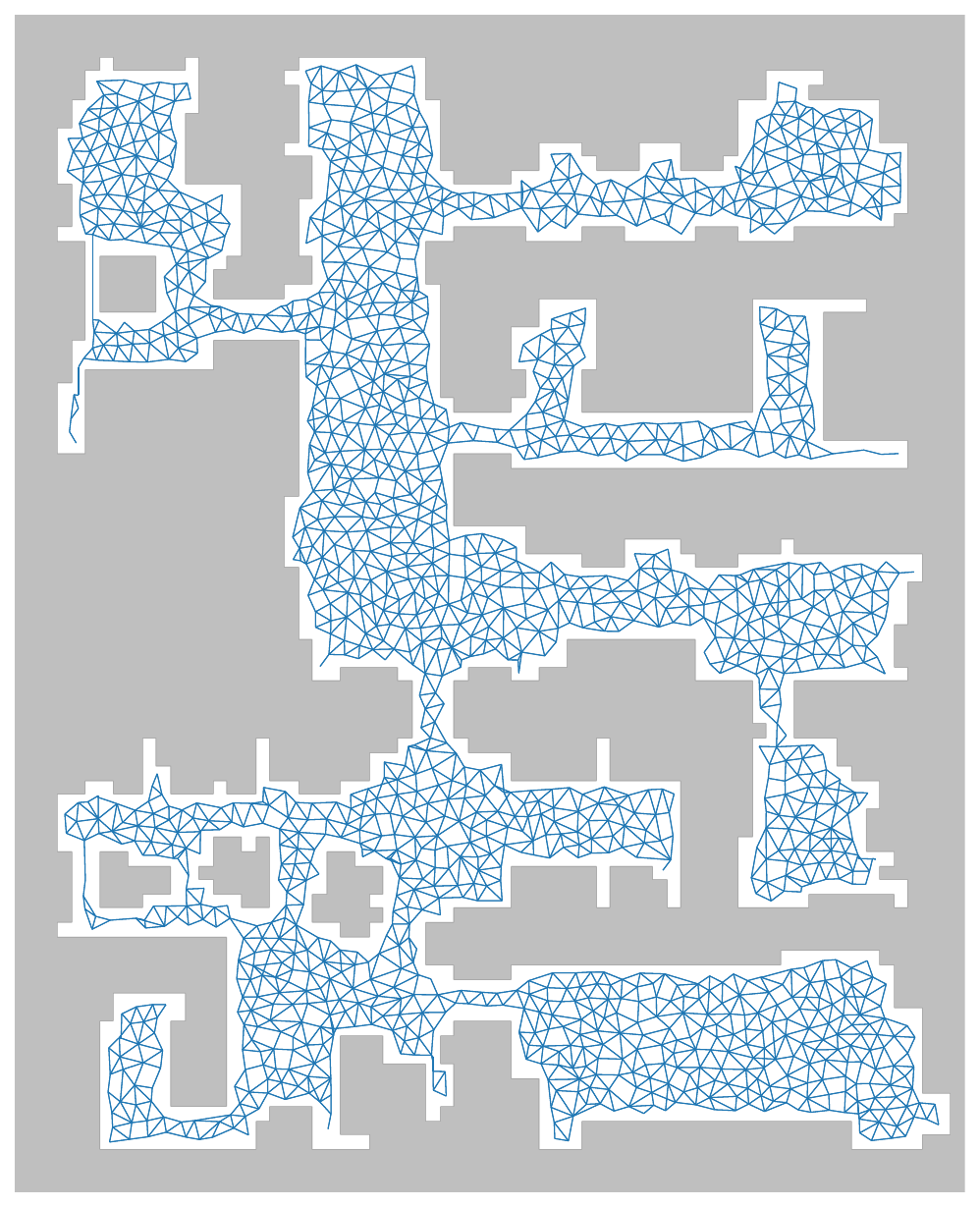}
        \caption{\MovingAIMap{Den312d} with $|\Vertices|= 1\;607$ and $|\Edges|= 7\;894$.}
        \label{fig:Experiments:roadmaps:3}
    \end{subfigure}
    \caption{Roadmaps sampled from MovingAI gridmaps, using our method detailed in Appendix~\ref{sec:BenchmarkGeneration} with a density $\density=1.0$.}
    \label{fig:Experiments:roadmaps}
\end{figure}

The benchmarking scheme from~\cite{SurveyStern2019} is used here:
for each map, a \emph{scenario} defines a list of start and goal vertex pairs.
A problem with $n$ agents is created from the first $n$ start and goal vertex pairs, 
starting with $n=2$ and incrementing until no solution (including sub-optimal) can be found for a problem within $30$ seconds. 
We used the
\emph{random} scenarios from~\cite{SturtevantBenchmarks} for the gridmaps, 
the scenarios for \MovingAIMap{Den520d Sparse} from~\cite{CCBS},
and our own generated scenarios (see Appendix~\ref{sec:BenchmarkGeneration}) for our sampled roadmaps.
To match prior work, all experiments on the gridmaps and \MovingAIMap{Den520d Sparse} use circular agent with radius $1/(2\sqrt{2})$.
The agents on our roadmaps have radius $0.5$. 

As described in Appendix~\ref{sec:Implementation}, pre-processing of graph and conflict information is done using knowledge that is assumed to be known beforehand. This includes information about the graphs and types of agents within the system.
The pre-processed graph information contains the all-pairs shortest path lengths, and the conflict information distils all domain specifics, such as geometry and conflict definitions, and stores that information for use during solving.
Information about a specific instance, such as the agent's starting positions and task sequences are not assumed to be known beforehand and therefore are not used for any preprocessing.
Thus, the reported runtime in all experiments begins from the time instance-specific information is given to the solver. 

Figure~\ref{fig:Experiments:preprocessing_times} visualise the graph and conflict information preprocessing time for each map, using parallel processing on 12 CPU cores. 
Although the pre-processing times remain well within a practical range,
these are done specifically for the circular agent, straight-line, constant-velocity model.
\AC{The time required to perform this pre-processing for, e.g., curved trajectories and non-circular agents, is the focus of ongoing work.}
Nonetheless, the measured pre-processing times appear to follow a predictable trend under this model, providing an estimate for other maps.
\begin{figure}
    \centering
    \includegraphics[width=1\linewidth]{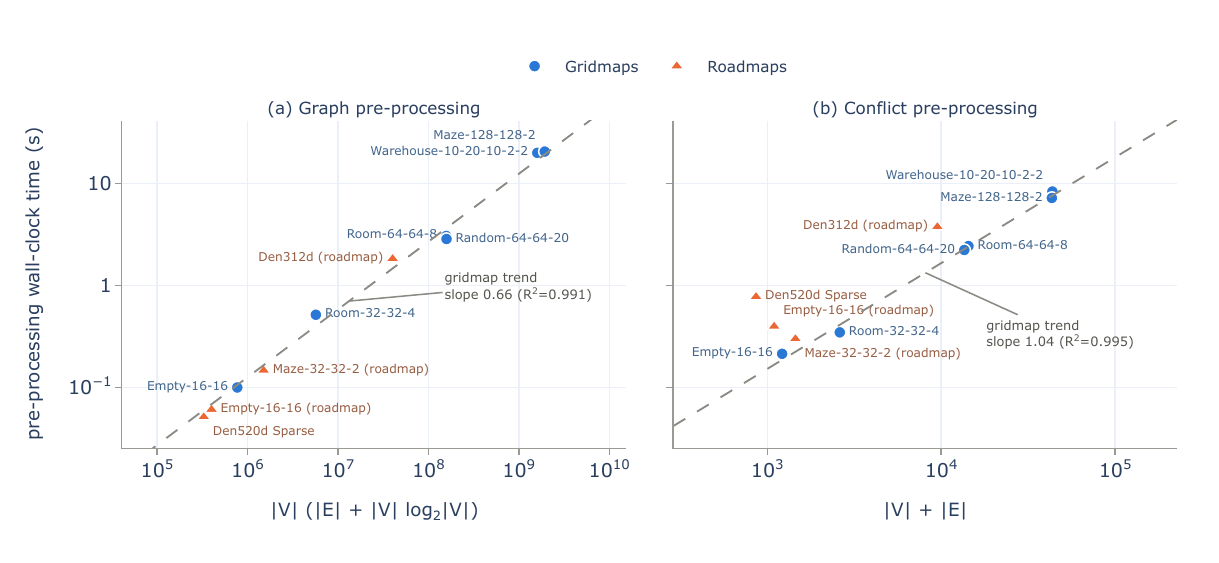}
    \caption{The graph and conflict information pre-processing runtime (see Appendix~\ref{sec:Implementation}), for each map used in the ablation and benchmark sections. The graph pre-processing computes the all-pairs shortest path lengths, which scales with $|\Vertices|(|\Edges| + |\Vertices|\log_2 |\Vertices|)$. 
    Conflict information pre-processing distils all domain-specific information, such as geometry and conflict definitions, and stores that information for use during solving.}
    \label{fig:Experiments:preprocessing_times}
\end{figure}

\subsection{Ablations}
\label{sec:Experiments:Ablations}

This section presents the results of an ablation study over \ours solver configurations together with comparisons with \OCCBS.
To match the problem formulation that \OCCBS is implemented for, 
all instances in this section use
maps with straight-line trajectories,
circular agents with radius $1/2\sqrt{2}$ (such that agents do not collide under the classical MAPF formulation on a grid map with unit length edges),
and only one vertex task per agent with an infinite service time.

These ablations are limited to maximum $200$ agents.
We use all $25$ scenarios for each of the grid maps \mapRoom (Room-64-64-8) and \mapRandom (Random-64-64-20) from~\cite{SturtevantBenchmarks} using connectedness $2$ (grid layout)
and the road map \mapDenSparse from~\cite{CCBS}.
We compare \ours implemented in Python, making extensive use of data-structure optimizations, with the original C++ \OCCBS implementation\footnote{\GithubOCCBS}.
Note that \OCCBS returns only optimal solutions;
in contrast, when \ours returns a suboptimal solution, we report on its optimality gap \emph{upper bound} \gapUB, the true optimality gap may be lower.

\subsubsection{Ablations on \ours Configurations}
\label{sec:Experiments:Ablations:Configurations}

\begin{table}[]
    \centering
    \caption{Overview of the experimental \ours configurations.}
    \label{tab:experiments:ablation:configurations}
    \begin{tabular}{lcccc}
        \toprule
        & \textbf{Nr of Policies} & \textbf{Repair Function} & \textbf{Node Selection} & \textbf{Portfolio Type} \\ 
        \midrule
        \textbf{C1} & 1  & ---     & Min. Cost & Static      \\
        \textbf{C2} & 1  & TP-SIPP & Min. Cost & Static      \\
        \textbf{C3} & 11 & ---     & Min. Cost & Static      \\
        \textbf{C4} & 11 & TP-SIPP & Min. Cost & Static      \\
        \textbf{C5} & 11 & TP-SIPP & Mixed     & Static      \\
        \textbf{C6} & 11 & w/wo TP-SIPP & Mixed/Min. Cost     & Gap Cadence \\
        \bottomrule
    \end{tabular}
\end{table}
Table~\ref{tab:experiments:ablation:configurations} details the six \ours configurations used in these ablations. They vary along three axes: single- versus multi-policy portfolios (each policy run on a separate CPU core), presence or absence of a repair function (\oursTPSIPP), and the node-selection rule. Selection is either minimum cost throughout, or a mix of three policies selecting by minimum cost, four by minimum conflict count, and four by minimum summed conflict time---where conflict time is the duration an agent's action must shift to avoid a move-move conflict, or the overlap duration for a move-wait conflict.
The multi-policy configurations use $11$ worker processes alongside the coordinating process, matching the $12$ performance cores of our hardware. \OCCBS is single-threaded, so only the single-policy configurations use comparable hardware.

Configurations \textbf{C1}--\textbf{C5} use a static portfolio, while \textbf{C6} uses a \emph{gap-cadence} portfolio that shifts between two sub-portfolios as the optimality gap upper bound \gapUB tightens: a \emph{solution-finding} portfolio (\textbf{C5}) that prioritizes finding a solution quickly while \gapUB is high, and an \emph{LB-raising} portfolio (\textbf{C3}) that prioritizes raising the CT solution lower bound as \gapUB falls. The solution-finding portfolio runs every $\lceil 0.05 / \gapUB \rceil$ iterations (every iteration when $\gapUB=\infty$), so it is used continuously while 
$\gapUB \geq 5\%$ and is progressively phased out in favour of the LB-raising portfolio as \gapUB decreases. At $\gapUB=1\%$ it runs only every fifth iteration.

\begin{figure}[]
    \centering
    \includegraphics[width=1\linewidth]{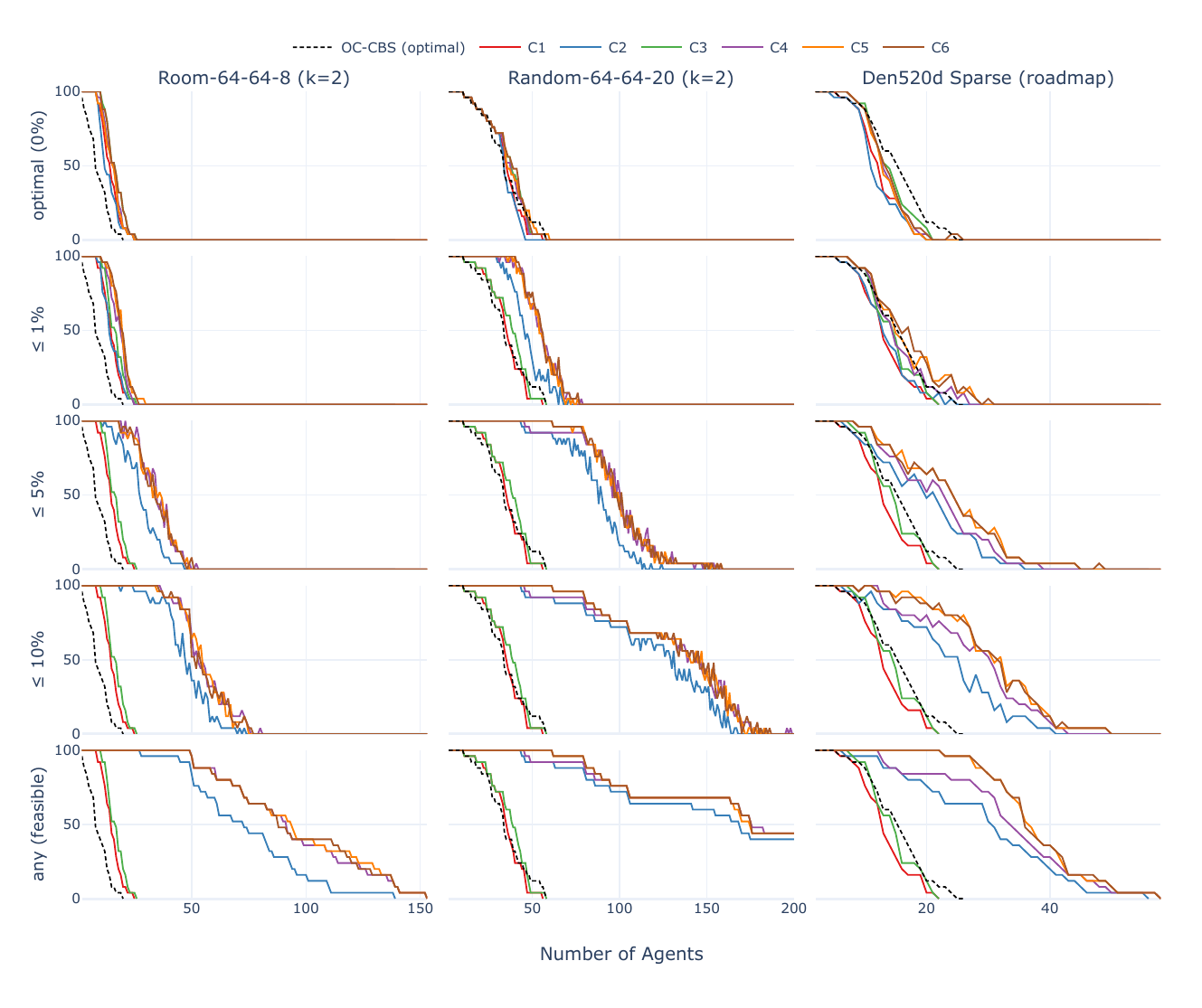}
    \caption{Success rate of \ours configurations \textbf{C1}--\textbf{C6} and \OCCBS, as the share of the $25$ scenarios per map solved at each agent count. Columns are maps; rows are optimality gap upper bounds \gapUB, from proven optimal (top) to any feasible solution (bottom). \OCCBS returns only optimal solutions, so its curve is identical in every row.}
    \label{fig:ablation:01_success_rate}
\end{figure}
Figure~\ref{fig:ablation:01_success_rate} visualises the success rate of solving to \gapUB values on each of the maps, using \ours with configurations \textbf{C1}--\textbf{C6} and \OCCBS.
Success rate is measured by the share of the $25$ scenarios that could be solved to the given \gapUB at each agent count.
Configuration \textbf{C1} is most alike \OCCBS (both single-core, no repair, Min. Cost node selection), however, \OCCBS is implemented in C++ which is often faster than
Python~\cite{PythonVSCpp}.
Despite this, \textbf{C1} outperforms \OCCBS on \mapRoom, is on par on \mapRandom, and lags behind on \mapDenSparse.
This indicates that the various implementation optimizations in \ours are advantageous; that these same optimizations in \OCCBS could yield meaningful improvements in runtime performance; and that a C++ implementation of \ours could further increase the scale of problems that can be solved.
Extending \textbf{C1} to multi-core with \textbf{C3} shows, in this figure, only slight improvements. We revisit this below.
Interestingly, by comparing the performance of \textbf{C1} and \textbf{C3} for optimal solutions with their performance for any solution, it can be seen most clearly on \mapDenSparse that despite not using a repair function, they are still able to find suboptimal solutions by only checking the child nodes during expansions.

The repair function is what separates the two groups. At loose \gapUB the deepest sweep on \mapRoom reaches $153$ agents for the repair-using configurations against $25$ for \textbf{C1}, on \mapDenSparse $58$ against $22$, and on \mapRandom the $200$-agent limit against $56$;
\textbf{C2} remains slightly behind \textbf{C4}, the difference being 1 vs 11 policies.
That gain is not free, however. At $\gapUB=0$ the repair-using configurations are consistently \emph{worse} than their non-repairing counterparts---\textbf{C2} below \textbf{C1} and \textbf{C4} below \textbf{C3} on all three maps---because time spent in \oursTPSIPP is time not spent expanding CT nodes to raise the lower bound.
Configuration \textbf{C6} is designed to avoid exactly this trade-off, and the results show that it does: it matches \textbf{C5} at loose \gapUB while recovering \textbf{C3}'s performance at $\gapUB=0$, which no static configuration achieves on both ends.

\begin{figure}[]
    \centering
    \includegraphics[width=1\linewidth]{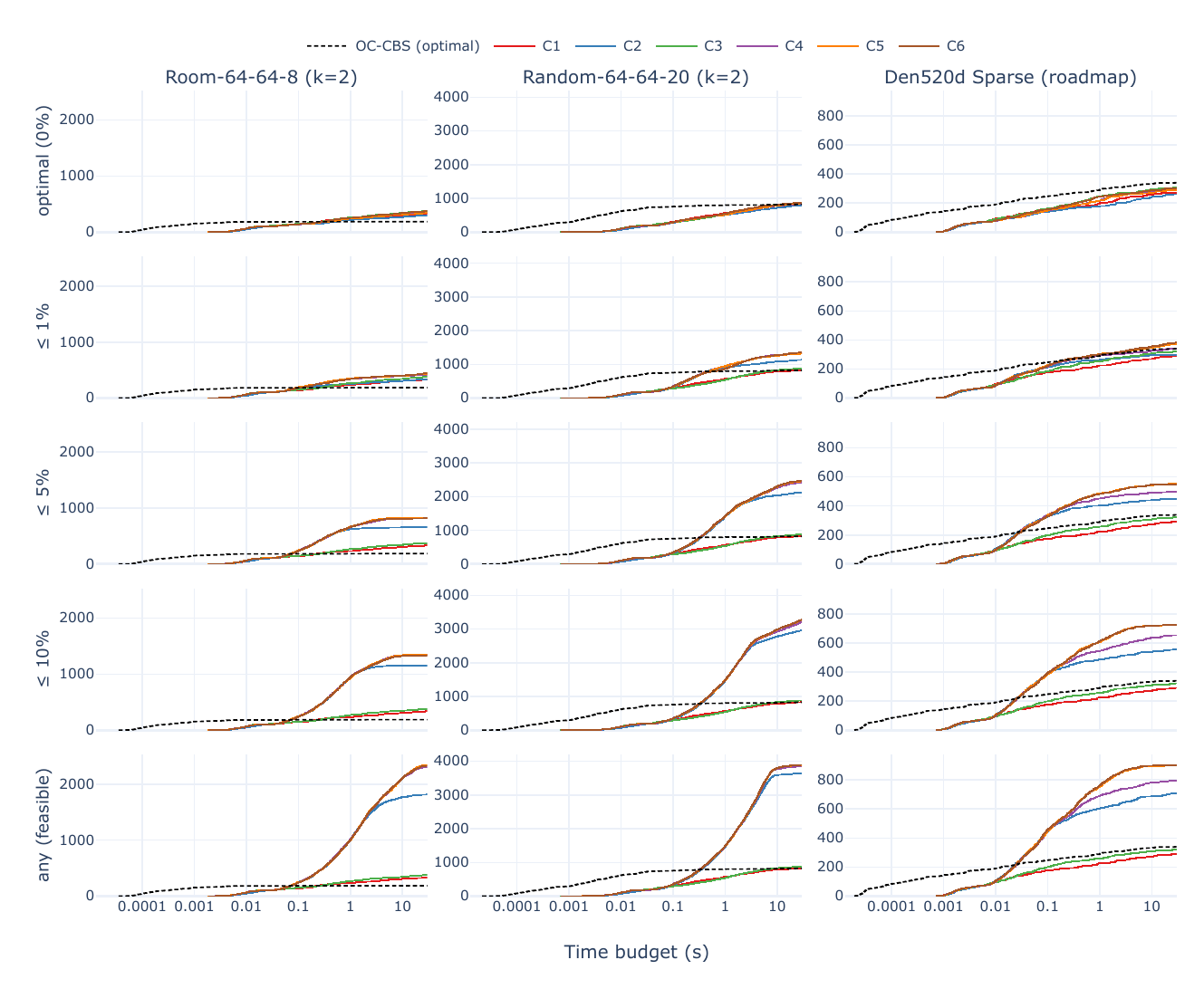}
    \caption{Number of problems solved to a given optimality gap upper bound \gapUB within a given time budget, for \ours configurations \textbf{C1}--\textbf{C6} and \OCCBS. Columns are maps; rows are \gapUB, from proven optimal (top) to any feasible solution (bottom). The vertical axis is an absolute instance count, shared down each column and independent across columns; the horizontal axis is logarithmic.}
    \label{fig:ablation:02_solved_vs_runtime}
\end{figure}
Figure~\ref{fig:ablation:02_solved_vs_runtime} visualises instead the number of problems that each configuration can solve within a given runtime, for a given optimality gap and map.
These results highlight how fast the solvers are at solving problems.
The per-map ordering between \textbf{C1}, \textbf{C3} and \OCCBS is the same as in
Figure~\ref{fig:ablation:01_success_rate}, but the runtime axis shows how it arises: \OCCBS solves the easy instances one to two orders of magnitude faster, consistent with a C++ constant-factor advantage, while \textbf{C1} and \textbf{C3} close most of that gap over the remaining budget.
Among the repair-using configurations, \textbf{C5} and \textbf{C6} are equivalent, \textbf{C4} is equivalent on the grid maps (\mapRoom and \mapRandom) but lags slightly on the road map (\mapDenSparse), and \textbf{C2} is just behind.
Taken over the full $30$ second budget, \textbf{C6} finds a feasible solution to 
$6.9\times$ as many problems as \textbf{C1} on \mapRoom ($2329$ versus $336$), 
$4.7\times$ as many on \mapRandom ($3896$ versus $827$), and
$3.1\times$ as many on \mapDenSparse ($904$ versus $291$).
Specifically on \mapRandom, $4.7\times$ is a lower bound since $11$ of the $25$ scenarios reach the $200$-agent limit rather than a failure.
The multi-core effect is far smaller at this aggregate level: \textbf{C3} ends $6$--$15\%$ ahead of \textbf{C1} in solved instances on each map. The next section shows that this modest aggregate figure hides a large effect concentrated in the hardest instances.

A natural question that arises from looking at the results in Figures~\ref{fig:ablation:01_success_rate} and~\ref{fig:ablation:02_solved_vs_runtime} is: \emph{what is the advantage of using multiple policies (CPU processors), and where can that advantage be found?}
The performance difference between \textbf{C2} and \textbf{C4} could possibly be explained by \textbf{C4} running \oursTPSIPP more often to sample more random repair orders, thereby finding better solutions through chance.
\textbf{C1} and \textbf{C3}, on the other hand, do not use \oursTPSIPP. Therefore, the additional policies in \textbf{C3} contribute primarily toward the search for, and verification of, an optimal solution, yet Figures~\ref{fig:ablation:01_success_rate} and~\ref{fig:ablation:02_solved_vs_runtime} show only a modest aggregate difference between the two. 
The following analysis shows that this aggregate view conceals a large effect confined to the hardest instances.

\begin{figure}
    \centering
    \includegraphics[width=1\linewidth]{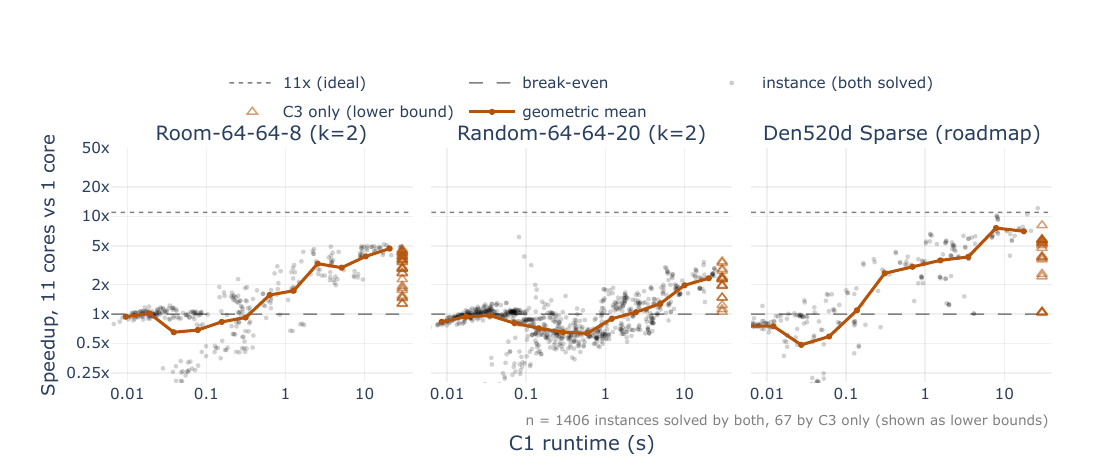}
    \caption{Per-instance speedup of \textbf{C3} ($11$ policies) over \textbf{C1} ($1$ policy), $t_\textbf{C1}/t_\textbf{C3}$, plotted against \textbf{C1}'s solve time as a proxy for instance difficulty. Both axes are logarithmic, and the vertical axis is symmetric about break-even so that equal speedups and slowdowns are equally far from it. Each dot is one instance solved optimally by both configurations; open triangles are the $67$ instances solved optimally by \textbf{C3} alone, drawn at the lower bound $30\,\mathrm{s}/t_\textbf{C3}$ with the true value somewhere above. The amber curve is the geometric mean over $12$ log-spaced difficulty bins, excluding bins with fewer than $8$ instances and excluding the censored instances---so it understates the advantage at its right end.}
    \label{fig:ablation:03_multicore_paired_speedup}
\end{figure}
The results of our investigation into the above question are shown in Figure~\ref{fig:ablation:03_multicore_paired_speedup} and Table~\ref{tab:ablation:03_multicore_paired_speedup}.
Figure~\ref{fig:ablation:03_multicore_paired_speedup} visualises, for each map, the speedup of using \textbf{C3} over \textbf{C1} for each instance, plotted over the time it took \textbf{C1} to solve.
That is, we use \textbf{C1}'s solve time $t_\textbf{C1}$ as a proxy for a problem's difficulty, and then show how much faster \textbf{C3} could solve the same problem, $t_\textbf{C1}/t_\textbf{C3}$. 
The $67$ instances that \textbf{C3} was able to solve to optimality but not \textbf{C1} (there were no cases of the opposite) are shown as lower bounds on the speedup, using the time limit $t_\textbf{C1}=30$ seconds.
The geometric mean curve is also shown, computed by dividing the data into 12 log-spaced bins and computing the geometric mean over the per-instance speedup within the bin, with bins containing less than $8$ instances dropped.
All three maps share the same shape---at or below break-even on the easiest instances, a dip below it, then a monotone rise---but differ in where the crossover falls and how large the final gain is. 
On \mapDenSparse the trend crosses break-even at roughly $0.15$ seconds and reaches $7.0\times$; on \mapRoom at roughly $0.5$ seconds, reaching $4.7\times$; 
and on \mapRandom only at roughly $2$ seconds, reaching $2.3\times$. 
The multi-policy portfolio thus pays off earliest and most on the road map and latest and least on the random grid.
Note also that the trend is conservative at its right end: the censored instances, where \textbf{C3} did best, are by construction excluded from the bins, as are the $123$ instances that only \textbf{C3}'s sweep reached and which therefore have no \textbf{C1} counterpart to
pair with.

\begin{table}[]
    \centering
    \begin{tabular}{lrrrrrrr}
        \toprule
        bucket & $n$ & $t_\textbf{C1}$ & $t_\textbf{C3}$ & speedup & throughput & overhead \\
        \midrule
            $0$--$0.1$ & 572 & 0.02 & 0.03 & 0.78$\times$ & 0.93$\times$ & 1.19$\times$ \\
            $0.1$--$0.5$ & 311 & 0.22 & 0.28 & 0.78$\times$ & 2.02$\times$ & 2.59$\times$ \\
            $0.5$--$2$ & 231 & 1.06 & 0.91 & 1.16$\times$ & 3.68$\times$ & 3.16$\times$ \\
            $2$--$5$ & 145 & 3.18 & 1.87 & 1.70$\times$ & 4.01$\times$ & 2.36$\times$ \\
            $5$--$15$ & 101 & 8.30 & 2.88 & 2.88$\times$ & 4.43$\times$ & 1.54$\times$ \\
            $15$--$30$ & 46 & 19.65 & 5.76 & 3.41$\times$ & 4.77$\times$ & 1.40$\times$ \\
            $>30$ (timeout) & 67 & $>30$ & 10.88 & $>2.76\times$ & -- & -- \\
        \bottomrule
    \end{tabular}
    \caption{Paired comparison of \textbf{C1} and \textbf{C3} over all three maps. Instances are grouped into buckets by \textbf{C1}'s solve time, with $n$ the number of instances in each bucket. $t_\textbf{C1}$ and $t_\textbf{C3}$ are geometric mean solve times, so $t_\textbf{C1}/t_\textbf{C3}$ equals the speedup column. The remaining columns are geometric means over the instance-wise ratios between \textbf{C3} and \textbf{C1}: solve time $t_\textbf{C1}/t_\textbf{C3}$ (speedup), CT nodes expanded per unit time (throughput), and total CT nodes expanded (overhead), so that speedup $=$ throughput $/$ overhead. The final row holds instances \textbf{C1} did not solve optimally within the limit; their speedups are lower bounds and no node ratios are available. There were no instances solved optimally by \textbf{C1} alone.}
    \label{tab:ablation:03_multicore_paired_speedup}
\end{table}
Table~\ref{tab:ablation:03_multicore_paired_speedup} aggregates the data from all maps, showing for each bucket the geometric mean solve time for \textbf{C1} and \textbf{C3}, the speedup, the increase in node expansion rate (throughput), and the increase in the total number of expanded nodes (overhead). Since all three are geometric means over the same instances, the speedup is exactly the throughput divided by the overhead, which separates the two competing effects of running $11$ policies in parallel.

The throughput shows the raw benefit of using multiple processors: more nodes are expanded per unit time, peaking at $4.77\times$ that of \textbf{C1} in the $15$--$30$ second bucket. 
This is well short of the $11\times$ ideal, giving at most $43\%$ parallel efficiency on our hardware, which we attribute to the cost of managing a pool of policies on separate processes with synchronization at every iteration. On the easiest instances that cost dominates outright: at a geometric mean solve time of $0.02$ seconds the pool expands nodes \emph{slower} than a single process ($0.93\times$), as the whole solve is shorter than the time needed to make the pool
productive.

The overhead shows that the parallel search expands more CT nodes than the single-core search to solve the same problem, so part of the throughput is spent on nodes that are not needed to find and verify an optimal solution. This is not surprising: expanding one node at a time means every choice is made with the latest information, whereas expanding several nodes in parallel means that all but the first are chosen without it. The overhead is largest on mid-difficulty problems ($3.16\times$ in the $0.5$--$2$ second bucket) and falls to $1.40\times$ on the hardest, where the CT is large enough that the parallel policies explore genuinely distinct regions rather than duplicating a short search.

The speedup is what remains after these two effects cancel. On the two easiest buckets the overhead exceeds the throughput and \textbf{C1} is the faster solver ($0.78\times$); from the $0.5$--$2$ second bucket onward the ordering reverses and the advantage grows monotonically, reaching $3.41\times$ in the $15$--$30$ second bucket. Both factors move favourably with difficulty---throughput rises while overhead falls---which is why the effect compounds rather than saturating.

The most consequential row is the last. On $67$ instances \textbf{C3} found and verified an optimal solution within the time limit where \textbf{C1} did not, with no instance where the reverse held, raising coverage of the paired instances from $92.0\%$ to $96.3\%$. The principal benefit of the additional policies is therefore not that the same problems are solved faster, but that problems out of reach of a single core come into reach.

\subsubsection{Ablation on Conflict Selection}
\label{sec:Experiments:Ablations:ConflictSelection}

Every configuration in Section~\ref{sec:Experiments:Ablations:Configurations} uses the \ConflictSelectionBestof{\infty} conflict selection strategy.
This section isolates that choice.
All five configurations below are \textbf{C5} with only the conflict-selection rule varied, so any difference between them is attributable to that rule alone.
Recall from Section~\ref{sec:AOCCBS:highlevel:conflict_selection}:
\ConflictSelectionArbitrary branches on the first conflict found and performs no probing;
\ConflictSelectionSemiCard scans the conflict list and returns the first conflict for which at least one child's cost increases;
\ConflictSelectionBestof{n} probes up to $n$ candidate conflicts and returns the one maximising the smaller of the two children's cost increases.
We test \textbf{C5} using \ConflictSelectionArbitrary, \ConflictSelectionSemiCard, and \ConflictSelectionBestof{n} with $n \in \{8, 16, \infty\}$.
The ablation is run on \mapRoom ($2535$ instances over $25$ scenarios) and \mapDenSparse ($968$ instances over $25$ scenarios).

\begin{figure}
    \centering
    \includegraphics[width=1\linewidth]{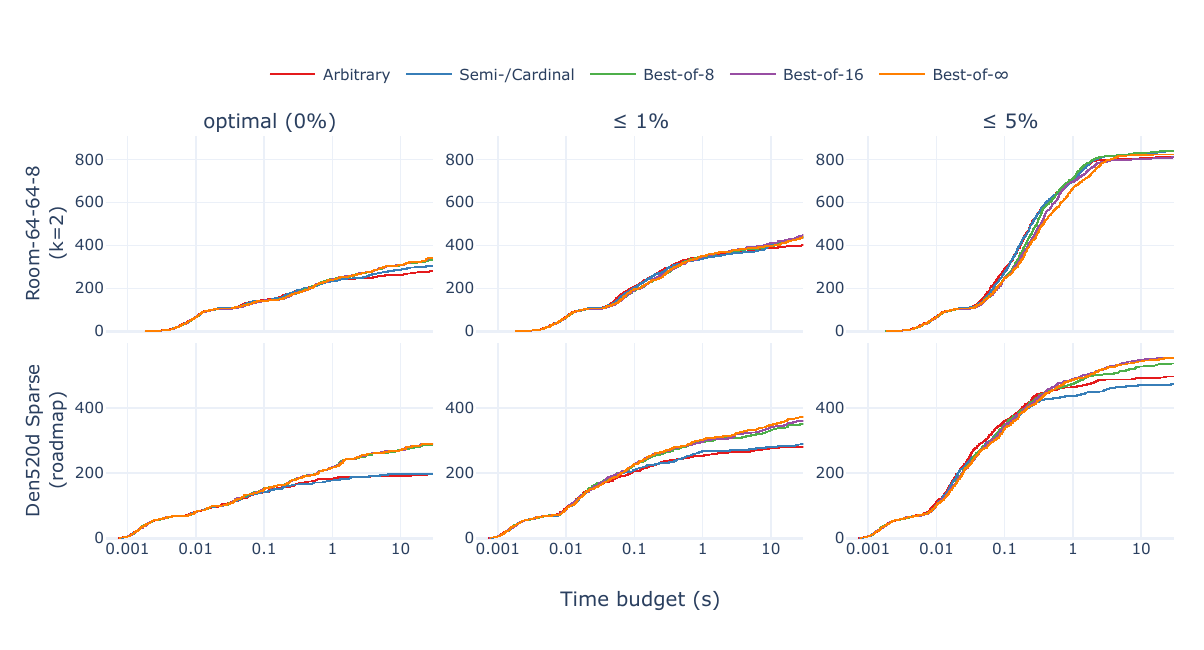}
    \caption{Conflict-selection ablation: instances proved to within each optimality gap as a function of the time budget, on \mapRoom (top row, $2535$ instances) and \mapDenSparse (bottom row, $968$ instances). All five configurations are \textbf{C5} with only the conflict-selection strategy varied. At the optimal threshold the curves fall into two groups on both maps, with the three \ConflictSelectionBestof{n} rules mutually indistinguishable and \ConflictSelectionArbitrary below them; the separation narrows as the gap threshold loosens, and on \mapRoom it has closed entirely by $\le 5\%$.}
    \label{fig:04_conflict_selection}
\end{figure}
Figure~\ref{fig:04_conflict_selection} visualises the number of problems solved using each of the conflict selection strategies, for various \gapUB-values.
At the optimal threshold a clear separation emerges on both maps between \ConflictSelectionArbitrary and the \ConflictSelectionBestof{n} strategies.
Within the $30$-second time limit on \mapDenSparse,
\ConflictSelectionArbitrary and \ConflictSelectionSemiCard prove optimality on $196$ and $197$ instances respectively, while \ConflictSelectionBestof{8}, \ConflictSelectionBestof{16} and \ConflictSelectionBestof{\infty} reach $286$, $291$ and $291$---a $48\%$ increase over branching at random.
On \mapRoom the same ordering holds but with a smaller margin: $280$ and $305$ against $334$, $342$ and $342$, a $22\%$ increase.
The advantage shrinks as the threshold loosens.
On \mapDenSparse it survives at $\le 1\%$ ($280$ and $289$ against $350$, $359$ and $372$) and at $\le 5\%$ ($497$ and $473$ against $537$, $555$ and $554$).
On \mapRoom it is already marginal at $\le 1\%$ ($401$ and $434$ against $438$, $450$ and $441$) and absent at $\le 5\%$, where all five configurations lie within $4\%$ of one another ($815$ and $839$ against $842$, $811$ and $824$) and \ConflictSelectionBestof{16} is nominally the weakest.
These results suggest that reasoning over conflict cardinality is most useful when aiming for optimal solutions, however, once a few percent of suboptimality is acceptable, the choice of conflict to branch on matters less, and on the map where good feasible solutions are easy to come by it stops mattering sooner.

The behaviour of \ConflictSelectionSemiCard is map-dependent, and this is the one place the two maps disagree qualitatively.
On \mapDenSparse it is not merely a weak heuristic but a net loss: it costs $0.84\times$ against \ConflictSelectionArbitrary in paired geometric-mean runtime, expands $18\%$ \emph{more} constraint-tree nodes, and at $\le 5\%$ falls behind \ConflictSelectionArbitrary outright ($473$ against $497$).
On \mapRoom it is a modest gain instead, at $1.05\times$ paired geometric-mean runtime with $10\%$ \emph{fewer} nodes, and ahead of \ConflictSelectionArbitrary at every threshold.
What is consistent across both maps is that it captures only a small part of the available benefit: it recovers $25$ of the $62$-instance gap to \ConflictSelectionBestof{16} on \mapRoom and $1$ of $95$ on \mapDenSparse.
We attribute this to the rule satisficing rather than maximising.
Accepting the first conflict that raises either child's cost is barely more informative than just accepting the first conflict, yet it pays for a probe at every candidate it scans along the way, and unlike \ConflictSelectionBestof{n} that scan is not capped---so whether it comes out slightly ahead or slightly behind depends on how expensive the uncapped scan happens to be on a given map.
The benefit of cardinality reasoning in this setting comes from \emph{ranking} candidate conflicts, not from \emph{classifying} them.

These results suggest that the \ConflictSelectionBestof{n} candidate sample saturates almost immediately, and this holds on both maps.
At the optimal threshold \ConflictSelectionBestof{8}, \ConflictSelectionBestof{16} and \ConflictSelectionBestof{\infty} are separated by $5$ and $0$ instances on \mapDenSparse and by $8$ and $0$ on \mapRoom.
Pairwise, their geometric-mean speedups are within $3\%$ of each other on both maps ($1.02\times$ and $1.00\times$ on \mapDenSparse, $1.00\times$ and $1.03\times$ on \mapRoom), with node counts within $2\%$.
Since every probe of a candidate conflict costs two low-level searches (to determine the raise in cost),
and the list of candidate conflicts grows with the number of agents,
this matters in practice:
apparently even a small candidate conflict set ($n\in\{8, 16\}$) suffices to find a good candidate,
while ensuring that the amount of work is bounded, unlike with $n=\infty$.

\begin{figure}
    \centering
    \includegraphics[width=1\linewidth]{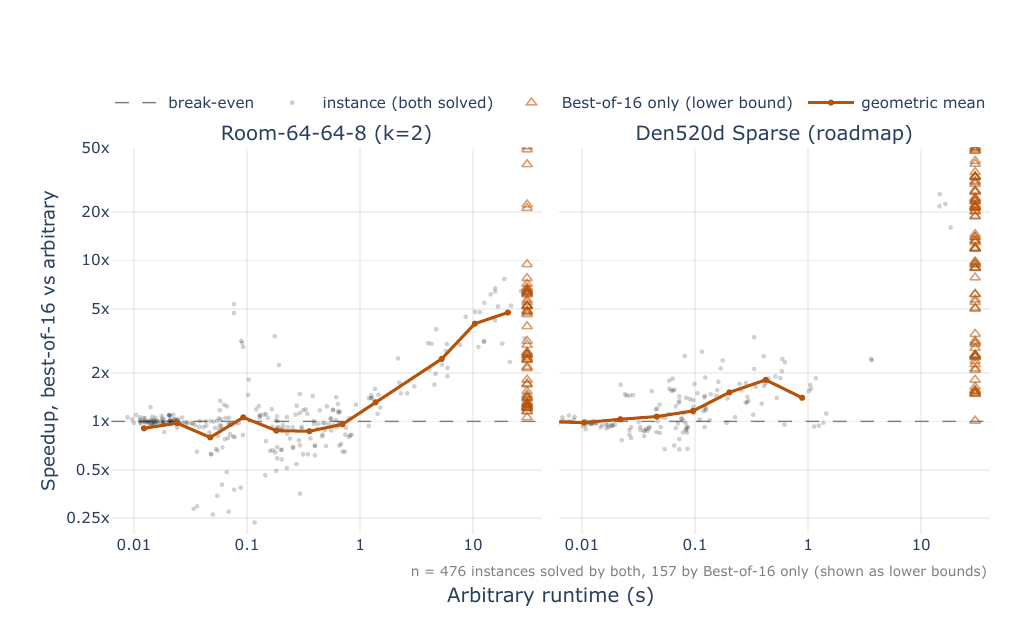}
    \caption{Per-instance speedup of \ConflictSelectionBestof{16} over \ConflictSelectionArbitrary against instance hardness, on \mapRoom and \mapDenSparse. Open triangles at the right edge are the instances only \ConflictSelectionBestof{16} proved optimal---$62$ on \mapRoom and $95$ on \mapDenSparse---drawn at the lower bound $30\,\mathrm{s} / t_{\text{\ConflictSelectionBestof{16}}}$; there are none in the opposite direction on either map. The trend line covers only jointly-solved instances.}
    \label{fig:05_conflict_selection_paired_speedup}
\end{figure}
\begin{table}[]
    \centering
    \begin{tabular}{lrrrrrr}
        \toprule
        bucket & $n$ & $t_{\text{\ConflictSelectionArbitrary}}$ & $t_{\text{\ConflictSelectionBestof{16}}}$ & speedup & throughput & overhead \\
        \midrule
        \multicolumn{7}{l}{\emph{\mapRoom}} \\
            $0$--$0.1$ & 134 & 0.03 & 0.03 & 0.92$\times$ & 0.97$\times$ & 1.05$\times$ \\
            $0.1$--$1$ & 102 & 0.31 & 0.35 & 0.90$\times$ & 0.84$\times$ & 0.93$\times$ \\
            $1$--$30$ & 44 & 5.60 & 2.10 & 2.67$\times$ & 0.75$\times$ & 0.28$\times$ \\
            $>30$ (timeout) & 62 & $>30$ & 8.74 & $>3.43\times$ & -- & -- \\
        \midrule
        \multicolumn{7}{l}{\emph{\mapDenSparse}} \\
            $0$--$0.1$ & 137 & 0.02 & 0.02 & 1.03$\times$ & 0.79$\times$ & 0.77$\times$ \\
            $0.1$--$1$ & 44 & 0.24 & 0.15 & 1.55$\times$ & 0.63$\times$ & 0.41$\times$ \\
            $1$--$30$ & 15 & 3.35 & 0.85 & 3.94$\times$ & 0.89$\times$ & 0.23$\times$ \\
            $>30$ (timeout) & 95 & $>30$ & 2.30 & $>13.06\times$ & -- & -- \\
        \bottomrule
    \end{tabular}
    \caption{Paired comparison of \ConflictSelectionBestof{16} against \ConflictSelectionArbitrary by instance hardness. Speedup decomposes as throughput divided by search overhead, where throughput is the ratio of node expansion rates and search overhead the ratio of nodes expanded. All entries are per-instance ratios averaged geometrically within a bucket. The final row of each block holds instances \ConflictSelectionArbitrary did not prove optimal within the budget, whose speedups are right-censored and reported as lower bounds.}
    \label{tab:ablation:05_conflict_selection_paired_speedup}
\end{table}
Figure~\ref{fig:05_conflict_selection_paired_speedup} and Table~\ref{tab:ablation:05_conflict_selection_paired_speedup} compare \ConflictSelectionArbitrary and \ConflictSelectionBestof{16} on speedup, throughput, and overhead over $t_\text{\ConflictSelectionArbitrary}$ as a proxy for instance difficulty.
Probing costs node throughput on both maps:
\ConflictSelectionBestof{16} expands nodes at $0.63$ to $0.89$ times \ConflictSelectionArbitrary's rate on \mapDenSparse and $0.75$ to $0.97$ times on \mapRoom.
What it buys is a smaller constraint tree, and it buys progressively more of it as instances get harder---down to $0.77$, $0.41$ and $0.23$ of \ConflictSelectionArbitrary's node count on \mapDenSparse, and $1.05$, $0.93$ and $0.28$ on \mapRoom.
The net effect is therefore a crossover rather than a uniform gain, and \mapRoom shows it more starkly than \mapDenSparse: there, probing does not repay its cost at all below one second ($0.92\times$ and $0.90\times$), and only on instances taking \ConflictSelectionArbitrary more than a second does it pay off, at $2.67\times$.
On \mapDenSparse the corresponding progression is $1.03\times$, $1.55\times$ and $3.94\times$.
Consequently the aggregate geometric mean over jointly-solved instances---$1.25\times$ on \mapDenSparse and $1.08\times$ on \mapRoom---understates the effect substantially, being dominated by the trivial instances that make up $70\%$ and $48\%$ of those sets respectively.
The strongest evidence is the censored row.
Of the instances \ConflictSelectionArbitrary failed to prove optimal within the budget, \ConflictSelectionBestof{16} proved $95$ on \mapDenSparse, taking a geometric-mean $2.30$ seconds and therefore at least $13\times$ less time, with extreme cases exceeding $100\times$; on \mapRoom it proved a further $62$ at a geometric-mean $8.74$ seconds, at least $3.4\times$ less time, with extreme cases exceeding $50\times$.
On neither map is there a single instance in the opposite direction.

\subsection{Comparisons with \OCCBS on Additional Maps}
\label{sec:Experiments:OCCBS_comparison}

We apply the benchmarking procedure of Section~\ref{sec:Experiments:Ablations} to size MovingAI gridmaps and the three sampled roadmaps of Figure~\ref{fig:Experiments:roadmaps}, 
using configuration \textbf{C6} with \ConflictSelectionBestof{16} conflict selection.

Figure~\ref{fig:Experiments:MainBenchmark:Gridmaps} compares \ours and \OCCBS on the gridmaps.
At $\gapUB=0$ the two are roughly equivalent, \ours proving optimality on $3\;414$ instances against \OCCBS's $3\;269$.
\ours is ahead on five of the six maps, by as much as 
$2.0\times$ on \MovingAIMap{Room-64-64-8} ($376$ against $187$) and 
$1.7\times$ on \MovingAIMap{Room-32-32-4} ($440$ against $262$),
and behind on \MovingAIMap{Warehouse-10-20-10-2-2} ($949$ against $1\;296$).
Given that \OCCBS is implemented in C++ and \ours in Python, 
we read this as evidnce that the implementation optimizations described in Appendix~\ref{sec:Implementation} offset the language gap, rather than as a claim of superiority at proving optimality.

The separation appears once suboptimality is permitted.
Within the same $30$ second time limit, \ours finds a feasible solution to $5.8\times$ as many instances as \OCCBS solves optimally ($19\;078$ against $3\;269$), and the agent count at which half the scnearios are still solved rises accordingly:
from $22$ to $76$ on \MovingAIMap{Empty-16-16},
$7$ to $91$ on \MovingAIMap{Room-64-64-8}, 
$6$ to $70$ on \MovingAIMap{Maze-128-128-2}, 
and $33$ to $181$ on \MovingAIMap{Random-64-64-20}. 
On \MovingAIMap{Warehouse-10-20-10-2-2}, \ours solves every scenario at $300$ agents (the limit of these preliminary experiments, reached by $24$ of the
$25$ scenarios) while \OCCBS is down to a single scenario beyond $124$ agents and solves none at $135$. 
That map's $5.6\times$ is therefore a lower bound.
\begin{figure}
    \centering
    \includegraphics[width=1\linewidth]{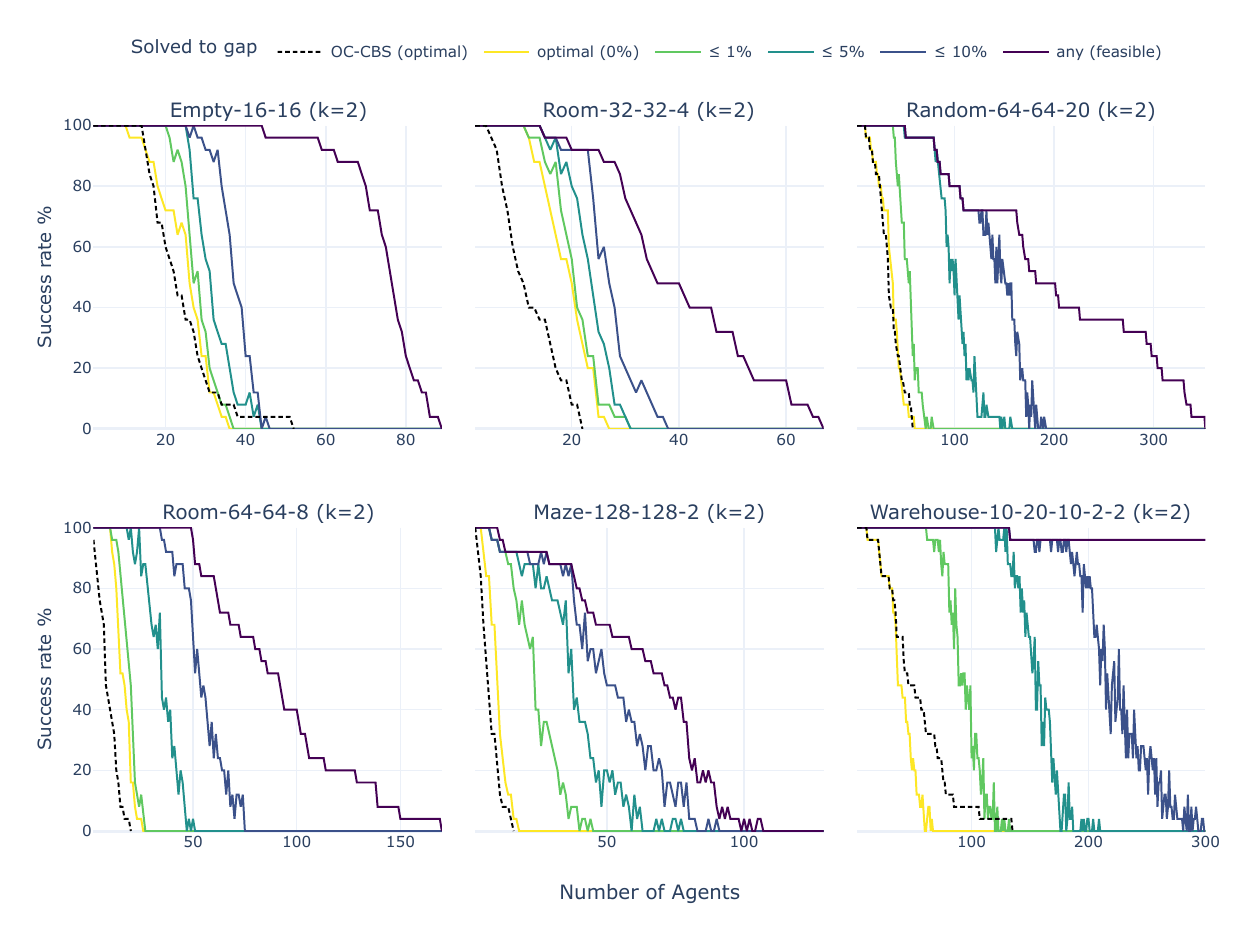}
    \caption{A comparison between \ours and \OCCBS on success rate at various optimality gap upper-bound values, on six different gridmaps. The map \MovingAIMap{Warehouse-10-20-10-2-2} was tested only up to $300$ agents (experiments are
    ongoing).}
    \label{fig:Experiments:MainBenchmark:Gridmaps}
\end{figure}

Figure~\ref{fig:Experiments:MainBenchmark:Roadmaps} repeats the comparison on the roadmaps, where the same pattern holds more starkly. 
At $\gapUB=0$ the two methods are equivalent: 
\ours proves optimality on $616$ instances against $592$, ahead on all three maps but by margins that do not extend the reachable range:
the success-rate curves reach zero at $7$ agents on \MovingAIMap{Maze-32-32-2} and 
$19$ on \MovingAIMap{Den312d} for both methods, 
and at $18$ against \OCCBS's $20$ on \MovingAIMap{Empty-16-16}. 
This is consistent with Section~\ref{sec:Experiments:Ablations:Configurations}, where \textbf{C1} outperformed \OCCBS on both gridmaps but lagged on the roadmap \mapDenSparse.

Permitting suboptimality again separates them, 
by $5.4\times$ overall ($3\,220$ instances against $592$). 
On \MovingAIMap{Den312d} the agent count at which half the scenarios are solved rises from $10$ to $85$, 
and the deepest scenario reaches $132$ agents against \OCCBS's $19$. 
\begin{figure}
    \centering
    \includegraphics[width=1\linewidth]{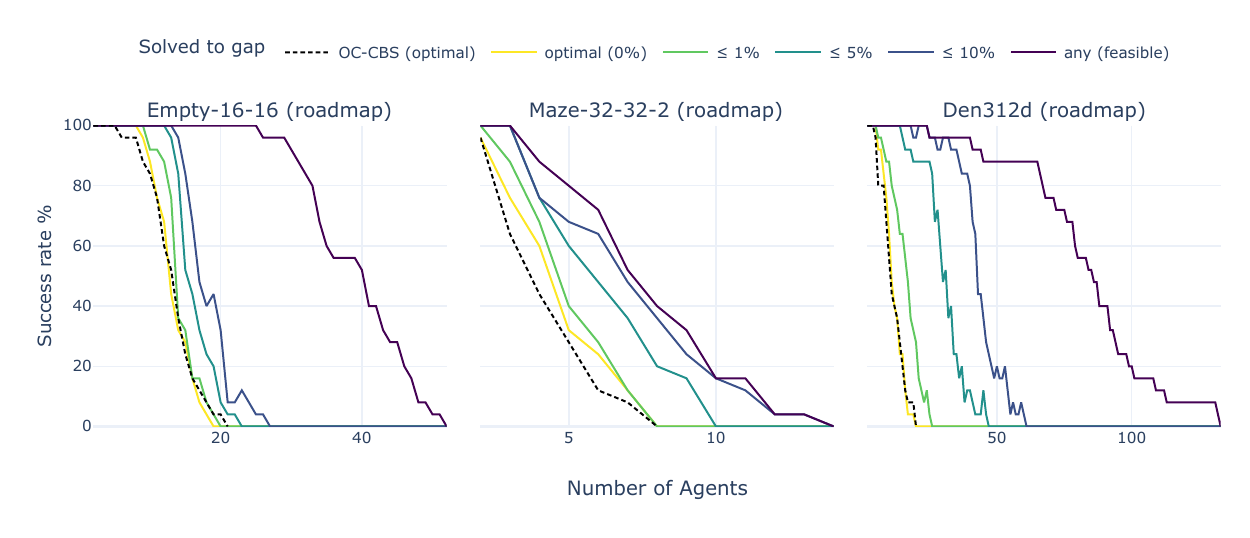}
    \caption{A comparison between \ours and \OCCBS on success rate at various optimality gap upper-bound values, on three different roadmaps.}
    \label{fig:Experiments:MainBenchmark:Roadmaps}
\end{figure}
Across both figures, the benefit of relaxing \gapUB grows with the size of the map. 
On the smallest maps (\MovingAIMap{Empty-16-16}, \MovingAIMap{Room-32-32-4}, and the \MovingAIMap{Maze-32-32-2} roadmap), 
\ours solves $2.8$ to $3.8\times$ as many instances as \OCCBS, 
while on the largest (\MovingAIMap{Room-64-64-8}, \MovingAIMap{Maze-128-128-2}, and the \MovingAIMap{Den312d} roadmap),
the factor is $8.6$ to $12.3\times$. 
Accepting a bounded optimality gap is thus most valuable precisely where optimal methods scale worst.

\subsection{Demonstration of Non-Convex Agents and Smooth Motion}
\label{sec:Experiments:ComplexDemonstration}

This section describes the demonstration in Figure~\ref{fig:Demo} that was first introduced in Section~\ref{sec:Introduction}.
The problem includes four agents, two of which are forklift-shaped and the other two are crane-shaped.
The problem uses the implementation in Section~\ref{sec:Implementation:models:3}, which supports arbitrary 2D polygonal shaped agents and arbitrary connected and continuous trajectories.

Preprocessing consist of two parts, the 
the graph information (Section~\ref{sec:Implementation:graph_information})
and the conflict information (Section~\ref{sec:Implementation:conflict_information}).
All agents move on the same graph, with preprocessing of this graph taking $27$~milliseconds.
The conflict information is computed for each unique agent-type pair:
between two forklifts ($175$~seconds), 
between two cranes ($180$~seconds),
and between a forklift and a crane ($306$~seconds),
all to a precision of $10^{-2}$~second.
This graph and conflict preprocessing is a one-time cost that is valid for any instance on this same map, with any number of forklift and crane agents.
If new types of agents are introduced, only the information involving these new agents must be computed.

\ours took $5.50$~milliseconds to find the optimal solution.

% All-pairs shortest paths time: 0.02261154167354107
% II crane crane: 179.94088991731405
% II fork fork: 174.5344339585863
% II crane fork: 306.1525507499464

\section{Discussion}
\label{sec:Discussion}

The comparison in Section~\ref{sec:Experiments} shows that \ours is able to solve around $5\times$ as many problems across the entire benchmark compared to \OCCBS.
It is important to remember, however, that \OCCBS is designed to only return optimal solutions.
Read as a comparison of exact solvers, the two are roughly equivalent.
\ours proves optimality on $3\,414$ gridmap instances against $3\,269$,
and on $616$ roadmap instances against $592$, only differing by about $4$\%.
The result we draw from this is therefore not that \ours is the stronger optimal solver, 
but that accepting a bounded optimality gap buys roughly a $5\times$ increase in the number of instances solvable within the same budget ($19\,078$ against $3\,269$ on the gridmaps, $3\,220$ against $592$ on the roadmaps), 
and that no existing continuous-time solver offers that trade at a known optimality gap upper bound and gurantees of eventual optimality. 
The size of the gain grows with the map:
$2.8$ to $3.8\times$ on the three smallest maps against $8.6$ to $12.3\times$ on the three largest, 
so bounded suboptimality is worth most where optimal methods scale worst. 
Note however that the reported optimality gap is an upper bound, so the true suboptimality of the returned solutions is unmeasured and may be considerably smaller.

The ablations point consistently to a single design principle: 
finding a solution and proving it optimal are different workloads, and no fixed configuration serves both. 
The repair function \oursTPSIPP extends the reachable agent count several-fold at loose \gapUB, 
yet at $\gapUB=0$ every repairing configuration is worse than its non-repairing counterpart on all three maps (\textbf{C2} below \textbf{C1}, \textbf{C4} below \textbf{C3}), because time spent repairing is time not spent expanding CT nodes and raising the lower bound. 
Careful conflict selection, on the other hand, appears to greatly benefit the search for optimal solutions, which is consistent with prior work~\cite{andreychuk2021improving}.
This effect is less noticeable at higher $\gapUB$.
Parallelism improves solver performance with instance difficulty.
From our limited experiments across this dimension, the throughput increase outweighs the overhead of managing a pool of multiple processes only beyond a certain problem difficulty threshold.
Since \gapUB is observable at runtime, a solver can schedule its policies against it, which is what \textbf{C6} does, 
matching \textbf{C5} at loose \gapUB while recovering \textbf{C3}'s behaviour at
$\gapUB=0$, which no static configuration achieves at both ends. 
We expect this to apply to anytime bounded-suboptimal search more broadly than to \ours specifically.

The benefit of multiple policies is coverage rather than speed, and the aggregate statistics obscure this. 
Eleven workers reach at most $4.77\times$ the node expansion rate of one, at most $43\%$ parallel efficiency, and part of that is spent on nodes a sequential search would not expand ($1.4$ to $3.2\times$ overhead). 
Resultingly, \textbf{C3} solves only $6$--$15\%$ more instances than \textbf{C1}, and is the slower solver below roughly half a second. 
What the policies do buy is $67$ instances that \textbf{C3} proved optimal and \textbf{C1} did not, 
with none in the opposite direction, raising coverage of the paired instances from $92.0\%$ to $96.3\%$. 
For an anytime solver the relevant question is which instances become solvable at all, not how much faster the already-solvable ones are handled.

The conflict-selection results suggest that the value of cardinality reasoning lies in ranking candidate conflicts rather than classifying them, and that a small candidate sample suffices.
\ConflictSelectionSemiCard, which accepts the first conflict raising either child's cost, recovers only $1$ of the $95$-instance gap between \ConflictSelectionArbitrary and \ConflictSelectionBestof{16} on \mapDenSparse and $25$ of $62$ on \mapRoom, 
while paying for an uncapped amount of work; 
whether it is a net gain or a net loss depends on the map. 
By contrast \ConflictSelectionBestof{8} lies within $5$ instances of \ConflictSelectionBestof{\infty}. 
Since each probe costs two low-level searches and the candidate list grows with the agent count, a bounded probe budget is both cheaper and, in these experiments, no less effective. 
Both observations apply to CBS variants in general, including the discrete-time case.

As detailed in Appendix~\ref{sec:Implementation}, the generality of \ours rests on moving domain-specific reasoning offline. 
Geometry, agent shape, and the conflict definition are distilled during pre-processing into information the high-level search queries without interpreting, which is why the same solver handles circular agents on straight-line edges, arbitrary continuous edge trajectories, and non-convex polygonal agents without change to the main solver. 
The cost of generality is therefore paid in pre-processing rather than in solving.
This also bounds what we can currently claim: 
the pre-processing measurements in Figure~\ref{fig:Experiments:preprocessing_times} cover only the simple model of circular agents with straight-line traversals at constant speed, and establishing the corresponding cost for curved trajectories and polygonal agents is ongoing work.

Finally, \ours is implemented in Python and is level at $\gapUB=0$ with a C++ implementation of \OCCBS. 
We read this as evidence that data-structure and pre-processing choices dominate language choice at this scale, rather than as a claim about the algorithms, and it suggests that a compiled implementation of \ours would extend the range of solvable instances further. 
We leave an \ours C++ implementation for future work.

Further work could also investigate the incorporation of disjunctive conflicts~\cite{andreychuk2021improving} and conflict bypassing~\cite{boyarski2015don}.
Just as with careful constraint selection, both aim to reduce the size of the CT and thereby reach an optimal solution in fewer iterations.
Incorporating disjunctive constraints would require a careful analysis of branching soundness to preserve the correctness guarantees, whereas conflict bypassing should require comparatively little effort.
Conflict bypassing means to quickly search for a conflict-free path with an equivalent cost instead of branching in the CT.
We suspect that such a strategy works best on gridmaps where many paths of equivalent cost exist, but not on roadmaps with non-unit length edges.
Furthermore, \oursTPSIPP is only one repair function among possibly many, and \ours supports several repair functions within the same portfolio.
Such repair functions could be paired with the selection policy best suited to them, or invoked at different stages of the search; some may perform better early on, for instance, while others may dominate later, once the underlying \OCCBS search has reached deeper CT nodes closer to a solution.

Further investigation into how \ours scales with the number of CPU cores would also be valuable.
We expect performance to improve with additional cores up to a certain point.
Consider the shallowest CT node representing an optimal solution, at depth $N$.
If every leaf node were expanded at each iteration, as in a breadth-first search, this node would be found within $N$ iterations, but doing so would require an impractically large $2^N$ cores.
The CT node-selection policies instead focus the search on the most promising nodes, so we expect an optimal solution to be reachable in close to $N$ steps without requiring as many as $2^N$ cores.
Where this saturation point lies remains an open empirical question.

Finally, deployment on physical hardware remains the ultimate test for a MAPF solver intended for robotics.
\ours allows for many types of agents, such as land-based robots, aerial vehicles, robotic arms, even non-physical agents such as jobs in a job-shop scheduling problem. 
As long as an agent has an associated state space which it moves through according to a discrete graph, \ours can plan its path to desired positions and trajectories, in the same system as other types of agents.  
Future work could evaluate how planning directly over feasible motions, which \ours enables at higher scalability, affects robustness and solution quality under real-world execution.

\section{Conclusion}
\label{sec:Conclusion}
This work generalised the MAPF problem to heterogeneous agents on individual graphs, continuous time, abstract pairwise conflicts, sequences of edge and vertex tasks, and agents free to move after completing them.
For this generalised problem we presented \ours, a solver that combines the guarantees of an exact solver with the responsiveness of a suboptimal one.
Like \OCCBS, \ours is exact and solution complete: given enough time, it is guaranteed to return an optimal solution.
Unlike \OCCBS, it does not require that time to elapse before returning anything: at every point during its search, \ours makes available an incumbent solution together with a certified upper bound on how far that incumbent can be from optimal, a bound that tightens as the search continues and reaches zero exactly when optimality is reached and verified.
This anytime behaviour comes from decoupling the search that establishes the lower bound from a configurable portfolio of repair functions that searches for incumbents, one instance of which, TP-SIPP, we introduced to handle agents that must vacate a completed task; the same decoupling also allows multiple policies to be expanded in parallel across processor cores.

Our preliminary experiments show that \ours matches \OCCBS at finding and verifying optimal solutions, while solving several times as many problem instances once a bounded optimality gap is accepted, extending scalability from the tens to the hundreds of agents.
The ablations further show that the portfolio's components trade off along different axes rather than uniformly.
Repair functions markedly extend the reachable agent count at loose optimality gaps, but the time they spend searching for incumbents is time not spent raising the lower bound, making them a net cost once the goal is a proven optimum rather than a fast solution.
Additional policies benefit harder instances more than easy ones, since managing a pool of processes costs more than it returns on instances a single core already solves quickly; their main effect is not raw speed but coverage, bringing instances that are otherwise out of reach within reach.
Careful conflict selection helps throughout the search rather than only near the end, and its advantage grows with instance difficulty, but that advantage is largest when the goal is a proven optimum and narrows, without disappearing, as the required gap loosens.
This motivates a portfolio that shifts its configuration as the optimality gap closes, which our gap-cadence configuration does.

Existing continuous-time MAPF solvers force a choice: an exact solver that returns nothing until it has proven optimality, or a suboptimal one that returns quickly but with no guarantee on solution quality, nor of eventual optimality.
\ours removes this trade-off.
By pairing OC-CBS's exactness with an anytime incumbent and a certified optimality gap, it is usable both as a fast heuristic solver and as an exact one, without having to commit to either role in advance.

%%%%%%%%%%%%%%%%%%%%%%%%%%%%%%%%%%%%%%%%%%%%%%%%%%%%%%%%%%%%%%%%%%%%%%%%%%%%%%%%%%%%

\appendix

\section{Implementation}
\label{sec:Implementation}

As discussed in section~\ref{sec:ProblemFormulation:generality}, 
much of the the problem formulation's generality stems from the abstract function $\inconflictfunc$ separating the \ours solver from the specifics of an application. 
This section introduces a practical framework surrounding this abstract $\inconflictfunc$ function, and additionally details our implementation of this framework under multiple different assumptions.
% \AC{All source code, examples, animations, and results are available on our repository\footnote{\ourRepo}.}

\begin{figure}
    \centering
    \includegraphics[width=1\linewidth]{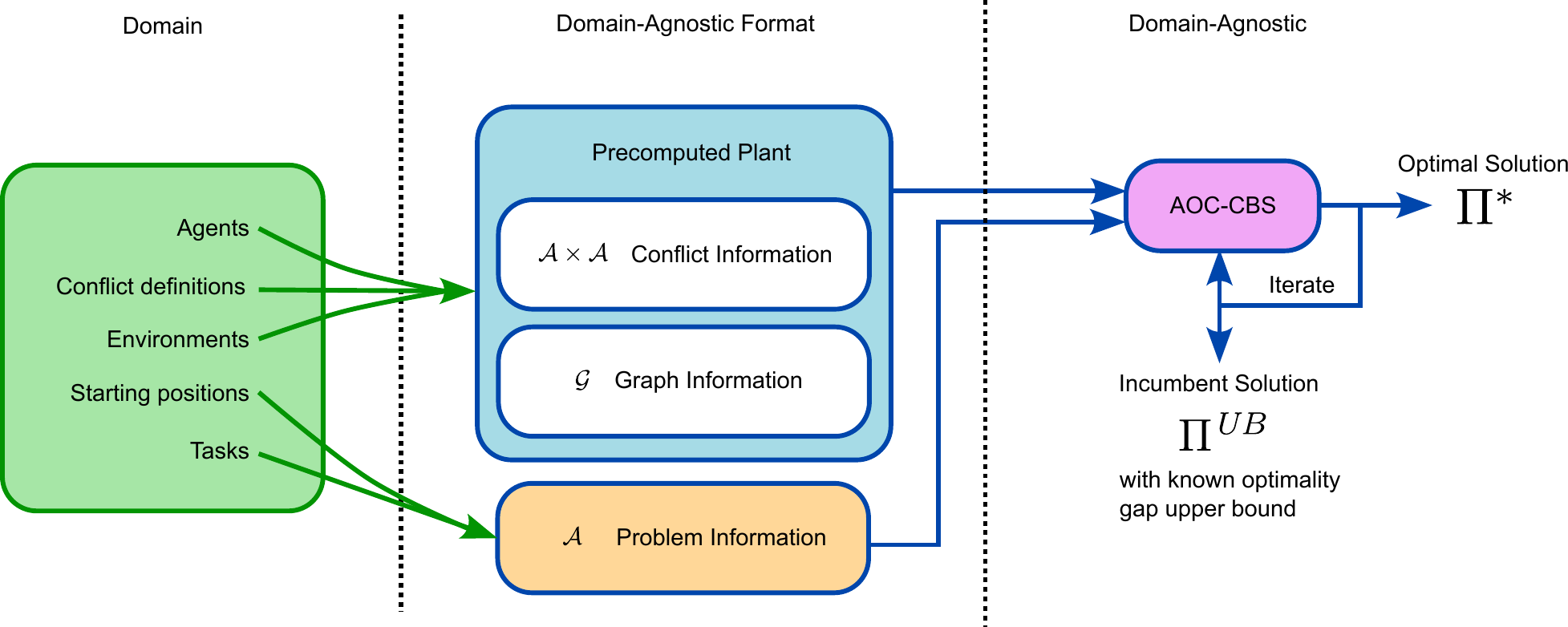}
    \caption{The architecture of the framework surrounding \ours.}
    \label{fig:ImplementationPipeline}
\end{figure}
Figure~\ref{fig:ImplementationPipeline} provides an overview of the architecture surrounding \ours.
On the left is the specific domain, containing the agents, what they are and how they move in their state spaces, the conflict definitions (e.g., geometric collisions, mutually exclusive resources, etc.), the environments in which the agents exist, the agents' starting positions, and their tasks.
This domain-specific information is converted into a domain-agnostic format (center of Figure~\ref{fig:ImplementationPipeline}), consisting of two parts: the \emph{plant} information and the \emph{problem} information.
The plant information is assumed to be known before-hand, including what each agent is and the graph it moves on. 
For instance, we assume that a warehouse knows how many agents they have and what the agents are before the start of operations, and that this information is static.
The precomputed information contains conflict information between agents as well as information about the graphs themselves that is useful during single-agent path planning.
The problem information is not known before-hand and contains each agent's starting position and its sequence of tasks; this information cannot be utilized outside of the runtime.
The plan and problem information are specific to the domain, encoding all that is required about the specific domain, but is saved in an domain-agnostic format.
That is, the format of this information is the same, regardless of if the domain is a logistical warehouse, and manufacturing cell of robotic arms, or a swarm of drones.
Finally, the plant information and problem information is fed to \ours, upon which the iterations begin.
At every iteration, an incumbent solution $\incumbentSolution$ (once one has been found) with a known optimality gap upper bound is made available.
Within a finite amount of time, \ours halts with an optimal solution $\JointPlan^*$.

\subsection{Precomputed Conflict Information}
\label{sec:Implementation:conflict_information}

% The plant contains the agents $\Agents$ which we assume are known before-hand, but not each agent's starting vertex or task sequence. 
% Thus, all information about the plant can be precomputed and stored, to be used during deployment on a specific problem.
The conflict information distils the domain-specific conflict definitions into a set of time intervals, an \emph{annotation} technique first introduced in~\cite{PSIPP}.
These time intervals can be precomputed and stored in lookup tables for use during runtime.
This conflict information is stored in a domain-agnostic format and is what \ours consumes.

For any agent pair $\tuple{i, j}\in\Agents\times\Agents: i\neq j$, the following conflict information can be computed and stored in lookup tables:
\begin{enumerate}
    \item \textbf{Vertex-Vertex}: $\forall \tuple{\vertex^i,\vertex^j}\in\Vertices^i\times\Vertices^j$, is $\inconflict{i, \vertex^i_\state, j, \vertex^j_\state}$?
    \item \textbf{Vertex-Edge}: 
    \begin{itemize}
        \item $\forall \tuple{\vertex^i,\edge^j}\in\Vertices^i\times\Edges^j$, for which times $t\in[0, \edge^j_\dur)$ is $\inconflict{i, \vertex^i_\state, j, \edge^j_\traj(t)}$?
        \item $\forall \tuple{\edge^i,\vertex^j}\in\Edges^i\times\Vertices^j$, for which times $t\in[0, \edge^i_\dur)$ is $\inconflict{i, \edge^i_\traj(t), j, \vertex^j_\state}$?
    \end{itemize}
    \item \textbf{Edge-Edge}: $\forall \tuple{\edge^i, \edge^j}\in\Edges^i\times\Edges^j$, for which start times $\delta\in[-\edge^i_\dur, \edge^j_\dur)$ for $i$ to traverse $\edge^i$ does there exists a $t\in[\max(0, \delta), \min(\edge^i_\dur, \edge^j_\dur))$ where $\inconflict{i, \edge^i_\traj(\delta+t), j, \edge^j_\traj(t)}$?
\end{enumerate}
Informally, the Vertex-Vertex case simply tell us which vertices can the two agents not be located at simultaneousness, 
the Vertex-Edge cases tell us during what times do the agents conflict if one waits at a vertex and the other begins traversing its edge at time $t=0$,
and the Edge-Edge case tells us at what times can one agent begin traversing an edge given that the other begins traversing its edge at time $t=0$.
This conflict information is what~\cite{OCCBS} refers to as \emph{intersection intervals} and what~\cite{CCBS_revisit} refers to as \emph{conflict intervals}.
The bounds on $t$ for the Vertex-Edge cases and $\delta$ and $t$ for the Edge-Edge case exist since we do not consider conflicts outside of the time during which the edge trajectories are defined.

The Vertex-Vertex case results in a countable and bounded list of mutually exclusive vertices, however, the remaining cases work over continuous times and therefore have certain requirements for these computations to be practically feasible. 
Under a finite-precision implementation with a minimum precision $\epsmach$, two real values differing by less than $\epsmach$ are indistinguishable and by only a few $\epsmach$ indistinguishable from rounding errors. 
Any conclusions at these scales unreliable.
For this reason, we assume that all conflicts have a duration of at least a few order of magnitude above $\epsmach$. 
We then know that over the searchable intervals ($t\in[0, \edge^j_\dur)$ and $t\in[0, \edge^i_\dur)$ for Vertex-Edge, $\delta\in[-\edge^i_\dur, \edge^j_\dur)$ for Edge-Edge) the times at which conflicts occur are grouped into maximally connected, non-zero-sized intervals.
Searching for these intervals can be done using established methods (e.g., using Lipschitz bounds as described in Section~\ref{sec:Implementation:LipschitzBounds} below) to a specified precision.
Furthermore, conservative estimations of these interval can be used, where the size of the conflict intervals are inflated, to include a safety margin to also allow for courser precision when searching for these intervals. 

This conflict information contains all that is needed to implement conflict avoidance. The Vertex-Vertex information provides mutually exclusive vertex-pairs that are used to avoid wait-wait conflicts. Although, we note that wait-wait conflicts are not reasoned over in the main \ours algorithm since, assuming the agents start in conflict-free positions, wait-wait conflicts must always be preceded by a move-wait or move-move conflict. However, this information is still useful for various optimizations.
The Vertex-Edge information provides times when an agent cannot wait at a vertex if another traverses and edge at time $t=0$; since the graphs are time-invariant, this information can be shifted in time.
Likewise for the Edge-Edge information that assumes one agent traverses its edge at $t=0$, this information can be shifted in time. 

Although it is stated in the above that these computations must be done for each ordered agent-pair $\tuple{i, j}\in\Agents\times\Agents: i\neq j$, in practice the above information computed for $\tuple{i,j}$ is easily converted for $\tuple{j, i}$: 
this is obvious for Vertex-Vertex, 
Vertex-Edge are already bi-directional, and for Edge-Edge the intervals can be negated and inverted since it is the relative time difference between the traversals that matters.
That is, if $i$ cannot begin its traversal in $[t_1, t_2)$ when $j$ begins its traversal at $t=0$, then if instead $i$ begins its traversal at $t=0$ then $j$ cannot begin its traversal in $[-t_2, -t_1)$. 
Thus, the worse-case number of agent pairs to compute for is reduced by roughly half.

Furthermore, coarse filtering can be applied to quickly reduce the number of graph element pairs to check for collisions. For instance, if two graph elements are too far from each other for two agents to possibly conflict while one each respective element, then computationally intensive conflict detection is not required. 
This further reduces the computational work for preprocessing.

Many applications may have multiple agents of the same \emph{agent model}, $\model^\agent$ for agent $\agent$. 
Although it is sufficient for our framework to let $\model^\agent$ remain abstract, $\model^\agent$ could for instance specify that $\agent$ is a robotic arm, as well as all the attributes required to determine how it interacts with other agents.
Importantly, if $\inconflict{i, \state_i, j, \state_j}$ then for another agent $k$ with $\model^k=\model^i$ we know that $\inconflict{k, \state_k, j, \state_j}$.
Given this, the conflict information does not need to be computed for every (unordered) agent pair, but instead every unordered pair of unique $\tuple{\model^\agent, \Graph^\agent}$ over $\agent\in\Agents$.
In the best case, all agents have the same model and operate on the same graph (a common assumption in the MAPF literature), requiring only a single instance of the conflict information.

Since this conflict information is the only connection between \ours and the specific application over which \ours plans, all correctness guarantees that we provide here and the validity of any returned solution is with respect to this conflict information.

\subsubsection{Lipschitz-Bounded Conflict Search}
\label{sec:Implementation:LipschitzBounds}

The Vertex-Edge and Edge-Edge conflict information consists of intervals along a continuum of time or offset, and reliably computing such an interval requires verifying that every point within it corresponds to a conflict, without evaluating each of the infinitely many points it contains. 
This section develops a general method for doing so using Lipschitz bounds, verifying whole intervals from a small number of pointwise evaluations.

Conflict is determined using two components: a \emph{conflict predicate} and a \emph{Lipschitz bound}. 
A conflict predicate $f(i, \state_i, j, \state_j)$ encodes whether agent $i$ in state $\state_i$ and agent $j$ in state $\state_j$ are in conflict, via $f(i, \state_i, j, \state_j) \leq 0 \Leftrightarrow \inconflict{i, \state_i, j, \state_j}$. 
A Lipschitz constant $L_f$ bounds how fast $f$ can change with respect to the relative position between the two agents' centres, i.e. $|f(i, \state_i, j, \state_j) - f(i, \state_i', j, \state_j')| \leq L_f \cdot | \Delta \position  -  \Delta \position'|$ for any two relative positions $\Delta \position$ and $\Delta \position'$.

These three components are used together to efficiently partition time into conflicting and non-conflicting intervals, without needing a closed-form solution for when conflicts occur. 
Consider agents $i$ and $j$ following trajectories $\edge_\traj^i$ and $\edge_\traj^j$, 
with durations $\edge^i_\dur$ and $\edge^j_\dur$, 
and known maximum speeds $|\velocity^i|_\mathrm{max}$ and $|\velocity^j|_\mathrm{max}$ respectively. 
At some time $t$, the predicate can be evaluated as $f(t) = f(i, \edge_\traj^i(t), j, \edge_\traj^j(t))$. 
Suppose $f(t) > 0$, i.e. at time $t$ the agents are not in conflict. 
Since time is continuous, ruling out only the single instant $t$ is not useful on its own. 
Instead, a window around $t$ can be ruled out directly using the Lipschitz bound. 
The relative position between the agents can change at a rate of at most $|\velocity^i|_\mathrm{max} + |\velocity^j|_\mathrm{max}$, 
and by the Lipschitz bound, $f$ can therefore change by at most $L_f \cdot (|\velocity^i|_\mathrm{max} + |\velocity^j|_\mathrm{max})$ per unit time. 
Consequently, over a duration of magnitude at most
\begin{equation*}
    \Delta t = \frac{f(t)}{L_f \cdot (|\velocity^i|_\mathrm{max} + |\velocity^j|_\mathrm{max} )}
\end{equation*}
$f$ cannot have decreased to $0$, so no conflict can occur within the interval $[t-\Delta t, t + \Delta t]$.
This interval can be skipped without further evaluation, and the process repeated from its boundary, allowing the conflict-free and conflicting regions of time to be identified using only pointwise evaluations of $f$.
The same procedure symmetrically holds if $f(t)\leq0$ (i.e., a conflict at time $t$) instead; $[t-\Delta t, t + \Delta t]$ is an interval of only conflict.

This method can be repeated at each boundary of verified conflict/no conflict interval until a desired precision is reached.
In practice, this method is not run to converge on a single point since that would require unboundedly many evaluations near any boundary where $f(t)$ approaches $0$.
Any interval that remains unresolved when the search terminates (that is, the small intervals between a verified conflict and verified non-conflict interval of size less than the desired precision) is conservatively treated as a conflict interval.
This yields an estimate where no genuinely conflicting interval is misclassified.
The cost of this guarantee is that a small margin of conflict-free interval is classified as conflicting.
However, for the intended use of this conflict information, a false conflict merely costs some avoidable caution, while the opposite would be unsound.

\subsection{Precomputed Graph Information}
\label{sec:Implementation:graph_information}

Besides precomputing conflict information from above, we also precompute information pertaining to a single graph $\Graph = \tuple{\Vertices, \Edges}$.
The low-level \oursPP (Section~\ref{sec:Method:low-level}) and \oursTPSIPP (Section~\ref{sec:Method:TPSIPP}) perform \Astar searches to construct a lowest-cost plan for a single agent.
The heuristic used in these \Astar searches take advantage of the minimum cost from any vertex to any other vertex, which is precomputed and stored. 
The runtime cost of computing the all-pairs shortest paths in a non-negative weighted graph is $\complexity{|\Vertices|^3}$ with the Floyd-Warshall algorithm or, on sparse graphs where $|\Edges| \ll |\Vertices|$, using Dijkstra's with a Fibonacci heap gives $\complexity{|\Vertices|^2\log|\Vertices| + |\Vertices||\Edges|}$~\cite{cormen2022introduction}.
The memory cost is $\complexity{|\Vertices|^2}$.

In our implementation, this graph information is computed and stored on disk. 
When a particular problem is given, we load into RAM only the shortest-path distances to the source vertex of every agent's tasks, as that is the only information needed during the \Astar searches. 
This reduces the amount of RAM storage needed (significant for large graphs) while avoiding computing this information at runtime.

\subsection{Agent and Environment Models}

The conflict information (Section~\ref{sec:Implementation:conflict_information}) and graph information (Section~\ref{sec:Implementation:graph_information}) build on an underlying model of the agents, the graphs, and the conflict definitions.
We currently provide implementations for the following MAPF-grounded models
\begin{itemize}
    \item[] \textbf{\nameref{sec:Implementation:models:1}} (Section \ref{sec:Implementation:models:1}): the common MAPF assumptions---agents are circular, omnidirectional, and without acceleration constraints. Edge trajectories are straight lines at constant speed, and all vertices are waitable. Conflicts are based on geometric collision. Unlike most MAPF problems, agents here may still have unique radii and move on unique graphs.
    \item[] \textbf{\nameref{sec:Implementation:models:2}} (Section \ref{sec:Implementation:models:2}): extends \nameref{sec:Implementation:models:1} with arbitrary connected and continuous trajectories. Vertices may be non-waitable.
    \item[] \textbf{\nameref{sec:Implementation:models:3}} (Section \ref{sec:Implementation:models:3}): extends \nameref{sec:Implementation:models:2} with arbitrary 2D polygonal-shaped agents.
    % \item[] \textbf{\nameref{sec:Implementation:models:4}} (Section \ref{sec:Implementation:models:4}): extends \nameref{sec:Implementation:models:2} to 3D, where agents are spherical.
    % \item[] \textbf{\nameref{sec:Implementation:models:5}} (Section \ref{sec:Implementation:models:5}): extends \nameref{sec:Implementation:models:4} with arbitrary mesh-defined agent shapes.
\end{itemize}
% The models progressively get more and more general, allowing for more problems to be modelled.
% However, the more general the model becomes, the fewer assumptions can be used for speeding up the precomputation of conflict and graph information.
% For instance, conflict information in \nameref{sec:Implementation:models:1} can be computed in closed-form, but not the preceding models. 
% We stress, however, that these implementations remain outside the core \ours solver, making it possible for additional models to be developed for other applications and then solved using \ours. 
\AC{Work is ongoing to extend these models into 3D space.}

\subsubsection{Model 1}
\label{sec:Implementation:models:1}

This model follows the common MAPF assumptions where 
agents and graphs are defined in 2D Euclidian space, 
the agents are circular (although here they have different radii and exist on separate graphs),
and edge traversals are done in straight lines at constant speed from the source to target vertex.
An agent's state space consists of 2D position only; orientation does not matter for circular agents, and velocity is not necessary since all vertices are waitable.
Conflicts include only geometric collision: two agents conflict when their circles overlap.

This setting allows for the closed-form computation of conflict information~\cite{walker2019collision, PSIPP, combrink2025advances} (see~\cite{combrink2025advances} for a visual description), making this model the fastest to precompute out of those provided in this work.
Vertex-Vertex information reduces to simply checking if the distance $d$ between two agents at their respective vertices is less than the sum of their radii, $d < r_1 + r_2$,
and Vertex-Edge requires the same check for the distance $d$ from the stationary agent's vertex to the nearest point on the line segment of the moving agent's traversal.
Edge-Edge collisions instead form a conic section $d^2(t,\delta)$ representing the squared distance between the two agents, as a function of time $t$ and the one agent's starting time $\delta$ (the other agent starting at $t=0$).
The interior of the ellipse $d^2(t, \delta)=0$ in the $(t,\delta)$-plane represents a guaranteed collision only if the agents travel along the infinite line between their respective source and target vertices. 
The time when both agents are actually traversing their respective line segments (instead of the infinite lines) form a parallelogram in the $(t,\delta)$-plane.
For every point $(t,\delta)$ within both the parallelogram and the ellipse, the agents will be colliding with each other at time $t$ if the one agent traverses its edge starting at time $\delta$.
The points minimizing and maximizing $\delta$ within this intersection form a single interval, which is what we store as the Edge-Edge conflict information.

\subsubsection{Model 2}
\label{sec:Implementation:models:2}

This model extends \nameref{sec:Implementation:models:1} to include arbitrary connected and continuous edge trajectories.
We describe how trajectories are represented and how the conflict information can be computed efficiently.

A state in this model is a pair $(\position, \velocity) \in \Real^2 \times \Real^2$ consisting of a position $\position$ and velocity $\velocity$;
orientation is omitted since agents are circular.
Each vertex $\vertex$ corresponds to a specific state $\vertex_\state = \tuple{\position_\vertex, \velocity_\vertex}$, and is waitable if $\velocity_\vertex=\mathbf{0}$.
The trajectory $\edge_\traj(t)$ of an edge $\edge\in\Edges$ from $v\in\Vertices$ to $u\in\Vertices$ is defined by a pair of cubic splines, one for each of the two dimensions, interpolated through a sequence of position-time waypoints with clamped boundary conditions.
At the first waypoint, $\edge_\traj(0)=\position_v$ with $\edge'_\traj(0) = \velocity_v$, and at the last, $\edge_\traj(\edge_\dur) = \position_u$  with $\edge'_\traj(\edge_\dur)) = \velocity_u$.
The trajectory passes through each intermediate waypoint at the position and time it specifies, with the velocity and acceleration of the incoming segment matching those of the outgoing segment.

Unlike in the previous model where straight-line trajectories allow for the closed-form computation of conflict information, 
here the arbitrarily shaped trajectories warrant the use of the Lipschitz bounded method (Section~\ref{sec:Implementation:LipschitzBounds}).
Additionally, since conflict is collision-based and therefore depends on the spatial separation of agents, 
we additionally make use of an \emph{interaction radius}
which is the largest possible center-to-center separation between two agents at which a conflict remains possible.
That is, if the agents' centres are further apart than the sum of their interaction radii, no conflict can occur regardless of their orientation or state.

In this model, agents are circular, and conflict is defined by circle separation: two agents with radii $r_i$ and $r_j$ are in conflict when the distance between their centres is at most $r_i + r_j$. 
This gives the straight-forward conflict predicate
\begin{equation*}
    f(i, \state_i, j, \state_j) = | \position_j - \position_i| - (r_i + r_j),
\end{equation*}
where $\position_i$ and $\position_j$ are the positions extracted from $\state_i$ and $\state_j$. 
Under this predicate, the interaction radius is exactly $r_i + r_j$, since $f\leq0$ is possible only when the centres are within this distance. 
The Lipschitz constant is $L_f = 1$, since $f$ is a norm of the relative position minus a constant, and the norm is 1-Lipschitz.

The search for finding all necessary conflict information consists of two levels:
the \emph{broadphase} and the \emph{narrowphase}.
The broadphase quickly filters out graph element pairs that cannot be involved in a conflict; 
two edges on separate sides of the environment do not require expensive conflict detection if they are too far away for agents on each edge to possibly conflict with each other.
Each vertex and edge gets a swept bounding box, inflated by the interaction radius;
non-overlapping pairs are discarded without evaluating $f$ since the agents never get within the interaction radius.
For edges, this box is computed exactly via root-finding on the spline's derivative polynomials. 
Non-waitable vertices (those with non-zero velocity components) are excluded from Vertex-Vertex and as Vertex-Edge sources, since they cannot produce a waiting conflict.

The narrowphase evaluates every graph element pair that passed the broadphase.
Vertex-Vertex needs one evaluation on $f$, just as in \nameref{sec:Implementation:models:1}.
For Vertex-Edge,
the waiting agent $i$ is kept stationary at $\vertex^i_\state$ and we use the Lipschitz-bounded conflict search to find all
$T\in[0, \edge^j_\dur): \varphi(T) \leq 0$ where $\varphi(T) = f(i, \vertex^i_\state, j, \edge^j_\traj(T))$
For Edge-Edge, a nested search is required instead. 
We fix $j$ to begin traversing $\edge^j$ at time $t=0$, 
and seek all start times $\delta$ for $i$ to begin traversing $\edge^i$ such that the agents conflict while traversing their respective edges.
The outer search therefore is done over $i$'s start times $\delta\in[-\edge^i_\dur, \edge^j_\dur)$ using the predicate
\begin{equation*}
    h(\delta) = \min_{t} g(t,\delta),
\end{equation*}
with the inner search over $t\in[\max(0, \delta), \min(\edge^i_\dur, \delta + \edge^j_\dur)]$ using the predicate
\begin{equation*}
    g(t,\delta) = f(i, \edge^i_\traj(t), j, \edge^j_\traj(t-\delta)).
\end{equation*}

Each evaluation of $h(\delta)$ requires an inner minimization over $t$, since two agents traversing with offset $\delta$ are conflict-free only if $g(t,\delta)>0$ for every $t$ in the valid window, 
which is exactly the condition $\min_{t} g(t,\delta)$.
However, the inner search does not need the true minimum, only its sign:
as soon as any evaluated $t$ yield $g(t,\delta)\leq0$, a conflict at offset $\delta$ is already certified, and the search terminates immediately.
Only when no such $t$ is found does the search need to continue until every remaining cell has been certified $g>0$, at which point $\delta$ is certified conflict-free. 
This asymmetry means a conflicting $\delta$ is typically resolved after a few evaluations, while a conflict-free $\delta$ requires exhausting the entire inner domain. 

Both searches rely only on a Lipschitz bound on their respective objective, obtained from, obtained from $L_f$ by the chain rule.
For the inner search, both agents move, giving $|\partial g/\partial t| \leq L_f \cdot (|\velocity^i|_\mathrm{max} + |\velocity^j|_\mathrm{max})$; 
for the outer search, only $\edge^j$'s traversal shifts with $\delta$, 
giving $|\partial h/\partial \delta| \leq L_f \cdot |\velocity^j|_\mathrm{max}$. 
The smaller outer constant lets each evaluation of $h$ certify a wider band of $\delta$ at once than a comparable evaluation of $g$ could certify for $t$,
and since each evaluation of $h$ triggers an entire inner search, fewer outer evaluations directly reduce the total number of inner minimisations needed.

Given a Lipschitz constant $L$ for either search, a cell $[a,b]$ with midpoint $m$ and width $w$ can be certified through only an evaluation at $m$:
if $\varphi(n) - Lw/2 >0$ the cell is conflict-free and discarded.
If instead $\varphi(m) + Lw/2 \leq 0$ the cell is entirely conflicting and kept whole, otherwise it is bisected further.
The Vertex-Edge search, and the outer Edge-Edge search over $\delta$,
apply this test directly to find the sublevel set $\{\delta: h(\delta) \leq 0\}$.
The inner Edge-Edge minimisation over $t$ uses the same bound within a branch-and-bound search that always expands the cell with the lowest lower bound $\varphi(m) - Lw/2$.

The resulting conflict information is a one-sided over-approximation: 
a cell is discarded only when the Lipschitz certificate proves $f>0$ throughout it,
so a conflicting offset $\delta$ is never certified conflict-free, though a marginally conflict-free offset may conservatively be treated as conflicting.

\subsubsection{Model 3}
\label{sec:Implementation:models:3}

This model extends \nameref{sec:Implementation:models:2} by replacing the circular agent shape with a 2D polygon, which may be non-convex.
Everything else about the framework is unchanged: the same broadphase/narrowphase search and the same Lipschitz-bounded certified sampling produce the conflict information.
What changes is the state space, the set of edge trajectories, and the conflict predicate.

A polygon is not rotation-invariant, so orientation becomes part of the state: a state is a triple $\tuple{\position, \orientation, \velocity}$ consisting of a position $\position\in\Real^2$, an orientation $\orientation\in\Real$, and a velocity $\velocity\in\Real^2$.
An agent's shape is given as an ordered list of polygon vertices in the agent's body frame, relative to its reference point; placing the agent in a state rotates those vertices by $\orientation$ about the reference point and translates them by $\position$.
As in \nameref{sec:Implementation:models:2}, each vertex of the graph corresponds to a specific state, and is waitable if its velocity is zero.

Adding orientation admits edge trajectories that \nameref{sec:Implementation:models:2} cannot express, since a trajectory must now prescribe a heading in addition to a position at every $t$.
Nothing in the method depends on the specific choice of trajectory, only on the properties it exposes: a trajectory need only be continuous, connect its endpoint states, and report its duration, a bounding box, and a bound on how fast the agent's body can move.
This abstraction accommodates trajectory types ranging from a simple turn in place or straight-line traversal at a constant held heading, to curvature-constrained curves (e.g., Dubins) or a position spline paired with an independently prescribed orientation schedule.

Conflict is again geometric collision, but between two placed polygons rather than two circles.
We use the signed separation
\begin{equation*}
    f(i, \state_i, j, \state_j) =
    \begin{cases}
        -d & \text{if the placed polygons overlap,} \\
        +d & \text{otherwise,}
    \end{cases}
\end{equation*}
where $d$ is the minimum Euclidean distance between the two placed polygons' boundaries, taken as the minimum over all pairs of boundary segments, one from each agent.
Overlap is detected by two tests: whether any boundary segment of one agent transversally crosses a boundary segment of the other, and whether either polygon contains a vertex of the other.
Containment uses even-odd ray casting, which is correct for any simple polygon, so no convexity is assumed anywhere and no convex decomposition is performed.
Disjoint agent shapes therefore give $f=d>0$, one shape fully inside the other gives $f=-d<0$, and a partial overlap, where the boundaries touch, gives $d=0$ and hence $f=0$.
The set $\{f\leq0\}$ is thus exactly the overlap set, which is what soundness requires.
Note that $f$ is a boundary distance and not a penetration depth, so it does not necessarily continue to decrease as a partial overlap deepens; the only consequence is that the certified search bisects somewhat more near deep overlaps than a true penetration depth would require.

The interaction radius is the sum of the two agents' circumradii, where an agent's circumradius is the largest distance from its reference point to any of its polygon vertices.
Beyond that separation the agent shapes cannot touch, whatever their orientations.
The Lipschitz constant is again $L_f=1$, since a boundary distance changes by at most the distance any body point moves.

This last point is where \nameref{sec:Implementation:models:3} differs from \nameref{sec:Implementation:models:2} in a way that matters for the certified search.
In \nameref{sec:Implementation:models:2} the chain rule converts $L_f$ into a bound in time using each agent's maximum translational speed, because a circular agent shape is completely described by its centre.
A rotating polygon, by contrast, sweeps area even while its reference point is stationary, so the relevant quantity is the speed of the fastest point of the body: the translational speed plus the angular speed multiplied by the circumradius.
Each trajectory reports this quantity.
A turn in place has zero translational speed and reports (angular speed $\times$ circumradius); a Dubins curve travelling at speed $v$ with turning radius $R$ reports $v(1 + r_\mathrm{body}/R)$; a spline with an orientation schedule reports the spline's maximum speed plus the maximum of the schedule's angular rate times the circumradius.
With this bound in place, the Vertex-Edge and Edge-Edge searches of \nameref{sec:Implementation:models:2}, including the nested inner minimisation over $t$ for Edge-Edge, are used verbatim, and inherit the same guarantee: the resulting conflict information is a one-sided over-approximation, never certifying a conflicting offset as conflict-free.

The broadphase likewise carries over: every vertex and edge gets a bounding box inflated by the interaction radius, and non-overlapping pairs are discarded.
A turn-in-place edge has a point-sized box, since its reference point does not move, and the inflation by the two circumradii covers the whole area the rotating body sweeps.
A Dubins edge's box is obtained from its straight segments' endpoints together with the bounding boxes of its arcs' supporting circles.

Polygon separation is by far the most expensive of the predicates we implement: a single evaluation performs a segment-to-segment distance computation for every pair of boundary segments of an agent's shape, one from each agent, and is therefore quadratic in the agent shapes' vertex counts, where the circle predicate is a single norm.
Since the certified search calls the predicate many times per graph element pair, this makes \nameref{sec:Implementation:models:3} the most expensive model to precompute, and is the reason \nameref{sec:Implementation:models:1} and \nameref{sec:Implementation:models:2} retain the circle predicate rather than treating a circle as a many-sided polygon.

Finally, a polygonal agent shape does not require a graph with orientation in its states.
A polygon held at a fixed orientation can be used directly on a \nameref{sec:Implementation:models:1} graph, and polygon-with-circle agent pairs are supported, so a polygonal ground robot can share an environment with a circular agent.
Whichever route is taken, the resulting conflict information is written in the same format, and \ours cannot distinguish the two.

\section{Roadmap Generation}
\label{sec:BenchmarkGeneration}

The MovingAI gridmaps~\cite{SturtevantBenchmarks} are the standard benchmark for discrete-time MAPF. 
These gridmaps are represented by a connected, undirected graphs with unit-length edges. 
Roadmap instances for continuous-time MAPF exist in prior work, e.g., the \MovingAIMap{Den520d} roadmaps used by CCBS~\cite{CCBS} and OC-CBS~\cite{OCCBS}, and PRM-sampled instanced from~\cite{PSIPP},
but neither offers a systematic scheme for deriving roadmaps from the standard MovingAI benchmark suite with guaranteed connectivity, obstacle clearance, and bounded stretch. 
We present such a scheme, apply it across the full suite, and release the resulting roadmaps together with matching scenario files. 

% \begin{figure}[htbp]
%     \begin{subfigure}[t]{0.24\textwidth}
%         \centering
%         \includegraphics[width=\linewidth]{figures/BenchmarkMaps/empty-16-16.pdf}
%         \caption{Empty-16-16}
%         \label{fig:MovingAI_maps:empty-16-16}
%     \end{subfigure}
%     \hfill
%     \begin{subfigure}[t]{0.24\textwidth}
%         \centering
%         \includegraphics[width=\linewidth]{figures/BenchmarkMaps/maze-32-32-2.pdf}
%         \caption{Maze-32-32-2}
%         \label{fig:MovingAI_maps:maze-32-32-2}
%     \end{subfigure}
%     \hfill
%     \begin{subfigure}[t]{0.24\textwidth}
%         \centering
%         \includegraphics[width=\linewidth]{figures/BenchmarkMaps/room-32-32-4.pdf}
%         \caption{Room-32-32-4}
%         \label{fig:MovingAI_maps:room-32-32-4}
%     \end{subfigure}
%     \hfill
%     \begin{subfigure}[t]{0.24\textwidth}
%         \centering
%         \includegraphics[width=\linewidth]{figures/BenchmarkMaps/den312d.pdf}
%         \caption{Den312d}
%         \label{fig:MovingAI_maps:den312d}
%     \end{subfigure}
%     \caption{MovingAI maps}
%     \label{fig:MovingAI_maps}
% \end{figure}

\subsection{Sampling Roadmaps From Gridmaps}

Given a MovingAI map, its free cells define a free space $\freeCells \subset \Real^2$, with every unit cell centred on an integer coordinate and everything outside the map counted as blocked.
To turn this into a roadmap $\Graph = \tuple{\Vertices, \Edges}$, we require four parameters: a target vertex density $\density>0$, a clearance radius $\clearance>0$, a stretch factor $\stretchfactor \geq 1$, and a maximum edge length $\maxlength>0$.
Figure~\ref{fig:Roadmap_construction} provides an illustration of the sampling steps. 
Informally, 
the free cells are divided into finer grid cells; 
these fine cells within $\clearance$ from blocked cells are removed.
Vertices are scattered evenly across the free space at roughly the requested density $\density$.
Each pair of vertices whose free-space Voronoi regions are adjacent (Figure~\ref{fig:Roadmap_construction:regions}) is joined along a route through the boundary between them, reduced to as few straight segments as possible (Figure~\ref{fig:Roadmap_construction:result});
and shortcut edges shorter than $\maxlength$ are added wherever this graph would otherwise force a detour between two vertices that is more than $\stretchfactor$ times the straight-line distance.
The resulting roadmaps using three different densities on the \MovingAIMap{Den312d} gridmap are shown in Figure~\ref{fig:Roadmap_construction:densities}.
We detail each of the steps in turn.

\begin{figure}[htbp]
    \begin{subfigure}[t]{0.47\textwidth}
        \centering
        \includegraphics[width=\linewidth]{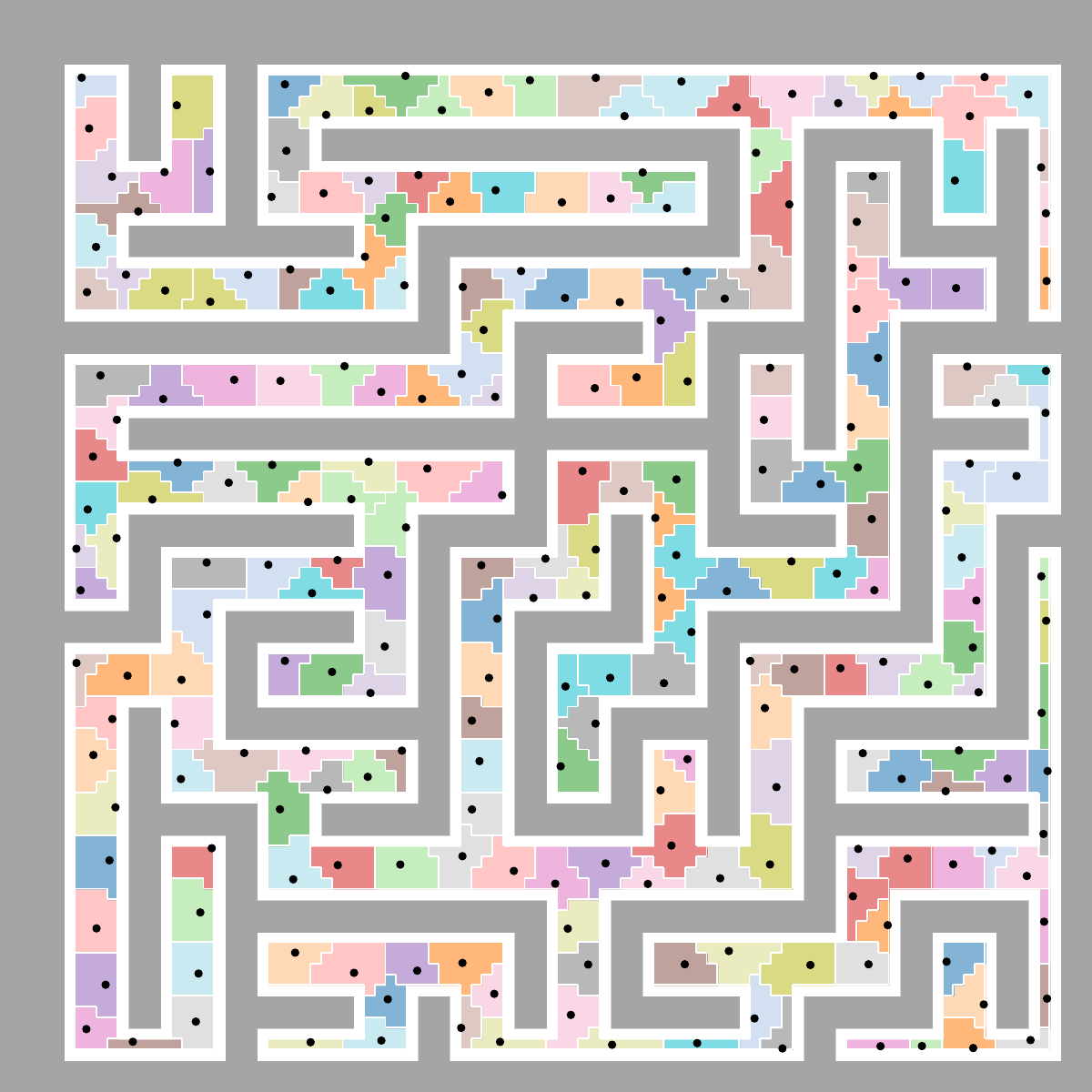}
        \caption{The free space is divided into finer cells (individual coloured squares); fine cells are removed if too close to blocked cells; vertices (dots) are scattered to roughly match a given density; and each vertex claims the fine cells around it (coloured regions).}
        \label{fig:Roadmap_construction:regions}
    \end{subfigure}
    \hfill
    \begin{subfigure}[t]{0.47\textwidth}
        \centering
        \includegraphics[width=\linewidth]{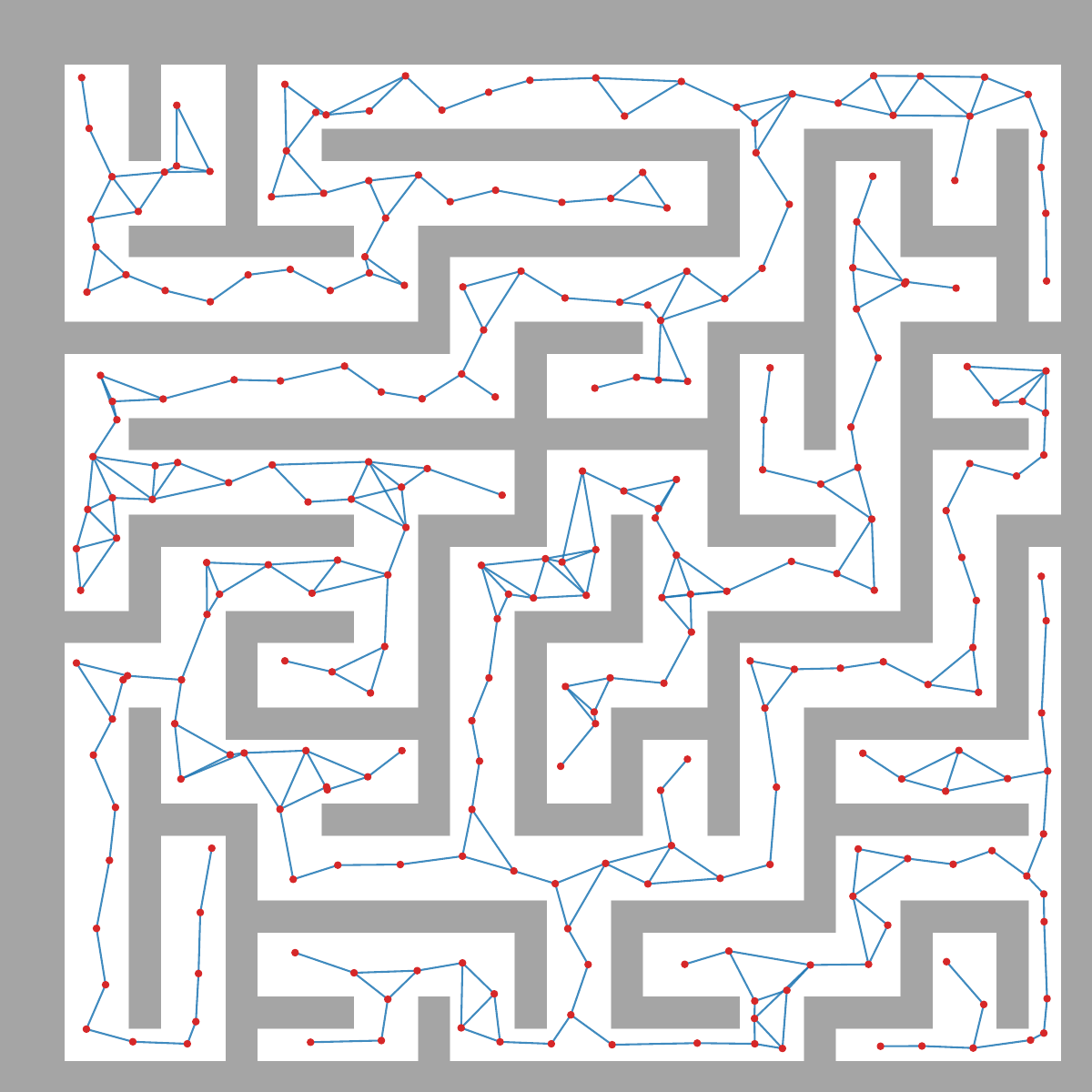}
        \caption{Any two vertices that share a border are connected by as few straight edges as possible while ensuring no edge gets too close to a blocked cell.}
        \label{fig:Roadmap_construction:result}
    \end{subfigure}
    \caption{A roadmap sampled from the MovingAI gridmap \MovingAIMap{Maze-32-32-2}.}
    \label{fig:Roadmap_construction}
\end{figure}
\begin{figure}[htbp]
    \begin{subfigure}[t]{0.31\textwidth}
        \centering
        \includegraphics[width=\linewidth]{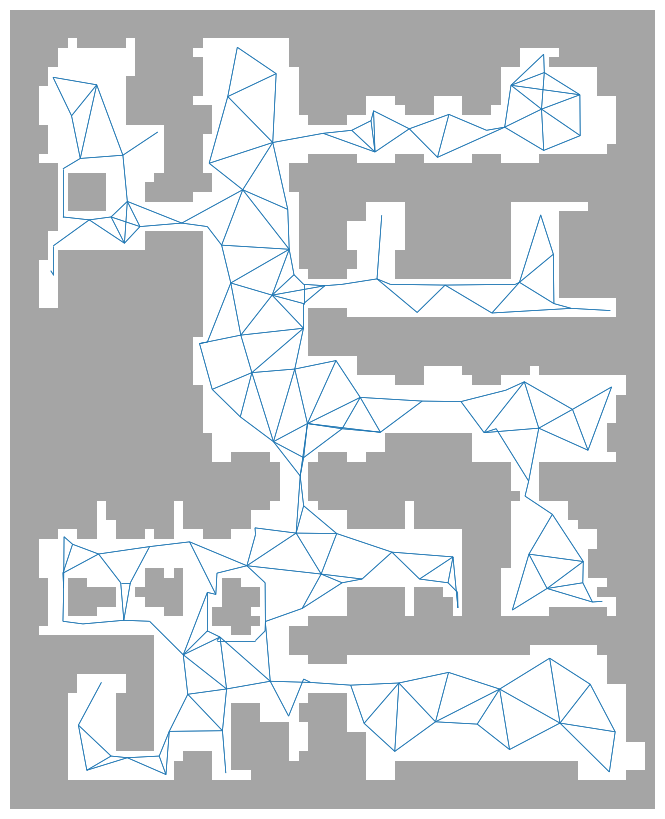}
        \caption{Density $\density=0.05$, with $|\Vertices|=223$ and $|\Edges|=824$.}
        \label{fig:Roadmap_construction:densities:1}
    \end{subfigure}
    \hfill
    \begin{subfigure}[t]{0.31\textwidth}
        \centering
        \includegraphics[width=\linewidth]{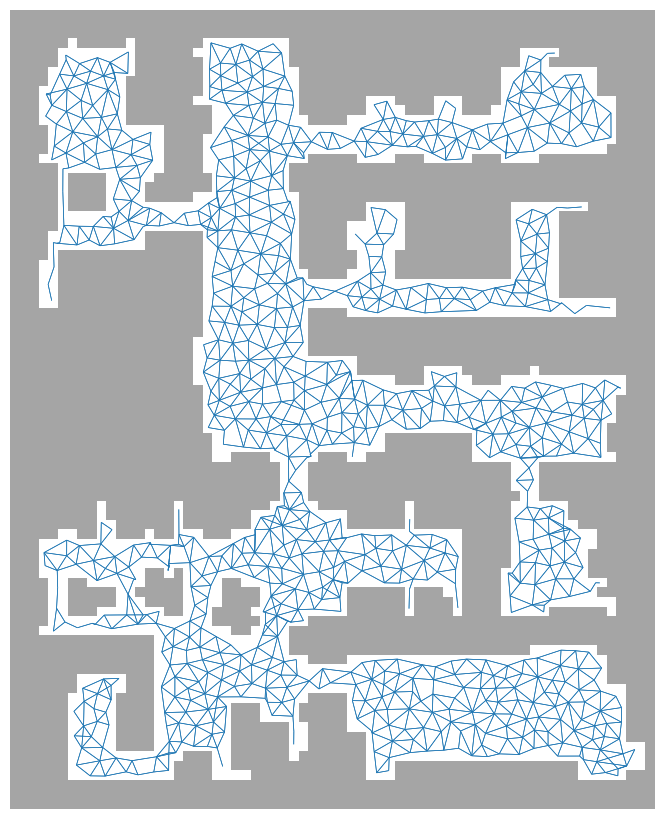}
        \caption{Density $\density=0.5$, with $|\Vertices|=1\;607$ and $|\Edges|=7\;894$.}
        \label{fig:Roadmap_construction:densities:2}
    \end{subfigure}
    \hfill
    \begin{subfigure}[t]{0.31\textwidth}
        \centering
        \includegraphics[width=\linewidth]{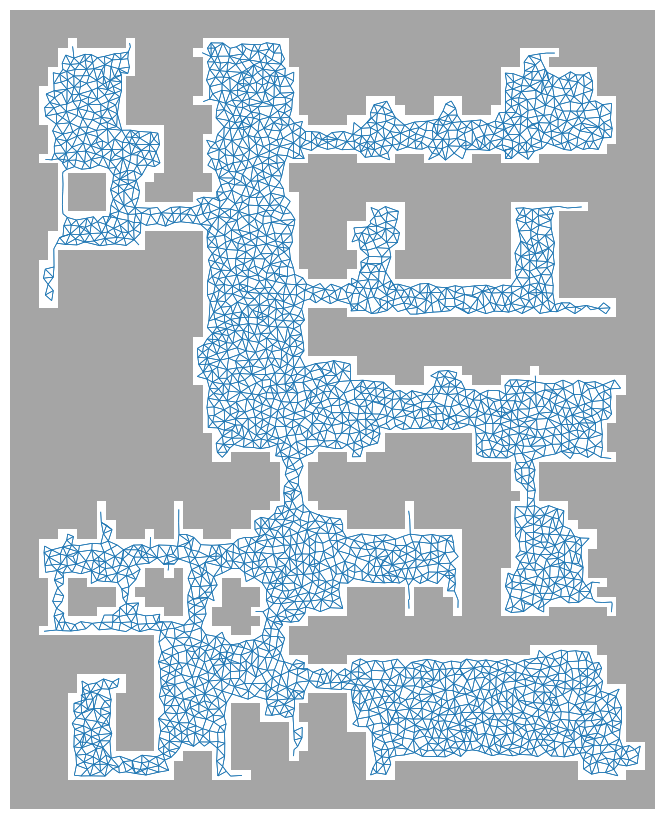}
        \caption{Density $\density=2.0$, with $|\Vertices|=4\;582$ and $|\Edges|=24\;442$.}
        \label{fig:Roadmap_construction:densities:3}
    \end{subfigure}
    \caption{The MovingAI gridmap \MovingAIMap{Den312d}, with $|\Vertices|=2\;445$ and $|\Edges|=8\;782$, sampled with various densities. The remaining parameters are $\clearance=0.5$, $\stretchfactor=1.5$, and $\maxlength=4\separation$.}
    \label{fig:Roadmap_construction:densities}
\end{figure}
From the target density $\density$ we derive a minimum separation $\separation = \sqrt{0.8/\density}$ between sampled vertices.
Placing vertices on a regular grid with spacing $\sqrt{1/\density}$ would achieve exactly the target density $\density$;
however, rejection sampling under a minimum-separation constraint yields a lower density than a regular grid at the same separation, so we shrink the separation below $\sqrt{1/\density}$ to partially compensate, using the empirically tuned factor $0.8$.
The region construction and the routes between vertices (below) are computed on a refined grid obtained by subdividing each map cell;
every clearance test itself remains an exact geometric test of a point or segment against the original map cells.
The subdivision factor is found by rounding $\sqrt{2}/\separation$ up to the nearest odd integer value, so that the diagonal of a refined cell is shorter than $\separation$ and no two sampled vertices can therefore land in the same refined cell, letting each vertex seed its own region.
The refinement is additionally chosen odd so that every map cell's centre remains at the centre of a refined cell;
with an even factor, refinement can shift the fine grid off the centreline of a corridor which may hold the only traversable line through it.
Refining relative to the density $\density$ also removes any upper bound on how dense a roadmap can be requested, independent of the original map resolution.

A fine cell is admissible if its centre lies at least $\clearance$ from every blocked cell.
We compute the set of admissible fine cells once and restrict it to its largest 4-connected component, since this is the largest region a disc of radius $\clearance$ can move through without leaving $\freeCells$;
sampling outside this component would place vertices an agent of radius $\clearance$ could never reach from the rest of the map.

Vertices are placed by rejection sampling.
A candidate refined cell is drawn at random from the admissible region, and a point is drawn uniformly at random within it.
The point is accepted only if it lies at least $\separation$ from every previously accepted vertex;
since only the cell's centre was certified admissible, the point must additionally, and independently, lie at least $\clearance$ from every blocked cell.
If both conditions hold, the point is added to $\Vertices$.
Sampling stops once $500$ consecutive candidates in a row have been rejected, rather than continuing until the admissible region is fully saturated;
the achieved density therefore falls somewhat short of $\density$ even after the $0.8$ correction in the separation, which only partially compensates for this.
The result is an even, non-clustered covering of $\freeCells$.

To connect the vertices, a breadth-first search over 4-adjacent admissible cells is run simultaneously from every vertex's cell, with each step claiming an unclaimed admissible cell;
every admissible cell thereby ends up owned by the vertex reachable in fewest steps, a discretized Voronoi diagram of $\Vertices$ over the admissible region under this graph distance.
Two vertices are neighbours when their regions contain adjacent cells.
For each neighbouring pair $\tuple{u, v}$, we take the shortest route, by number of cells, among those passing through a boundary cell between their two regions, obtained by following the search trees back from that boundary cell to $u$ and to $v$;
this route is generally longer than the true shortest clearance-respecting path between $u$ and $v$.
The route, a dense sequence of fine-grid positions, is then reduced to a small number of straight segments by repeatedly jumping to the furthest later position still reachable by a segment that keeps clearance $\clearance$ throughout;
this greedy reduction is not guaranteed to be minimal, since visibility along the route need not be monotone, for instance when rounding a corner into open space lets a later position be clear while an intermediate one is not.
Where $u$ and $v$ have such a clear segment directly between them, the reduction collapses to a single edge $\edge = \tuple{u, v} \in \Edges$;
otherwise, the retained turning positions, each lying at the centre of a fine cell, become additional vertices of $\Vertices$, and the connection becomes a short chain of edges bending around the obstacle, with a turning position shared by several routes becoming a single shared vertex.
The reduction is guaranteed to succeed for every interior segment of the route, since consecutive fine-grid positions along it are adjacent and the segment between two admissible adjacent positions is itself clear;
this argument does not cover the first and last segments, which run from a sampled vertex, at an arbitrary position within its cell, to that cell's centre.
If the reduction fails on one of these two endpoint segments, the corresponding connection is dropped, and the roadmap's connectivity is instead checked after construction.

The graph produced so far is purely local and can force long detours between vertices that are close in a straight line but far apart along $\Graph$.
We consider candidate pairs $u, v \in \Vertices$ whose straight-line distance is at most $\maxlength$, processed in increasing order of that distance.
For each candidate, if its distance in the current graph, that is $\Graph$ together with all shortcuts accepted so far in the sweep, exceeds $\stretchfactor$ times the straight-line distance between $u$ and $v$, and the straight segment between them is clear, we add a shortcut edge $\edge = \tuple{u, v}$.
Processing pairs in increasing order of straight-line distance means an accepted shortcut removes the need for longer shortcuts covering the same detour, while restricting candidates to those within $\maxlength$ from the outset, rather than testing and discarding longer pairs, ensures every shortcut remains local regardless of how large a detour it removes.

Every edge $\edge \in \Edges$ is traversable in both directions.
Clearance is guaranteed by construction: every vertex and edge keeps clearance $\clearance$ from every blocked cell, independent of the parameters.
Connectivity of $\Graph$ is not guaranteed in the same sense;
it is instead verified after construction, and a parameter setting that fails to yield a connected roadmap, for instance too low a density or too large a clearance relative to the map, is rejected and reported rather than returned.
The stretch factor $\stretchfactor$ and the target density $\density$ jointly control the size and quality trade-off: values of $\stretchfactor$ near $1$ and higher $\density$ yield roadmaps whose distances closely approximate true free-space distances at the cost of more vertices and edges, while looser values yield sparser graphs on which paths detour more.

% Density $0.1$, clearance $0.5$, stretch $1.5$, speed $1$
% \begin{figure}[htbp]
%     \begin{subfigure}[t]{0.24\textwidth}
%         \centering
%         \includegraphics[width=\linewidth]{figures/BenchmarkMaps/empty-16-16_roadmap.pdf}
%         \caption{Empty-16-16\\24 vertices, 104 edges}
%         \label{fig:MovingAI_roadmaps:empty-16-16_roadmap}
%     \end{subfigure}
%     \hfill
%     \begin{subfigure}[t]{0.24\textwidth}
%         \centering
%         \includegraphics[width=\linewidth]{figures/BenchmarkMaps/maze-32-32-2_roadmap.pdf}
%         \caption{Maze-32-32-2\\133 vertices, 236 edges}
%         \label{fig:MovingAI_roadmaps:maze-128-128-2_roadmap}
%     \end{subfigure}
%     \hfill
%     \begin{subfigure}[t]{0.24\textwidth}
%         \centering
%         \includegraphics[width=\linewidth]{figures/BenchmarkMaps/room-32-32-4_roadmap.pdf}
%         \caption{Room-32-32-4\\223 vertices, 602 edges}
%         \label{fig:MovingAI_roadmaps:room-64-64-8_roadmap}
%     \end{subfigure}
%     \hfill
%     \begin{subfigure}[t]{0.24\textwidth}
%         \centering
%         \includegraphics[width=\linewidth]{figures/BenchmarkMaps/den312d_roadmap.pdf}
%         \caption{Den312d\\223 vertices, 824 edges}
%         \label{fig:MovingAI_roadmaps:den312d_roadmap}
%     \end{subfigure}
    
%     \caption{Roadmaps sampled from MovingAI maps}
%     \label{fig:MovingAI_roadmaps}
% \end{figure}

\subsection{Scenario Generation}

Given a graph $\Graph = \tuple{\Vertices, \Edges}$ and an agent model, we construct a scenario as follows.
Because roadmap vertices are samples in continuous space, distinct vertices do not guarantee non-conflicting agents: two agents at different vertices may still occupy overlapping space, and two agents may be placed so that neither can move without conflicting with the other.
Scenario generation must exclude both cases, and other similarly unsolvable configurations.

Two vertex sets are derived from $\Graph$ and the agent model prior to sampling, and reused across all scenarios on that graph.
A vertex qualifies as a start if it is waitable (an agent may remain there indefinitely) and lies in the graph's largest strongly connected component (SCC).
We require the SCC condition rather than only a path from each start to its own goal, because an agent must reach its goal not only from its start but from any vertex it may be displaced to while yielding to another agent;
the SCC condition guarantees this, a per-pair path check does not.
Although this excludes some vertices that admit a feasible start-goal path on their own, such a path offers no guarantee once an agent must yield to another: none of the checks introduced below examine intermediate positions, so an unrecoverable vertex would go undetected until execution.
 
A vertex qualifies as a goal only if removing it, together with everything its permanent occupation would put out of reach, leaves the remaining admissible vertices mutually reachable.
This follows from treating a reached goal as held for the remainder of the scenario: an agent that lets others pass before occupying a cut vertex still severs the graph, only later.
The condition is stricter than necessary, since it excludes a vertex regardless of whether any agent in a given instance would cross it, but it is the only safeguard against this failure mode.
 
For each admissible vertex, we also record the vertices and edges unavailable to a second agent while it is occupied, determined again by spatial overlap.

Goals are drawn first, by shuffling the admissible goal vertices and accepting each in turn unless it conflicts with an accepted goal, until the requested count is reached or the pool is exhausted (failure).
Starts are drawn second, one per goal in draw order: the admissible start vertices (the full set, not only those also valid as goals) are shuffled once and scanned per agent, accepting the first vertex that does not conflict with an accepted start, and is not that agent's goal.
A vertex rejected for one agent may be accepted for another, so start-goal pairing need not follow the shuffle order;
exhausting the pool for any agent fails the attempt.

A scenario generation attempt failing any of the following conditions is discarded and redrawn, up to a redraw budget.
Starts must be pairwise non-conflicting, as must goals.
However, a start and goal of different agents may coincide, since the former agent departs before the latter arrives.
If there exists some group of agents where every agent's outgoing edges are blocked by other agents in the group, then that scenario is discarded.
The mirror condition applies to incoming edges: a scenario is also discarded if some group of agents cannot make their final approach, with every incoming edge to each member's goal blocked by another member of the group already there.

All scenarios within a set (on the same graph) are kept at a common agent count.
Scenario seeds within a set are drawn independently and distinctly, so scenarios within and across sets do not repeat.

\printbibliography
% \balance

\end{document}